\newtheorem{thm}{Theorem}[section]
\newtheorem{proposition}[thm]{Proposition}
\newtheorem{remark}[thm]{Remark}
\newtheorem{lemma}[thm]{Lemma}
\newtheorem{corollary}[thm]{Corollary}
\newcommand{\scalar}[2]{\langle{#1} \mspace{2mu}, {#2}\rangle}
\newcommand{\norm}[1]{\lVert #1 \rVert}
\newcommand{\nc}{\newcommand}
\newcommand{\R}{\mathbb{R}}
\newcommand{\C}{\mathbb{C}}
\newcommand{\RR}{\mathbb{R}}
\newcommand{\ZZ}{\mathbb{Z}}
\newcommand{\CC}{\mathbb{C}}
\def\11{\mathbbmss{1}}
\renewcommand{\i}{\mathrm{i}}
\newcommand{\e}{\mathrm{e}}
\renewcommand{\d}{\mathrm{d}}
\renewcommand{\Re}{\operatorname{Re}\,}
\renewcommand{\Im}{\operatorname{Im}\,}
\newcommand{\real}{\Re}
\newcommand{\imag}{\Im}
\nc{\ran}{\rangle}
\nc{\lan}{\langle}
\newcommand{\ls}{\lesssim}
\newcommand{\mc}[1]{\mathcal{#1}}                                             
\newcommand{\Sph}{\mathbb{S}}                                                 
\newcommand{\lapl}{\Delta}                                                    
\newcommand{\DETAILS}[1]{}
\begin{document}
\title[Hamiltonian Dynamics]{Hamiltonian dynamics of a particle interacting with a wave field}
\author[D. Egli, J. Fr\"ohlich, Z. Gang, A. Shao, I. M. Sigal]{Daniel Egli$^\dagger$, J\"urg Fr\"ohlich$^{\ddagger}$, Zhou Gang, Arick Shao$^\dagger$,\\ Israel Michael Sigal$^\dagger$}
\thanks{$\dagger$ The author's research was supported in part by NSERC under Grant No. NA7901}
\thanks{$\ddagger$ Present address: School of Mathematics, The Institute for Advanced Study, Princeton, NJ 08540, USA.  The author's visit at IAS is supported by `The Fund for Math' and `The Monell Foundation'}

\address{University of Toronto, Toronto, Canada}
\email{daniegli@math.utoronto.ca, ashao@math.utoronto.ca,\newline\indent\indent\indent\indent\indent\indent im.sigal@utoronto.ca}

\address{Institute for Theoretical Physics, ETH Zurich, CH-8093, Z\"urich, Switzerland}
\email{juerg@itp.phys.ethz.ch}

\address{University of Illinois at Urbana-Champaign, Urbana, USA}
\email{gangzhou@illinois.edu}

\date{}

\begin{abstract}
We study the Hamiltonian equations of motion of a heavy tracer particle interacting with a dense weakly interacting Bose-Einstein condensate in the classical (mean-field) limit.
Solutions describing ballistic subsonic motion of the particle through the condensate are constructed.
We establish asymptotic stability of ballistic subsonic motion.  
\end{abstract}

\maketitle

\section{Introduction} \label{sec.intro}

In this paper we study a physical system consisting of a very heavy quantum particle moving through a very dense, very weakly interacting Bose gas. The gas is at zero temperature, and it exhibits Bose-Einstein condensation. 
It is known that the speed of sound in an interacting Bose gas is strictly positive. (The dispersion law of sound waves, the Goldstone modes of the Bose-Einstein condensate, is linear in the magnitude of their wave vector at the origin in wave-vector space; see, e.g., \cite{bogol}.)
In a limiting regime where the density of the Bose gas is very large, but the interactions between atoms in the gas are very weak, the dynamics of the Bose gas is harmonic (the Hamiltonian is quadratic in the creation and annihilation operators of the Goldstone modes).

Quantum systems very similar to this one have been studied in \cite{JF,pizzo}.
Under appropriate assumptions on the interactions between the particle and the atoms in the Bose gas, one-particle states can be constructed that describe inertial motion of the dressed quantum particle at a speed below the speed of sound of the condensate.
Although there are some results on scattering in such systems, asymptotic stability or -completeness of this type of motion has not been established yet.
In this paper we study related problems in the classical or \emph{``mean-field''} limit.

One may argue quite convincingly (see, e.g., \cite{froehlich11}) that, in the mean-field limit where the density of the  Bose gas and the mass of the tracer particle become very large, the quantum dynamics of the system approaches the classical Hamiltonian dynamics of a point particle interacting with a wave medium (in the sense that time evolution and quantization commute, up to error terms that vanish in the mean-field limit).
The corresponding Hamiltonian equations of motion have been exhibited in \cite{froehlich11}. 
The purpose of the present paper is to construct a class of solutions of these equations describing \emph{inertial motion} of the particle and to establish \emph{asymptotic stability} of subsonic inertial motion for a suitable class of initial conditions.

In the following, $X_t \in\RR^{3}$ denotes the position of the tracer particle and $P_t \in \mathbb{R}^{3}$ denotes its momentum at time $t$.
The Bose-Einstein condensate is described, in the mean-field limit, by a complex wave field, $\beta_t$, belonging to a suitably chosen Sobolev space of functions on $\RR^{3}$. This wave field describes low-energy excitations of the ground state of the condensate.
In a certain limiting regime described in \cite{froehlich11}, and for a certain choice of boundary conditions of $\beta_t$ at $\infty$, the Hamiltonian equations of motion of the system are given by
\begin{align} 
\label{eq.emodel_full} M \dot{X}_t &= P_t \text{,} \\
\notag \dot{P}_t &= - \nabla V (X_t) +  \int_{\R^3} \nabla W( \cdot - X_t) \Re \beta_t \text{,} \\
\notag \i \dot{\beta}_t &= - \Delta \beta_t +  \mu\Re \beta_t + W( \cdot - X_t) \text{.}
\end{align}
Here, $V: \R^3 \rightarrow \R$ is the potential of an external force acting on the tracer particle, and $W: \R^3 \rightarrow \R$ is a two-body potential describing interactions between the tracer particle and the atoms in the Bose gas.
Furthermore, $M > 0$ denotes the mass of the tracer particle, and $\mu$ is a parameter related to the product of the density of the condensate with the strength of interactions among atoms in the Bose gas, which determines its speed of sound, $\sqrt{\mu}$.

One expects the particle motion to exhibit qualitatively different properties according to whether the initial speed of the particle is \emph{subsonic} or \emph{supersonic}.
For a supersonic initial velocity, one expects the particle to decelerate through emission of Cerenkov radiation of sound waves into the Bose-Einstein condensate---its speed asymptotically approaching the speed of sound in the condensate.
For the time being, this expectation remains a conjecture that will be analyzed elsewhere.
In this paper, we construct solutions of the equations of motion describing inertial motion of the tracer particle accompanied by a splash in the condensate moving at the same velocity.
The main result of this paper states that this type of motion is asymptotically stable.
More precisely, we show that the system \eqref{eq.emodel_full} of equations has ``traveling wave solutions",
\begin{align}\label{eq:inertial}
X_t = vt \text{,} \qquad \quad P_t = M v \text{,} \qquad \beta_t (x) = \gamma_v (x - vt) \text{,}
\end{align}
describing inertial motion if the speed of the particle, $|v|$, is smaller than or equal to the speed of sound in the condensate, and we show that these solutions are stable under small perturbations.

Simplified statements of our main results are summarized in the following theorems; see Section \ref{sec.trwave} for the full precise statements.
The outlines for their proofs can be found in Sections \ref{sec.inertpf_intro} and \ref{sec.stabpf_intro}.

\begin{thm}[Inertial motion] \label{thm.traveling_wave_reg0}
Consider the system \eqref{eq.emodel_full}, with $W$ smooth and rapidly decreasing, and with $V \equiv 0$.
Fix also a velocity vector $v \in \R^3$.
\begin{itemize}
\item If $| v | < \sqrt{\mu}$, then \eqref{eq.emodel_full} has a unique smooth solution of the form \eqref{eq:inertial}, with
\[ | \Re \gamma_v (x) | \lesssim \lan x \ran^{(-3)+} \text{,} \qquad | \Im \gamma_v (x) | \ls \lan x \ran^{(-2)+} \text{.} \]

\item If $| v | = \sqrt{\mu}$, then \eqref{eq.emodel_full} has a unique solution of the form \eqref{eq:inertial}, with
\[ | \Re \gamma_v (x) | \ls \lan x \ran^{(-1)+} \text{,} \qquad | \Im \gamma_v  (x)| \ls \lan x \ran^{(-\frac{2}{3})+} \text{.} \]

\item If $| v | > \sqrt{\mu}$, then \eqref{eq.emodel_full} has no finite-energy traveling wave solutions, except for the non-generic case that the Fourier transform of $W$ vanishes on
\[ \{ \xi \in \RR^3 \mid |\xi|^2 + \mu - ( v \cdot \hat{\xi} )^2 = 0 \} \text{.} \]
\end{itemize}
\end{thm}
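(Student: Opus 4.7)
The plan is to reduce the traveling-wave problem to a linear stationary PDE for the profile $\gamma_v$, solve it in Fourier space, and read off the three cases from the zero set of the symbol. Substituting the ansatz $X_t = vt$, $P_t = Mv$, $\beta_t(x) = \gamma_v(x-vt)$ into \eqref{eq.emodel_full} with $V\equiv 0$, the particle equations collapse to the compatibility condition $\int_{\R^3}\nabla W\,\Re\gamma_v\,\d x = 0$ (to be verified a posteriori), while the wave equation becomes
\[
-\Delta\gamma_v + \i\, v\cdot\nabla\gamma_v + \mu\,\Re\gamma_v = -W .
\]
Writing $\gamma_v = \alpha + \i\beta$ with $\alpha,\beta$ real and taking Fourier transforms, the imaginary part yields $\hat\beta = -\i(v\cdot\xi)|\xi|^{-2}\hat\alpha$, and substituting this into the real part reduces the system to the algebraic identity
\[
D(\xi)\,\hat\alpha(\xi) = -\hat W(\xi), \qquad D(\xi) := |\xi|^2 + \mu - (v\cdot\hat\xi)^2, \quad \hat\xi := \xi/|\xi| .
\]

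The three regimes are then read off from $D$. If $|v|<\sqrt{\mu}$, then $D \geq \mu - |v|^2 > 0$, so $\hat\alpha = -\hat W/D$ and $\hat\beta = \i(v\cdot\xi)(|\xi|^2 D)^{-1}\hat W$ are tempered distributions smooth away from $\xi=0$; uniqueness follows because the homogeneous equation $D\hat\alpha = 0$ forces $\hat\alpha$ to be supported at $\{0\}$, incompatible with any finite-energy $\gamma_v$. If $|v|=\sqrt{\mu}$, then $D$ still vanishes only at $\xi=0$ but degenerates further along the axial directions $\pm v/|v|$; a local-integrability check in $\R^3$ shows $\hat W/D$ remains tempered, and the same explicit formulas furnish the unique solution. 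If $|v|>\sqrt{\mu}$, the zero set $\{D=0\}$ is a nonempty smooth hypersurface in $\xi$-space (the Cerenkov surface); unless $\hat W$ vanishes on it, $\hat W/D$ fails to be locally integrable and no finite-energy $\gamma_v$ can exist, giving exactly the non-generic exception in the statement. The compatibility condition is then automatic: by Plancherel, $\int\nabla W\cdot\Re\gamma_v\,\d x$ is proportional to $\int \xi\,|\hat W(\xi)|^2/D(\xi)\,\d\xi$, whose integrand is odd in $\xi$ while $|\hat W|^2/D$ is even.

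The decay rates follow from analysing the singularities of $\hat\alpha$ and $\hat\beta$ at $\xi=0$. In the subsonic case the leading part of $\hat\alpha$ at the origin is the direction-dependent function $-\hat W(0)/(\mu - (v\cdot\hat\xi)^2)$, homogeneous of degree $0$; its inverse Fourier transform in $\R^3$ is homogeneous of degree $-3$, producing the $\lan x\ran^{(-3)+}$ bound on $\Re\gamma_v$, while the extra factor $(v\cdot\xi)/|\xi|^2$ in $\hat\beta$ raises the homogeneity by one to yield the $\lan x\ran^{(-2)+}$ bound on $\Im\gamma_v$, the ``$+$'' absorbing logarithmic losses from the angular dependence. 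The principal obstacle will be the sonic case, where $D$ degenerates simultaneously at the origin and along the two antipodal axial directions: isotropic scaling no longer identifies the leading singular profile, and an anisotropic dyadic decomposition in $\xi$ adapted to both sources of degeneracy is required to extract the sharp $\lan x\ran^{(-1)+}$ and $\lan x\ran^{(-\frac{2}{3})+}$ rates.
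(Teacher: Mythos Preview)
Your reduction is correct and lands on exactly the same symbol the paper uses in the supersonic case, $D(\xi)=|\xi|^2+\mu-(v\cdot\hat\xi)^2$; your parity argument for $\dot P_t=0$ is in fact cleaner than the paper's. The difference is one of packaging. The paper first applies the $\R$-linear change of variables $G_v=U_r^{-1}\gamma_v$ with $U=(-\Delta)^{1/2}(1-\Delta)^{-1/2}$, which converts the traveling-wave equation into the complex-linear equation $H_vG_v=-W$ with symbol $h_v(\xi)=|\xi|\langle\xi\rangle-v\cdot\xi$; your direct elimination of $\hat\beta$ in terms of $\hat\alpha$ is algebraically equivalent, since $D(\xi)=|\xi|^{-2}h_v(\xi)h_{-v}(\xi)$. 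For the decay rates, the paper does not argue via the homogeneity of the leading singular part of the symbol but instead sets up symbol classes $\mc{M}_a^b$ and proves frequency-localized pointwise bounds on each Littlewood-Paley piece $P_kMf$, then sums; this buys a uniform machine that handles both the subsonic and sonic cases (the latter via an explicit spherical-coordinate integral over the band $|\xi|\simeq 2^{-k}$ that tracks the axial degeneration of $D$). Your homogeneity heuristic is correct for the subsonic case and would yield the stated rates once you peel off the leading term and control the remainder; for the sonic case you correctly flag that an anisotropic decomposition is needed, and the paper's version of that is precisely the dyadic-in-$|\xi|$ estimate combined with the angular integral $\int_0^\pi \phi\,(2^{-2k}+\phi^2)^{-1-|\alpha|}\,d\phi$, which produces the extra loss responsible for the $(-1)+$ and $(-\tfrac{2}{3})+$ exponents.
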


\begin{thm}[Stability of subsonic inertial motion] \label{thm.technical0}
Assume that $W$ is spherically symmetric, smooth, and rapidly decreasing, and that its Fourier transform is nowhere vanishing.
Suppose $(X_t, P_t, \beta_t)$ is a solution to the system \eqref{eq.emodel_full} in the case $V \equiv 0$.
If the initial conditions satisfy
\begin{itemize}
\item $| v_0 | := M^{-1} | P_0 | < \sqrt{\mu}$, and
\item $\beta_0 - \gamma_{v_0}$ is smooth and rapidly decreasing, and $\| \langle x \rangle^4 (\beta_0 - \gamma_{v_0}) \|_{ L^2 }$ is sufficiently small (depending on $M$, $\mu$, $|v_0|$ and $W$),
\end{itemize}
then there exists an asymptotic velocity $v_\infty \in \R^3$, with $| v_\infty | < \sqrt{\mu}$, such that
\[ | M^{-1} P_t - v_\infty | \lesssim (1+t)^{-3} \text{.} \]
Moreover, there exists an asymptotic initial position $\bar{X}_0 \in \mathbb{R}^3$ such that
\begin{align*}
| X_t - v_\infty t - \bar{X}_0 | &\lesssim (1 + t)^{-2} \text{,} \\
\| \Re [ \beta_t - \gamma_{v_\infty} (\cdot - \bar{X}_0 - v_\infty t) ] \|_{ L^\infty } &\lesssim (1 + t)^{-\frac{3}{2}} \text{,} \\
\| \Im [ \beta_t - \gamma_{v_\infty} (\cdot - \bar{X}_0 - v_\infty t) ] \|_{ L^\infty } &\lesssim (1 + t)^{(-1)+} \text{.}
\end{align*}
\end{thm}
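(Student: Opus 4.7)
The plan is a modulation and dispersive-bootstrap argument carried out in the co-moving frame attached to the particle. Set $y=x-X_t$, $v_t:=M^{-1}P_t$, and $\tilde\beta_t(y):=\beta_t(y+X_t)$; since the subsonic family $\{\gamma_v\}_{|v|<\sqrt{\mu}}$ from Theorem \ref{thm.traveling_wave_reg0} depends smoothly on $v$ (it is the unique weighted solution of $L_v\gamma_v=-W$ for $L_v:=-\Delta+\mu\Re+\i v\cdot\nabla$), we decompose $\tilde\beta_t=\gamma_{v_t}+\eta_t$. Using the identity $\int\nabla W\,\Re\gamma_v\,dy=0$ (the particle equation applied to any traveling wave), the system \eqref{eq.emodel_full} reduces to
\begin{align*}
M\dot v_t &= \int_{\R^3}\nabla W(y)\,\Re\eta_t(y)\,dy, \\
\i\,\p_t\eta_t &= L_{v_t}\eta_t-\i\,\dot v_t\cdot\p_v\gamma_{v_t}.
\end{align*}
Since $v_t=P_t/M$ is itself a dynamical variable, it plays the role of the Galilean modulation parameter and absorbs the secular tangent direction $\p_v\gamma_v$ on the six-dimensional traveling-wave manifold; spatial translations act trivially in the co-moving frame and require no separate modulation.

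The technical heart is a dispersive analysis of $e^{-\i tL_v}$ for $v$ in a small neighborhood of $v_0$. Writing $\eta=u+\i w$ diagonalizes $L_v$ as a second-order wave-type system in $(u,w)$ with dispersion $\omega_\pm(\xi)=\pm|\xi|\sqrt{|\xi|^2+\mu}-v\cdot\xi$; the subsonic condition $|v|<\sqrt{\mu}$ ensures both that $\omega_\pm(\xi)\neq 0$ for $\xi\neq 0$ (no Cerenkov resonances) and that $\nabla_\xi\omega_\pm$ has no zeros off the origin. A stationary-phase analysis on the two branches then yields estimates, uniform in $v$ near $v_0$, of the schematic form
\begin{align*}
\|\lan y\ran^{-\sigma}\,\Re\,e^{-\i tL_v}f\|_{L^\infty} &\ls (1+t)^{-3/2}\|\lan y\ran^{\sigma'}f\|_{H^k}, \\
\|\lan y\ran^{-\sigma}\,\Im\,e^{-\i tL_v}f\|_{L^\infty} &\ls (1+t)^{-1+}\|\lan y\ran^{\sigma'}f\|_{H^k}, \\
\|\lan y\ran^{-\sigma}\,e^{-\i tL_v}f\|_{L^2} &\ls (1+t)^{-4}\|\lan y\ran^{\sigma'}f\|_{H^k}.
\end{align*}
The loss $(1+t)^{-1+}$ in the imaginary part traces to a $|\xi|^{-1}$ factor at low frequencies in the map $\hat u\mapsto\hat w$, while the weighted local-decay rate $(1+t)^{-4}$ is obtained by trading weighted regularity against time decay in the non-stationary phase.

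Given these, a Duhamel bootstrap on $[0,T]$ under the ansatz
\[ |\dot v_t|+\|\lan y\ran^{-\sigma}\eta_t\|_{L^2}\le C\veps(1+t)^{-4},\qquad \|\lan y\ran^4\eta_t\|_{L^2}\le C\veps, \]
closes for $\veps$ small: the source $\dot v_s\,\p_v\gamma_{v_s}$ in the $\eta$-equation is time-integrable, the drift $L_{v_t}-L_{v_0}$ contributes only an $O(\veps)$ correction (since $|v_t-v_0|\ls\int_0^t|\dot v_s|\,ds\ls\veps$), and the force $\dot v_t=M^{-1}\int\nabla W\,\Re\eta_t$ inherits $(1+t)^{-4}$-decay from the weighted decay of $\eta_t$ together with the rapid decay of $W$. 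This yields the estimates globally in $t$. Integrability produces $v_\infty:=\lim_{t\to\infty}v_t$, subsonic with $|v_t-v_\infty|\ls(1+t)^{-3}$, and $\bar X_0:=\lim_{t\to\infty}(X_t-v_\infty t)$ with $|X_t-v_\infty t-\bar X_0|\ls(1+t)^{-2}$. The field convergence follows by decomposing
\[ \beta_t(x)-\gamma_{v_\infty}(x-v_\infty t-\bar X_0) = \bigl[\gamma_{v_t}(x-X_t)-\gamma_{v_\infty}(x-v_\infty t-\bar X_0)\bigr] + \eta_t(x-X_t); \]
the first bracket is $O((1+t)^{-2})$ in $L^\infty$ by smoothness of $(v,y)\mapsto\gamma_v(y)$ together with the particle rates, and the second matches the claimed $(1+t)^{-3/2}$ and $(1+t)^{-1+}$ rates on its real and imaginary parts via the dispersive $L^\infty$-estimates on $\eta_t$.

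The principal obstacle is the uniform-in-$v$ dispersive analysis of $L_v$. Producing the weighted local $L^2$-rate $(1+t)^{-4}$ needed to close the $\dot v_t$-bootstrap demands careful stationary-phase work on both dispersion branches $\omega_\pm$, and the sharp $(1+t)^{-1+}$ rate for $\Im\eta_t$ requires precise tracking of the low-frequency $|\xi|^{-1}$ singularity in the propagator. Once these linear estimates are in hand, the nonlinear bootstrap and the extraction of asymptotic parameters proceed essentially as in standard orbital-stability arguments for soliton-like solutions in translation-invariant Hamiltonian PDE.
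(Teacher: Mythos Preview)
Your outline captures the correct decomposition and the correct linear estimates at the level of the free evolution, but the bootstrap as you describe it does not close. The obstruction is that the feedback loop
\[
\dot v_t \;\longrightarrow\; \eta_t \text{ (via the source } \dot v_s\cdot\p_v\gamma_{v_s}\text{)} \;\longrightarrow\; \dot v_t \text{ (via } M^{-1}\textstyle\int\nabla W\,\Re\eta_t\text{)}
\]
has gain of order one, not of order $\veps$. Concretely, under your ansatz $|\dot v_s|\le C\veps(1+s)^{-4}$ the Duhamel contribution to $\lan y\ran^{-\sigma}\eta_t$ is bounded by $A\,C\veps(1+t)^{-4}$ with $A\simeq\sup_s\|\lan y\ran^{\sigma'}\p_v\gamma_{v_s}\|$, and feeding this back into $\dot v_t$ produces $AB\,C\veps(1+t)^{-4}$ with $B$ depending on $W$. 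Closing requires $AB<1$, a condition on $W$ alone that you have no mechanism to enforce. A related symptom: your argument never uses the hypotheses that $W$ is spherically symmetric and that $\hat W$ is nowhere vanishing, yet these are essential. (There is also a secondary issue: the propagated bound $\|\lan y\ran^4\eta_t\|_{L^2}\le C\veps$ is not generally compatible with dispersive evolution, since weighted $L^2$ norms grow under $e^{-itL_v}$.)

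The paper handles exactly this $O(1)$ feedback by a different device. One substitutes the Duhamel formula for the perturbation directly into the force law to obtain a closed Volterra-type equation
\[
\dot v_t = r(t) - \int_0^t M(t-s)\,\dot v_s\,ds,
\]
where the $3\times3$ matrix kernel $M$ is built from $W$ and the $v_0$-propagator and carries the entire non-small part of the loop, while $r$ collects the genuinely small pieces (initial data and $O(\veps^2)$ corrections from $v_s-v_0$). One then Fourier-transforms in $t$ and inverts $1+\hat M(\omega)$ exactly. The spherical symmetry of $W$ makes $\hat M(\omega)$ diagonal; the non-vanishing of $\hat W$ is what guarantees $1+\hat M(\omega)$ is invertible for every real $\omega$ (via a sign analysis of $\Im\hat M$ off the imaginary axis and positivity on it). The resulting resolvent kernel $K$ decays like $(1+t)^{-4}$, and $\dot v = r + K*r$ then inherits the $(1+t)^{-4}$ rate with a constant proportional to $\veps$. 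The field convergence step you describe is essentially correct once $|\dot v_t|\lesssim\veps(1+t)^{-4}$ is in hand.
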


In other words, the particle velocity converges to some asymptotic subsonic value as $t \nearrow \infty$.
Moreover, the trajectory of the particle also converges to an asymptotic ballistic path, $t \mapsto \bar{X}_0 + v_\infty t$, and the field $\beta$ converges to the traveling wave solution associated with this ballistic motion.

In \cite{froehlich12} and in \cite{egli11} (where a more general result going beyond the weak-coupling limit is proven), some of the authors have analyzed particle motion in an ideal Bose gas, that is, for $\mu=0$.
In an ideal Bose gas, the speed of sound is zero, hence there is only supersonic particle motion.
For the models of \cite{froehlich12} and \cite{egli11}, the heuristic picture sketched above is vindicated.
The particle decelerates to zero speed (= the speed of sound), $|P_t|=O(t^{-\alpha})$, for some rate $\alpha>\frac{1}{2}$ possibly depending on initial conditions.
The condensate wave $\beta_t$ is proven to approach the static solution of equation \eqref{eq.emodel_full}.

The existence of solutions describing inertial motion at non-zero velocity shows that the static solutions are unstable if $\mu>0$.
However, for an ideal Bose gas, that is, $\mu=0$, the static solutions are stable (\cite{froehlich12, froehlich11}).

A variety of physics-style results for related problems can be found in \cite{kovrizhin01, ovchinnikov98}.
In \cite{bdb}, the authors prove asymptotic ballistic motion for a forced particle interacting with a wave field.
However, in their work they assume an \textit{independent} wave field for each point in space.
There is extensive mathematical literature on existence, nonexistence, stability, and scattering of traveling waves for nonlinear Schr\"odinger and reaction-diffusion equations.
For the latter, see for instance \cite{vvv}, or \cite{abvv}, for recent work.
For scattering theory in nonlinear Schr\"odinger equations, see \cite{sw1,sw2,bp,fty, gustafson06, gustafson09}.
For existence and nonexistence results, see \cite{bethuel99,g03,m08}.
For results on stability of traveling waves, see \cite{gss,bs,c01,rss,bgss08,gz08}.
See also \cite{m10} for a recent review of results on traveling waves in nonlinear Schr\"odinger equations with nonzero conditions at infinity.

\subsection{Notations} \label{sec.intro_not}

Here we summarize some notation used throughout the paper.
\begin{itemize}
\item $f\lesssim_k g$ means that there is a constant $C$ depending on $k$ such that $f\leq Cg$.

\item We use the standard notation
\[ \langle x \rangle = ( 1 + | x |^2 )^\frac{1}{2} \text{,} \qquad \hat{x} = \frac{x}{ |x| } \in \Sph^{n-1} \text{,} \qquad x \in \R^n \text{.} \]

\item Given $f: \R^n \rightarrow \C$ and $X \in \R^n$, we define the spatial translate $f^X$ of $f$ by
\[ f^X (x):= f (x-X) \text{.} \]

\item Given $f: \R^n \rightarrow \C$, we let $\mc{F} (f)$ and $\hat{f}$ denote the Fourier transform of $f$.

\item We will also use $\langle \cdot, \cdot \rangle$ to denote the standard $L^2$-inner product on $\R^n$,
\[ \langle f, g \rangle = \int_{ \R^n } f \bar{g} \text{,} \qquad f, g : \R^n \rightarrow \C \text{.} \]
\end{itemize}
In most of this paper, we will only consider three-dimensional space, $n = 3$.
In a few instances, we will also use the same conventions in the case $n = 1$.

Next, we will use the following notation.
\begin{itemize}
\item Let $\mc{S} (\R^n)$ denote the space of smooth rapidly decreasing functions on $\R^n$.

\item For any $b \in \R$, we let $H^b (\R^n)$ denote the standard (inhomogeneous) fractional Sobolev function space, defined using the norm
\[ \| f \|_{ H^b }^2 = \int_{\R^n} \langle \xi \rangle^{2b} | \hat{f} (\xi) |^2 d \xi \text{.} \]

\item Furthermore, when $b < 3/2$, we can define the corresponding homogeneous Sobolev space, $\dot{H}^b (\R^n)$, via the norm
\footnote{Such spaces can still be adequately defined when $b \geq 3/2$, but only as tempered distributions modulo polynomials.  For simplicity, we avoid such technicalities here.}
\[ \| f \|_{ \dot{H}^b }^2 = \int_{\R^n} | \xi |^{2b} | \hat{f} (\xi) |^2 d \xi \text{.} \]
\end{itemize}

In our analysis, we will often encounter functions with different regularities at low and high frequencies.
These functions will typically belong to the generalized Sobolev spaces $H_a^b (\R^n)$, which treat low frequencies like $\dot{H}^a$ and high frequencies like $\dot{H}^b$.
Again, we will restrict our attention to spaces with $a < 3/2$. The spaces $H_a^b (\R^n)$ can then be described in terms of the norm
\[ \| f \|_{ H_a^b (\R^n) }^2 = \int_{ |\xi| \leq 1 } | \xi |^{2a} | \hat{f} (\xi) |^2 d \xi + \int_{ |\xi| \geq 1 } | \xi |^{2b} | \hat{f} (\xi) |^2 d \xi \text{.} \]
The spaces $H^b$ and $\dot{H}^b$ represent special cases, with $a = 0$ and $a = b$, respectively.

The following simple observations relate these $H_a^b$-spaces, which are somewhat technical in nature, to some more intuitive situations.
\begin{itemize}
\item If $f \in L^1 (\R^n) \cap H^b (\R^n)$, then $f \in H_a^b (\R^n)$ for any $a > -n/2$.

\item If $f \in \mc{S} (\R^n)$, then $f \in H_a^b (\R^n)$ for any $a > -n/2$ and $b \in \R$.
\end{itemize}

Finally, we are interested in decay properties of fields and particle.
In general, there are two obstacles preventing decay:
\begin{itemize}
\item Singular behavior of various operators near the Fourier origin---these are properties of our system that limit the optimal rates of decay.

\item Lack of spatial decay of our initial conditions and potential.
\end{itemize}
We choose our assumptions such that the decay will only be limited by the first obstacle mentioned above.
For this purpose, we introduce the following definition.
We say a function $f: \R^n \rightarrow \C$ is \emph{sufficiently fast decaying} iff
\begin{equation} \label{eq.gz_norm} \| \langle x \rangle^N f \|_{ L^2 }^2 = \int_{ \R^n } \langle x \rangle^{2N} | f (x) |^2 dx < \infty \text{,} \end{equation}
for some large enough $N$.

\begin{remark}
In this paper, taking $N = 6$ is sufficient to prove all results.
\end{remark}

\section{Basic properties of the equations of motion.} \label{sec.equations}

In this section we discuss some basic properties of the system of equations \eqref{eq.emodel_full}.
Moreover, from now on, we will only consider the case $V \equiv 0$, in which there are no external forces acting on the tracer particle.

\subsection{Hamiltonian Structure and Transformations} \label{sec.equations_sym}

The system \eqref{eq.emodel_full} is Hamiltonian, with Hamilton functional
\begin{align}
\label{eq.energy} \mc{H} (X, P, \beta) &= \frac{1}{2 M} | P |^2 + \int_{\R^3} \left[ \frac{1}{2} |\nabla \beta |^2 + \mu |\Re \beta|^2 + W^X \Re \beta \right] \text{,}
\end{align}
and with the symplectic form
\[ \omega ( (X, P, \beta), (X^\prime, P^\prime, \beta^\prime) ) = P' \cdot X - P \cdot X' +  \Im \int_{\R^3} \bar{\beta} \beta^{\prime} \text{.} \]
In the preceding expressions, we have, for conciseness, adopted the usual practice of identifying vector spaces with their tangent and cotangent spaces.

Moreover, the system has the following invariance and covariance properties:
\begin{enumerate}
\item[(a)] \emph{Time translation}: If $(X, P, \beta)$ is a solution of \eqref{eq.emodel_full}, and if $t_0 \in \R$, then
\[ \tilde{X}_t = X_{t - t_0} \text{,} \qquad \tilde{P}_t = P_{t - t_0} \text{,} \qquad \tilde{\beta}_t (x) = \beta_{t - t_0} (x) \]
is a solution of \eqref{eq.emodel_full}.

\item[(b)] \emph{Spatial translation}: If $(X, P, \beta)$ is a solution of \eqref{eq.emodel_full}, and if $x_0 \in \R^3$, then
\[ \tilde{X}_t = x_0 + X_t \text{,} \qquad \tilde{P}_t = P_t \text{,} \qquad \tilde{\beta}_t (x) = \beta_t (x - x_0) \]
is a solution of \eqref{eq.emodel_full}.
In other words, \eqref{eq.emodel_full} has translation symmetry.

\item[(c)] \emph{Spatial rotation}: If $(X, P, \beta)$ is a solution of \eqref{eq.emodel_full}, and if $R \in SO (3)$, i.e., $R$ is a spatial rotation, then
\[ \tilde{X}_t = R^{-1} X_t \text{,} \qquad \tilde{P}_t = R^{-1} P_t \text{,} \qquad \tilde{\beta}_t (x) = \beta_t ( R x ) \]
is another solution of \eqref{eq.emodel_full}, with $W$ replaced by $\tilde{W} (x) = W (Rx)$.
In particular, if $W$ is spherically symmetric, then \eqref{eq.emodel_full} is covariant under rotations.

\item[(d)] If $(X, P, \beta)$ is a solution of \eqref{eq.emodel_full}, and if $\lambda > 0$, then
\[ \tilde{X}_t = X_t \text{,} \qquad \tilde{P}_t = \lambda^{-1} P_t \text{,} \qquad \tilde{\beta}_t (x) = \lambda^{-\frac{1}{2}} \beta_t ( x ) \]
is a solution of \eqref{eq.emodel_full}, with $M$ and $W$ replaced by
\[ \tilde{M} = \lambda^{-1} M \text{,} \qquad \tilde{W} (x) = \lambda^{-\frac{1}{2}} W ( x ) \text{.} \]

\item[(e)] If $(X, P, \beta)$ is a solution of \eqref{eq.emodel_full}, and if $\lambda > 0$, then
\[ \tilde{X}_t = \lambda X_{ \lambda^{-2} t } \text{,} \qquad \tilde{P}_t = \lambda^{-1} P_{ \lambda^{-2} t } \text{,} \qquad \tilde{\beta}_t (x) = \lambda^{-\frac{3}{2}} \beta_{ \lambda^{-2} t } ( \lambda^{-1} x ) \]
is a solution of \eqref{eq.emodel_full}, with $\mu$ and $W$ replaced by
\[ \tilde{\mu} = \lambda^{-2} \mu \text{,} \qquad \tilde{W} (x) = \lambda^{-\frac{7}{2}} W ( \lambda^{-1} x ) \text{.} \]
\end{enumerate}
The last two properties can be interpreted as transformations of \eqref{eq.emodel_full} corresponding to changes of units.
All these properties can be checked by direct computation.

By rotation and translation transformations ((b) and (c)), we may assume, where convenient, that $X_0 = 0$ and $P_0 = (0, 0, |P_0|)$.
Moreover, by appropriate changes of units ((d) and (e)), we need only consider the specific case $M = 1$ and $\mu = 1$.
In this setting, the equations of motion are
\begin{align} 
\label{eq.emodel} \dot{X}_t &= P_t \text{,} \\
\notag \dot{P}_t &=  \int_{\R^3} \nabla W^{X_t} \Re \beta_t \d x \text{,} \\
\notag \i \dot{\beta}_t &= - \Delta \beta_t + \Re \beta_t + W^{X_t} \text{.}
\end{align}
\emph{From now on, we exclusively analyze equations \eqref{eq.emodel} rather than \eqref{eq.emodel_full}.}

\begin{remark}
As the mass $M$ of the tracer particle has been set to $1$, the velocity $v = \dot{X}$ and the momentum $P$ coincide numerically.
Depending on the appropriate physical interpretation, we will use the notations $v_t$ and $P_t$, respectively.
\end{remark}

The symmetries of the system \eqref{eq.emodel} yield standard conserved quantities, whenever such quantities are well-defined:
\begin{itemize}
\item By time translation symmetry, solutions of \eqref{eq.emodel} have the energy $\mc{H}$ defined in \eqref{eq.energy} (with $M = \mu = 1$) as a conserved quantity.

\item By space translation symmetry, solutions of \eqref{eq.emodel} have conserved momentum
\begin{equation} \label{eq.momentum} \mc{P}_j ( X, P, \beta ) = P_j + \frac{1}{2} \Im \int_{ \R^3 } \bar{\beta} \partial_j \beta \text{,} \qquad 1 \leq j \leq 3 \text{.} \end{equation}

\item If $W$ is spherically symmetric, then the rotational symmetry property implies that solutions of \eqref{eq.emodel} have conserved angular momentum.
\end{itemize}

\subsection{Global Well-Posedness} \label{sec.equations_gwp}

A basic question is whether solutions to the system \eqref{eq.emodel} exist and are unique, for ``general" initial data.
The following theorem gives an affirmative answer to this question.

\begin{thm} \label{thm.emodel_gwp}
Let $a \in (-5/2, 1/2)$, $b \in [1, \infty)$, and $W \in L^1 (\R^3) \cap H^b (\R^3)$.
If
\[ P_0 \in \R^3 \text{,} \qquad \real \beta_0 \in H_a^b (\R^3) \text{,} \qquad \imag \beta_0 \in H_{a+1}^b (\R^3) \text{,} \]
then the system \eqref{eq.emodel} has a unique solution
\[ X_t, P_t \in C ( [0, \infty); \R^3 ) \text{,} \qquad (\real \beta_t, \imag \beta_t) \in C ( [0, \infty); H_a^b (\R^3) \times H_{a+1}^b (\R^3) ) \text{,} \]
corresponding to the initial data $X_0 $, $P_0$, and $\beta_0$.
\end{thm}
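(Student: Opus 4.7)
My plan is standard: reduce the $\beta$-equation to a real Hamiltonian system via $u = \Re\beta$, $w = \Im\beta$; establish uniform boundedness of the associated linear propagator on the asymmetric space $H_a^b \times H_{a+1}^b$; recast \eqref{eq.emodel} as a Duhamel fixed-point problem; and extend the local solution globally using the essentially semilinear (rather than truly nonlinear) structure of the system.

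Split the $\beta$-equation as $\dot u = -\Delta w$, $\dot w = (\Delta - 1)u - W^{X_t}$. In Fourier variables the homogeneous part has eigenfrequencies $\pm i\omega(\xi)$, $\omega(\xi)=|\xi|\sqrt{|\xi|^2+1}$, and the propagator $U(t)$ acts by bounded multipliers on the $(u,w)\mapsto(u_t,u_t)$ and $(u,w)\mapsto(u_t,w_t)$ diagonals, while picking up a $|\xi|^{-1}$ factor in the $w_0\to u_t$ and $|\xi|$ factor in the $u_0\to w_t$ off-diagonal directions. This is precisely the asymmetry encoded by the space $H_a^b \times H_{a+1}^b$: the quadratic form
\[ \int m(\xi)\bigl[(|\xi|^2+1)|\hat u(\xi)|^2 + |\xi|^2|\hat w(\xi)|^2\bigr]\, d\xi, \]
with multiplier $m(\xi)$ equivalent to $|\xi|^{2a}$ near $0$ and to $|\xi|^{2(b-1)}$ near infinity, is conserved by the linear flow and is equivalent to the square of the $H_a^b \times H_{a+1}^b$-norm. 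Hence $U(t)$ is uniformly bounded on this space.

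With this in hand, the Duhamel formulation
\[ X_t = X_0 + \int_0^t P_s\, ds, \quad P_t = P_0 + \int_0^t \langle \nabla W^{X_s}, u_s\rangle\, ds, \quad (u_t,w_t) = U(t)(u_0,w_0) - \int_0^t U(t-s)(0,W^{X_s})\, ds, \]
is amenable to a Picard iteration in $C([0,T];\R^3\times\R^3\times H_a^b\times H_{a+1}^b)$. The required input estimates are direct consequences of the hypotheses: (i) $\|W^X\|_{H_{a+1}^b}=\|W\|_{H_{a+1}^b}<\infty$, because $W\in L^1$ gives $\hat W\in L^\infty$ and the low-frequency integral $\int_{|\xi|\le 1}|\xi|^{2(a+1)}\, d\xi$ converges exactly when $a>-5/2$, while $W\in H^b$ controls the high-frequency part; (ii) $\nabla W\in (H_a^b)^*$, since $\int_{|\xi|\le 1}|\xi|^{2-2a}|\hat W|^2<\infty$ whenever $a<5/2$ and $\int_{|\xi|\ge 1}|\xi|^{2-2b}|\hat W|^2\le\|W\|_{L^2}^2$ whenever $b\ge 1$, so that $|\dot P_t|\lesssim\|u_t\|_{H_a^b}$; and (iii) $X\mapsto W^X$ is continuous into $H_{a+1}^b$ and Lipschitz into $H_{a+1}^{b-1}$, which is sufficient to close the contraction on small time intervals.

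For global extension, the key point is that \eqref{eq.emodel} is essentially affine: the source $W^{X_s}$ has $X$-independent norm. The Duhamel formula therefore gives
\[ \|(u_t,w_t)\|_{H_a^b\times H_{a+1}^b}\le C\|(u_0,w_0)\|_{H_a^b\times H_{a+1}^b}+Ct\|W\|_{H_{a+1}^b}, \]
which grows only linearly in $t$. Substituting into the $P$-equation yields $|P_t|\lesssim 1+t^2$ and thence $|X_t|\lesssim 1+t^3$, so no quantity blows up in finite time and the standard continuation argument extends the solution to $[0,\infty)$. (Alternatively, \eqref{eq.energy} provides a uniform-in-$t$ bound on $|P_t|$, but the polynomial bound above already suffices and avoids having to check that the energy integrals converge throughout the $H_a^b$-scale.) The principal technical obstacle lies in step (iii): the Lipschitz constant of $X\mapsto W^X$ in $H_{a+1}^b$ formally requires one more derivative on $W$ than is assumed. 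I would handle this either by running the contraction in a weaker $b$-norm and bootstrapping back to $H_a^b$ using the propagated formula, or by approximating $W$ by Schwartz functions, closing a priori estimates via Gronwall's inequality, and passing to the limit.
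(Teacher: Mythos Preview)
Your outline is sound and can be completed to a proof, but it diverges from the paper's argument in one structural choice that matters. You run the Picard iteration on the full phase space $\R^3\times\R^3\times H_a^b\times H_{a+1}^b$, which forces you to confront the derivative loss in $X\mapsto W^X$ (your obstacle (iii)): Lipschitz continuity in $H_{a+1}^b$ would need $W\in H^{b+1}$, one derivative beyond the hypothesis. The paper sidesteps this entirely by running the contraction on $P\in C([0,T];\R^3)$ \emph{alone}. Given $P$, one recovers $X$ by integration and $B=U_r^{-1}\beta$ by the Duhamel formula \eqref{eq.B_duhamel}, then substitutes back into the $P$-equation to get a fixed-point map $\Phi(P)$. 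The key observation is that, by translation-invariance of the propagator, one can write
\[
\langle(\nabla W)^{X_s},\,Ue^{-iH(s-\tau)}W^{X_\tau}\rangle=\langle(\nabla W)^{X_s-X_\tau},\,Ue^{-iH(s-\tau)}W\rangle,
\]
so that \emph{all} the $X$-dependence sits on the $\nabla W$ factor. Differences of translates are then bounded by $\|\nabla^2 W\|_{H_{-a}^{-b}}$ in the dual norm, and the embedding $\|\nabla^2 W\|_{H_{-a}^{-b}}\lesssim\|W\|_{H_{a+1}^b}$ holds exactly when $b\ge 1$ and $a\le 1/2$---this is where the upper bound on $a$ in the hypothesis is actually used, a constraint your argument never invokes. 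Your proposed workarounds (contraction in a weaker $b$-norm with bootstrap, or Schwartz approximation plus Gronwall) can be carried through, but the first one only closes directly when $b\ge 3/2$, and the second requires a separate uniqueness argument; both are more laborious than the paper's duality trick. Your treatment of the propagator via the conserved quadratic form is essentially the $(u,w)$-version of the paper's change of variables $B=U_r^{-1}\beta$, and your global extension via the linear-in-$t$ Duhamel bound is exactly what the paper does.
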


The proof of Theorem \ref{thm.emodel_gwp}, which employs rather conventional contraction mapping techniques, is deferred until Appendix \ref{sec.gwp}.

\begin{remark}
Note that if
\[ \real \beta_0 \in H^1 (\R^3) \text{,} \qquad \imag \beta_0 \in \dot{H}^1 (\R^3) \text{,} \qquad W \in L^2 (\R^3) \text{,} \]
then the Hamiltonian functional $\mc{H}$ in \eqref{eq.energy} is well-defined.
This case is covered by Theorem \ref{thm.emodel_gwp}, with parameters $a = 0$ and $b = 1$.
\end{remark}

While Theorem \ref{thm.emodel_gwp} shows that $P$ and $\beta$ cannot blow up in finite time, it does not exclude the possibility that $P$ and $\beta$ blow up in the limit $t \nearrow \infty$.
However, if the Hamiltonian functional is well-defined, then it must be conserved for solutions of \ref{thm.emodel_gwp}.
In this case, one is able to not only rule out blowup, but actually establish uniform bounds on both quantities.

\begin{proposition} \label{thm.emodel_energy}
Let $W \in L^2 (\R^3)$, and suppose that
\[ X_t, P_t \in C ( [0, \infty); \R^3 ) \text{,} \qquad (\real \beta_t, \imag \beta_t) \in C ( [0, \infty); H^1 (\R^3) \times \dot{H}^1 (\R^3) ) \]
is a solution of \eqref{eq.emodel}.
Then, for any $t \in [0, \infty)$,
\begin{align}
\label{eq.emodel_energy} \frac{1}{2} | P_t |^2 + \int_{\R^d} \left( \frac{1}{2} | \nabla\beta_t |^2 + \frac{1}{2} | \real \beta_t |^2 \right) dx \leq \mc{H} ( X_0, P_0, \beta_0 ) + \frac{1}{2}\| W \|_{ L^2_x }^2 \text{.}
\end{align}
\end{proposition}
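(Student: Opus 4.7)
The plan is to deduce the bound from two ingredients: conservation of the Hamiltonian $\mc{H}$ along the flow, and an elementary completion-of-the-square estimate that absorbs the indefinite cross term $\int W^{X_t}\Re\beta_t$ against the positive quadratic piece $\int |\Re\beta_t|^2$.

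First I would establish that $\mc{H}(X_t, P_t, \beta_t) = \mc{H}(X_0, P_0, \beta_0)$ for all $t \geq 0$. At the formal level, differentiating $\mc{H}$ along the trajectory and substituting \eqref{eq.emodel} produces the standard Hamiltonian cancellation: the contribution $P_t \cdot \dot P_t = P_t \cdot \int \nabla W^{X_t} \Re\beta_t$ from the kinetic term cancels against the piece of $\frac{d}{dt}\int W^{X_t}\Re\beta_t$ coming from $\dot X_t = P_t$, while the remaining $\beta$-derivatives collapse once the Schrödinger equation for $\beta_t$ is inserted. Under the hypotheses $(\Re\beta_0,\Im\beta_0) \in H^1 \times \dot H^1$ and $W \in L^2$, however, $\dot\beta_t$ lies a priori only in a negative Sobolev space, so the calculation must be justified. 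The cleanest route is to approximate the initial data by Schwartz functions, use Theorem~\ref{thm.emodel_gwp} at higher regularity $(a,b)$ to produce smooth approximating solutions on which the pointwise computation is rigorous, and then pass to the limit using continuity of $\mc{H}$ on $\R^3 \times \R^3 \times H^1(\R^3)$ together with the continuous dependence on initial data supplied by the well-posedness theorem.

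Given conservation, the rest is algebra. Writing the final two terms in the integrand of $\mc{H}$ as
\[ |\Re\beta_t|^2 + W^{X_t}\Re\beta_t = \tfrac{1}{2}|\Re\beta_t|^2 + \tfrac{1}{2}|\Re\beta_t + W^{X_t}|^2 - \tfrac{1}{2}|W^{X_t}|^2 \geq \tfrac{1}{2}|\Re\beta_t|^2 - \tfrac{1}{2}|W^{X_t}|^2, \]
integrating, and using translation invariance of Lebesgue measure to replace $\|W^{X_t}\|_{L^2}$ by $\|W\|_{L^2}$, one obtains
\[ \mc{H}(X_t, P_t, \beta_t) \geq \tfrac{1}{2}|P_t|^2 + \int_{\R^3}\left[\tfrac{1}{2}|\nabla\beta_t|^2 + \tfrac{1}{2}|\Re\beta_t|^2\right] dx - \tfrac{1}{2}\|W\|_{L^2}^2. \]
Combining this with $\mc{H}(X_t, P_t, \beta_t) = \mc{H}(X_0, P_0, \beta_0)$ yields exactly \eqref{eq.emodel_energy}.

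The genuinely delicate step is the first one: the solution sits precisely at the energy threshold, so differentiating $\mc{H}$ in $t$ is not an entirely routine pairing. Once one commits to the approximation-and-passage-to-limit strategy (or equivalently, a weak formulation tested appropriately), the remainder is a one-line completion of the square. I would therefore spend the bulk of the proof on organising the density/continuity argument for conservation and treat the inequality itself as a brief corollary.
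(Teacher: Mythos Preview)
Your proposal is correct and matches the paper's proof: both use conservation of $\mc{H}$ followed by absorbing the cross term $\int W^{X_t}\Re\beta_t$ into half of $\int|\Re\beta_t|^2$ at the cost of $\tfrac{1}{2}\|W\|_{L^2}^2$ (the paper phrases this via Cauchy--Schwarz plus Young's inequality, which is algebraically the same as your completion of the square). The paper simply asserts energy conservation without the density/continuity justification you outline, so your treatment is if anything more careful on that point.
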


\begin{proof}
By conservation of the Hamilton functional, 
\begin{align}
\mc{H} ( X_t, P_t, \beta_t ) = \mc{H} ( X_0, P_0, \beta_0 ) = \mc{H}_0 \text{.}
\end{align}
It then follows from the definition of $\mc{H}$ that
\begin{align*}
\frac{1}{2} | P_t |^2 + \int_{\R^d} \left( \frac{1}{2}|\nabla\beta_t |^2 + | \real \beta_t |^2 \right) dx &= \mc{H}_0 - \int_{\R^d} W^{X_t} \real \beta_t dx \\
&\leq \mc{H}_0 + \| W \|_{ L^2 } \| \real \beta_t \|_{ L^2 } \text{.}
\end{align*}
The inequality \eqref{eq.emodel_energy} follows, since
\[ \| W \|_{ L^2 } \| \real \beta_t \|_{ L^2 } \leq \frac{1}{2} \| W \|_{ L^2 }^2 + \frac{1}{2} \| \real \beta_t \|_{ L^2 }^2 \text{.} \qedhere \]
\end{proof}

\subsection{A Change of Variables} \label{sec.equations_cov}

From a mathematical perspective, the form \eqref{eq.emodel} of the equation of motion for $\beta$ is inconvenient because the operator $(-\Delta + \Re)$ is not complex linear.
As a result, $\Re \beta$ and $\Im \beta$ will have inherently different regularity properties in low frequencies.
To address this, we make a change of variables, introduced in \cite{gustafson06} for the Gross-Pitaevskii equation.
In the new variables, the field equation gives rise to an evolution that is unitary (on the $H_a^b$-spaces).

First, let
\[ U = ( - \lapl )^\frac{1}{2} ( 1 - \lapl )^{-\frac{1}{2}} \text{,} \]
and define the $\R$-linear operator $U_r: \mc{S} (\R^3) \rightarrow \mc{S} (\R^3)$ by
\begin{equation} \label{eq.op_V} U_r u = U ( \Re u ) + \i ( \Im u ) \text{,} \end{equation}
extended appropriately to various Sobolev-type spaces of functions.
Note that both $U$ and $U_r$ are invertible on the appropriate spaces.

Next, we set
\begin{align*}
B_t := U_r^{-1} \beta_t = U^{-1} ( \Re \beta_t ) + \i ( \Im \beta_t ) \text{.}
\end{align*}
Direct computation shows that, by letting
\[ H = \sqrt{ - \lapl ( - \lapl + 1) } \text{,} \]
the equations of motion in terms of $B_t$ take the form
\begin{equation} \label{eq.emodel_alt} \dot{X}_t = P_t \text{,} \qquad \dot{P}_t = \int_{ \R^3 } \nabla W^{X_t} U \Re B_t \text{,} \qquad \i \dot{B}_t = H B_t + W^{X_t} \text{.} \end{equation}
Duhamel's principle yields the following expression for $B_t$:
\begin{equation} \label{eq.B_duhamel} B_t = \e^{- \i H t} B_0 - \i \int_0^t \e^{ -\i H (t - s) } W^{X_s}\d s \text{.} \end{equation}
This representation will be advantageous, as it highlights the dispersive properties of the system.

From now on, we use the notation
\begin{equation} \label{eq.Hv} H_v := H + \i v \cdot \nabla \text{,} \qquad v \in \R^3 \text{,} \end{equation}
as this operator will be utilized repeatedly in our analysis.
By Fourier transformation, we see that the symbol of $H_v$ is given by
\begin{equation} \label{eq.hv} h_v (\xi) := | \xi | \langle \xi \rangle - v \cdot \xi = | \xi | ( \langle \xi \rangle - v \cdot \hat{\xi} ) \text{.} \end{equation}

\subsection{Symbol Classes} \label{sec.symbol}

In subsequent decay estimates, we will need to consider various differential operators constructed from $\nabla$, $U$, $H_v$, etc.; see Section \ref{sec.equations_cov}.
In particular, we require specific estimates for derivatives of the symbols of these operators.
Of special importance is how these symbols behave near the Fourier origin.
As we will do this repeatedly for several operators, it becomes convenient to treat them in a systematic manner, depending on their behavior in large and small frequencies.
As a result, we define the following classes of functions.

First, let $a, b \in \RR$, and suppose $f: \R^n \setminus \{ 0 \} \rightarrow \C$ is smooth.
We say that $f$ is \emph{in the class $\mc{M}_a^b$} iff for every multi-index $\alpha$, we have
\begin{equation} \label{eq.MN} | \partial^\alpha f (\xi) | \lesssim_{f, \alpha} \begin{cases} | \xi |^{b - |\alpha|} & | \xi | \geq 1 \text{,} \\ | \xi |^{a - |\alpha|} & | \xi | \leq 1 \text{,} \end{cases} \qquad \xi \in \R^n \text{.} \end{equation}
These resemble the usual H\"ormander classes encountered in pseudodifferential analysis, except, like for the $H_a^b$-spaces, we treat low and high frequencies separately.
Note the following observations:
\begin{itemize}
\item The map $\xi \mapsto | \xi |^a \langle \xi \rangle^{b - a}$, $\xi \in \R^n \setminus \{ 0 \}$, is in the class $\mc{M}_a^b$.

\item Given $a_1, a_2, b_1, b_2 \in \RR$, the product of two functions in the classes $\mc{M}_{a_1}^{b_1}$ and $\mc{M}_{a_2}^{b_2}$, respectively, is a function in the class $\mc{M}_{a_1 + a_2}^{b_1 + b_2}$.
\end{itemize}
The above can be verified through straightforward computations.

Next, we consider similar behavior in only the radial direction.
Given a smooth function $g: (0, \infty) \rightarrow \R$, we say that $g$ is \emph{in the class $\mc{O}^b$} iff for any integer $N \geq 0$,
\begin{equation} \label{eq.ON} | g^{ (N) } (\rho) | \lesssim_{g, N} \langle \rho \rangle^{b - N} \text{,} \qquad \rho \in (0, \infty) \text{.} \end{equation}
Unlike for the classes $\mc{M}_a^b$, here we will need to track the powers of $\rho$ explicitly, hence we only treat the powers of $\langle \rho \rangle$ in a unified way.
Note the following:
\begin{itemize}
\item The map $\rho \mapsto \langle \rho \rangle^b$, $\rho \in (0, \infty)$, is of the form $\mc{O}^b$.

\item Given $b_1, b_2 \in \RR$, the product of two functions in the class $\mc{O}^{b_1}$ and $\mc{O}^{b_2}$, respectively, is a function in the class $\mc{O}^{b_1 + b_2}$.
\end{itemize}
Again, these properties are verified easily.

Now, for any $\theta \in \Sph^{n-1}$, we can define the restriction
\[ f_\theta: (0, \infty) \rightarrow \C \text{,} \qquad f_\theta (\rho) = f (\rho \theta) \text{.} \]
In other words, we represent the arguments of $f$ using polar coordinates, and we restrict $f$ to a single spherical parameter.
We say that $f$ itself is \emph{in the class $\mc{O}^b$} iff each $f_\theta$, $\theta \in \Sph^{n-1}$, is in the class $\mc{O}^b$.

Note the following properties for some basic operators:
\begin{itemize}
\item The symbols $\xi$ and $|\xi|$ for $\nabla$ and $|\nabla|$, respectively, are in the class $\mc{M}^1_1$.

\item Given any $b \in \R$, the operator $U^b$ (see Section \ref{sec.equations_cov}) has associated symbol $u^b (\xi) = | \xi |^b \langle \xi \rangle^{-b}$, which is in the class $\mc{M}_b^0$.
\end{itemize}
Of particular interest are the operators $H_v$, $v \in \R^3$, defined in \eqref{eq.Hv}.
The associated symbol, $h_v$, was defined in \eqref{eq.hv}.
In particular, we have
\[ h_{v, \theta}: (0, \infty) \rightarrow \R \text{,} \qquad h_{v, \theta} (\rho) = \rho ( \langle \rho \rangle - v \cdot \theta ) \text{,} \]
for each $\theta \in \Sph^2$.
Note that when $|v| < 1$, we have the uniform lower bound
\[ | \xi |^{-1} | h_v (\xi) | = | \langle \xi \rangle - v \cdot \hat{\xi} | \geq 1 - | v | > 0 \text{,} \qquad \xi \in \R^3 \setminus \{ 0 \} \text{.} \]
Some other properties of $h_v$, listed below, can also be computed directly:
\begin{itemize}
\item If $b \in [0, \infty)$, then $(h_v)^b$ is in the class $\mc{M}_b^{2b}$, with the constants in \eqref{eq.MN} depending on $|v|$ and $b$.
Moreover, if $|v| < 1$, then $(h_v)^{-b}$ is well-defined on $\R^3 \setminus \{ 0 \}$ and is in the class $\mc{M}_{-b}^{-2b}$.

\item If $| v | < 1$, then for any $\theta \in \Sph^2$,
\begin{equation} \label{eq.phase_subsonic} h_{v, \theta}^\prime (\rho) \simeq_{ |v| } \langle \rho \rangle \text{.} \end{equation}

\item If $v \in \R^3$ and $\theta \in \Sph^2$, then $h_{v, \theta}^{\prime\prime}$ is in the class $\mc{O}^0$.
Moreover, the associated constants in \eqref{eq.ON} are independent of $v$.
\end{itemize}

\section{The Main Results} \label{sec.trwave}

In this section, we state the main results of this paper.
These pertain to traveling wave solutions of \eqref{eq.emodel} of the form
\begin{equation} \label{eq:trawave} X_t = vt \text{,} \qquad P_t \equiv v \text{,} \qquad \beta_t (x) = \gamma_v (x - vt) \text{,} \end{equation}
where the function $\gamma_v$ is time-independent.
These solutions describe uniform motion of the particle, along with a splash, $\gamma_v$, of the Bose gas accompanying the particle.

The main results of this section, and of this paper, are the following:
\begin{enumerate}
\item The existence of subsonic and sonic inertial solutions, and the generic nonexistence of finite energy supersonic inertial solutions (Theorem \ref{thm.traveling_wave_reg}).

\item The stability of subsonic inertial particle motion (Theorem \ref{thm:technical}).
\end{enumerate}

\subsection{Inertial Solutions} \label{sec.trwave_free}

The first main task is to show that traveling wave solutions of the form \eqref{eq:trawave} exist and are unique.
Furthermore, we obtain spatial decay estimates for these solutions.
The precise statements are given below.

\begin{thm} \label{thm.traveling_wave_reg}
Let $v \in \R^3$, and let $b \in [-1/2, \infty)$.
\begin{itemize}
\item If $| v | < 1$, then given any $W \in L^1 (\R^3) \cap H^b (\R^3)$, there exists a unique traveling wave solution \eqref{eq:trawave} to \eqref{eq.emodel}, with
\begin{equation} \label{eq.traveling_wave_subsonic} \Re \gamma_v \in \bigcap_{a > -\frac{3}{2}} H_a^{b+2} (\R^3) \text{,} \qquad \Im \gamma_v \in \bigcap_{a > -\frac{1}{2}} H_a^{b+2} (\R^3) \text{.} \end{equation}
Moreover, if $W$ is sufficiently fast decaying (see \eqref{eq.gz_norm}), then for any $\epsilon > 0$,
\begin{equation} \label{eq.traveling_wave_subsonic_decay} | \Re \gamma_v (x) | \lesssim_\epsilon \langle x \rangle^{-3 + \epsilon} \text{,} \qquad | \Im \gamma_v (x) | \lesssim_\epsilon \langle x \rangle^{-2 + \epsilon} \text{.} \end{equation}

\item If $| v | = 1$, then for any $W \in L^1 (\R^n) \cap H^b (\R^3)$, there exists a unique traveling wave solution \eqref{eq:trawave} to \eqref{eq.emodel}, with
\begin{equation} \label{eq.traveling_wave_sonic} \Re \gamma_v \in \bigcap_{a > -\frac{1}{2}} H_a^{b+2} (\R^3) \text{,} \qquad \Im \gamma_v \in \bigcap_{a > \frac{1}{2}} H_a^{b+2} (\R^3) \text{,} \end{equation}
Moreover, if $W$ is sufficiently fast decaying, then for any $\epsilon > 0$,
\begin{equation} \label{eq.traveling_wave_sonic_decay} | \Re \gamma_v (x) | \lesssim_\epsilon \langle x \rangle^{ -1 + \epsilon } \text{,} \qquad | \Im \gamma_v (x) | \lesssim_\epsilon \langle x \rangle^{ -\frac{2}{3} + \epsilon } \text{.} \end{equation}

\item If $|v| > 1$, then for any $W \in \mc{S} (\RR^3)$ such that $\hat{W}$ does not vanish on the set
\[ \{ \xi \in \RR^3 \mid 1 + | \xi |^2 - ( v \cdot \hat{\xi} )^2 = 0 \} \text{,} \]
there do not exist any finite-energy solutions \eqref{eq:trawave} to \eqref{eq.emodel}.
\end{itemize}
\end{thm}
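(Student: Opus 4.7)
The plan is to reduce the traveling-wave system to the single stationary equation $H_v\Gamma_v = -W$ for the auxiliary variable $\Gamma_v := U_r^{-1}\gamma_v$, solve it explicitly by Fourier inversion as $\hat\Gamma_v = -\hat W/h_v$, and then analyze the multiplier $1/h_v$ separately in the three velocity regimes. Substituting the ansatz $B_t(x) = \Gamma_v(x - vt)$ into \eqref{eq.emodel_alt} immediately collapses the field equation to $H_v\Gamma_v = -W$. The particle equations reduce to $P_t \equiv v$ together with the compatibility condition $\int_{\R^3}\nabla W\,\Re\gamma_v = 0$. A short Fourier computation, using $h_v(\xi) + h_v(-\xi) = 2|\xi|\langle\xi\rangle$ and $h_v(\xi)h_v(-\xi) = |\xi|^2[1 + |\xi|^2 - (v\cdot\hat\xi)^2]$, gives
\[
\widehat{\Re\gamma_v}(\xi) \;=\; -\frac{\hat W(\xi)}{1 + |\xi|^2 - (v\cdot\hat\xi)^2},
\]
a real, $\xi\mapsto -\xi$ even multiple of $\hat W$. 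Since $W$ is real-valued, $|\hat W|^2$ is even, and the compatibility integral, rewritten via Plancherel as a constant multiple of $\int \i\xi\,|\hat W|^2/[1 + |\xi|^2 - (v\cdot\hat\xi)^2]\,\d\xi$, vanishes by odd symmetry. Uniqueness of $\gamma_v$ is immediate from the explicit Fourier formula.

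For $|v| < 1$, the observations at the end of Section \ref{sec.symbol} give $h_v^{-1}\in \mc{M}_{-1}^{-2}$, so $\hat\Gamma_v$ behaves like $|\xi|^{-1}$ at the origin and like $|\xi|^{-2}\hat W$ at high frequencies; applying $U_r$ removes one power of $|\xi|$ at zero in the real component, producing the thresholds $a > -3/2$ and $a > -1/2$ of \eqref{eq.traveling_wave_subsonic}. For the pointwise decay \eqref{eq.traveling_wave_subsonic_decay} I would write
\[
\gamma_v(x) \;=\; \int_{\R^3}\e^{\i x\cdot\xi}\,m(\xi)\,\hat W(\xi)\,\d\xi,
\]
with symbol $m$ in the appropriate class $\mc{M}_a^{-2}$, perform a dyadic decomposition in $|\xi|$, and integrate by parts $N$ times in $\xi$ on each dyadic piece; the sufficient decay of $W$ controls $\partial^\alpha\hat W$, and the order of the low-frequency singularity of $m$ then fixes the rates $\langle x\rangle^{-3+\veps}$ and $\langle x\rangle^{-2+\veps}$.

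The sonic case $|v|=1$ follows the same recipe with a more delicate analysis: writing $\xi = \rho\theta$ and $\alpha = \angle(\theta, v/|v|)$, one has $h_v \simeq \rho(\alpha^2 + \rho^2)$ near the origin, so $1/h_v$ is genuinely anisotropic and no longer belongs to any class $\mc{M}_a^b$. A direct angular integration in polar coordinates yields the Sobolev thresholds $a > -1/2$ and $a > 1/2$ of \eqref{eq.traveling_wave_sonic}, and the pointwise rates \eqref{eq.traveling_wave_sonic_decay} require a dedicated stationary-phase/van der Corput argument isolating the conical region about $\hat\xi = v/|v|$, where the phase degenerates from linear to cubic in $\rho$. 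For $|v| > 1$, the Cherenkov set $Z_v = \{\xi\in\R^3\setminus\{0\} : \langle\xi\rangle = v\cdot\hat\xi\}$ is a nonempty smooth $2$-surface on which $h_v$ vanishes transversely, so $|h_v(\xi)|\simeq\mathrm{dist}(\xi, Z_v)$ locally away from the origin. A finite-energy traveling wave forces $\hat\Gamma_v$ to be locally $L^2$ away from $\xi = 0$, yet the identity $\hat\Gamma_v = -\hat W/h_v$ together with the nonvanishing of $\hat W$ on $Z_v$ yields $|\hat\Gamma_v|^2 \gtrsim \mathrm{dist}(\cdot, Z_v)^{-2}$ on any sufficiently small neighborhood of a point of $Z_v\setminus\{0\}$, which is not locally integrable --- a contradiction.

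I expect the sonic pointwise decay to be the main technical obstacle: the angular degeneracy of $h_v$ at the origin prevents direct use of the symbol calculus of Section \ref{sec.symbol} and forces a careful stationary-phase analysis in the conical region about the direction of motion. The supersonic nonexistence, by contrast, is conceptually short but requires some care to convert the finite-energy hypothesis into the required local $L^2$ bound on $\hat\Gamma_v$ near $Z_v$.
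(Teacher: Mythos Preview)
Your proposal is largely correct and tracks the paper's proof closely: the reduction to $H_v\Gamma_v=-W$, the Fourier solution, the verification of $\dot P_t=0$ via symmetry, the subsonic regularity and decay via the symbol class $\mc{M}_{-1}^{-2}$ and dyadic integration by parts, the sonic regularity via polar integration of $1/h_v$, and the supersonic nonexistence via non-integrability of $|\hat W|^2/|1+|\xi|^2-(v\cdot\hat\xi)^2|^2$ across the Cherenkov surface --- all of this is exactly what the paper does.

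The one place where your plan diverges is the sonic pointwise decay \eqref{eq.traveling_wave_sonic_decay}. You describe it as a ``stationary-phase/van der Corput argument'' in which ``the phase degenerates from linear to cubic in $\rho$.'' That framing is off: in the integral $\int e^{\i x\cdot\xi}m(\xi)\hat W(\xi)\,\d\xi$ the phase $x\cdot\xi$ is linear and nondegenerate; $h_v$ appears not as a phase but in the \emph{multiplier} $m$, and what degenerates is the singularity of $1/(\langle\xi\rangle-v\cdot\hat\xi)$ along the $v$-direction. The paper's argument is not oscillatory-integral analysis at all but a refinement of the same dyadic/integration-by-parts machinery you used in the subsonic case: one factors $H_v^{-1}=(-\Delta)^{-1/2}S_v$, where $S_v$ has symbol $(\langle\xi\rangle-v\cdot\hat\xi)^{-1}$, and then tracks how each $\xi$-derivative of this symbol worsens the angular singularity. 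The key computation is the explicit angular integral $\int_0^\pi \phi\,(2^{-2k}+\phi^2)^{-(1+|\alpha|)}\,\d\phi$ on each dyadic shell $|\xi|\simeq 2^{-k}$, which shows that differentiating the $S_v$-symbol costs not one but \emph{three} low-frequency powers per derivative. Interpolating between consecutive integer orders then gives the fractional rates $-1+\epsilon$ and $-2/3+\epsilon$. A van der Corput approach might also succeed, but it is not needed and is not what the paper does.
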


The proof of Theorem \ref{thm.traveling_wave_reg} is given in Section \ref{sec.inertpf}.

\begin{remark}
Theorem \ref{thm.traveling_wave_reg} implies the following behavior of the profile $\gamma_v$:
\begin{itemize}
\item If $|v| < 1$, both $\real \gamma_v$ and $\imag \gamma_v$ belong to $L^2 (\R^3)$.

\item If $|v| = 1$, only $\real \gamma_v$ must belong to $L^2 (\R^3)$, though $\imag \gamma_v$ is in $\dot{H}^1 (\R^3)$.

\item If $|v| > 1$, neither $\real \gamma_v$ nor $\imag \gamma_v$ must belong to $L^2 (\R^3)$.
\end{itemize}
\end{remark}

A special case of traveling wave solutions \eqref{eq:trawave} is the case $v = 0$, that is, the stationary solutions of \eqref{eq.emodel}.
In this case, $\gamma_0$ is real-valued and satisfies
\[ \gamma_0 = - ( \Delta + 1 )^{-1} W \text{.} \]
This expression implies that $\gamma_0 (x)$ decays at infinity like either $\e^{-|x|}$ or $W$, whichever decays at the slower rate.

\subsection{Stability of Subsonic Inertial Solutions} \label{stabtrav}

Our main aim in this paper is to study the stability of the inertial solutions \eqref{eq:trawave}, in the subsonic case ($|v| < 1$).
Roughly speaking, we prove that if the initial speed of the particle is subsonic and the initial field is close to a corresponding traveling-wave solution then the velocity of the particle converges to an asymptotic (limiting) subsonic velocity and the field converges uniformly to a traveling wave corresponding to that limiting velocity.

A precise statement of our asymptotic stability result is given below. This result implies Theorem \ref{thm.technical0}.

\begin{thm} \label{thm:technical}
Suppose $W \in H^1 (\R^3)$ is sufficiently fast decaying and spherically symmetric, and let
\[ X_0 = 0 \text{,} \qquad P_0 = v_0 \in \R^3 \text{,} \qquad \beta_0 \in H^1 (\R^3) \text{,} \]
be initial data for the system \eqref{eq.emodel}, such that:
\begin{itemize}
\item $|v_0| < 1$, i.e., the particle is initially subsonic.

\item $\hat{W}$ is nowhere vanishing.

\item There is some $\varepsilon_0 > 0$, depending on $| v_0 |$ and $W$, such that 
\[ \| \langle x \rangle^4 (\beta_0 - \gamma_{v_0}) \|_{ L^2 } < \varepsilon_0 \text{.} \]
\end{itemize}
Then, there is a global solution $(X, P = v, \beta)$ to the system \eqref{eq.emodel} with the above initial data, in the sense of Theorem \ref{thm.emodel_gwp}.
Furthermore:
\begin{itemize}
\item There exist an asymptotic velocity $v_\infty \in \R^3$, with $| v_\infty | < 1$, and an asymptotic initial position $\bar{X}_0 \in \R^3$ such that
\begin{align}
\label{eq:vtconv} | v_t - v_\infty | &\lesssim \varepsilon_0 (1 + t)^{-3} \text{,} \\
\label{eq:xtconv} | X_t - v_\infty t - \bar{X}_0 | &\lesssim \varepsilon_0 (1 + t)^{-2} \text{.}
\end{align}
\item If $\nabla (\beta_0-\gamma_{v_0})\in L^{1}(\mathbb{R}^3)$, then for any $\varepsilon > 0$,
\begin{align}
\label{eq.field_decay_re} \| \Re ( \beta_t - \gamma_{v_\infty}^{ \bar{X}_0 + v_\infty t } ) \|_{ L^\infty } &\lesssim (1 + t)^{-\frac{3}{2} }\text{,} \\
\label{eq.field_decay_im} \| \Im ( \beta_t - \gamma_{v_\infty}^{ \bar{X}_0 + v_\infty t } ) \|_{ L^\infty } &\lesssim_\varepsilon (1 + t)^{-1 + \varepsilon} \text{.}
\end{align}
\end{itemize}
\end{thm}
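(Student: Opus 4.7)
The plan is to write the solution as a modulation of a traveling wave plus a dispersive perturbation, and then close a bootstrap argument using dispersive decay for the linearized field evolution in the moving frame. In the co-moving variable $y = x - X_t$, I decompose
\[ \beta_t(x) = \gamma_{\tilde v_t}(x - X_t) + \eta_t(x - X_t), \]
where the modulation velocity $\tilde v_t$ (a priori distinct from the dynamical momentum $v_t = P_t$) and the perturbation $\eta_t$ must both be solved for. Because the family $\{\gamma_v\}$ supplied by Theorem \ref{thm.traveling_wave_reg} has three nontrivial tangent directions $\partial_{v_j}\gamma_{\tilde v_t}$, I fix the gauge by imposing symplectic orthogonality conditions $\omega(\partial_{v_j}\gamma_{\tilde v_t},\eta_t) = 0$ for $j=1,2,3$, which uniquely determine the splitting while $\eta_t$ stays small in the weighted $L^2$ norm with weight $\langle y\rangle^4$ matching the data assumption. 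Differentiating these conditions along the flow yields an ODE-type system for $\dot{\tilde v}_t$ and $\dot v_t$, expressing them as integrals of $\eta_t$ against rapidly decaying test functions built from $W$ and $\gamma_{\tilde v_t}$; combined with the exact equation $\dot v_t = \int \nabla W\,\real \eta_t$ (the traveling-wave force on $\gamma_{\tilde v_t}$ itself vanishes by the profile equation), this gives clean expressions for all modulation derivatives in terms of $\eta_t$.

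Next, I transform the equation for $\eta_t$ using $E_t := U_r^{-1}\eta_t$ as in Section \ref{sec.equations_cov}. In the moving frame the linearized field evolution is generated by $H_{\tilde v_t} = H + i\tilde v_t\cdot\nabla$, whose symbol $h_{\tilde v_t}(\xi) = |\xi|\langle\xi\rangle - \tilde v_t\cdot\xi$ satisfies the subsonic bounds recorded in Section \ref{sec.symbol}: a uniform lower bound $|\xi|^{-1}|h_{\tilde v_t}(\xi)| \geq 1 - |\tilde v_t|$, the non-degeneracy $h_{\tilde v_t,\theta}'(\rho) \simeq \langle\rho\rangle$, and $h_{\tilde v_t,\theta}'' \in \mc{O}^0$. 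Duhamel's formula then gives
\[ E_t = e^{-iH_{\tilde v_t}t}E_0 + \int_0^t e^{-iH_{\tilde v_t}(t-s)}\mathcal{N}_s\,\d s, \]
where $\mathcal{N}_s$ collects source terms proportional to $\dot{\tilde v}_s$, $\dot X_s - \tilde v_s$, and to $\eta_s$ itself.

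The third step is to establish dispersive and weighted local-decay estimates for the one-parameter propagator family $e^{-iH_v t}$, uniformly for $v$ in a compact subset of the open unit ball. Stationary phase on the non-degenerate phase $h_{v,\theta}$, combined with the $\mc{M}_a^b$ symbol calculus, should yield (i) an $L^\infty$ dispersive bound $\|U\,\real(e^{-iH_v t}f)\|_{L^\infty} \lesssim (1+t)^{-3/2}\,\|\langle x\rangle^N f\|_{L^2}$, (ii) a slower estimate $\|\imag(e^{-iH_v t}f)\|_{L^\infty} \lesssim_\varepsilon (1+t)^{-1+\varepsilon}\,\|\langle x\rangle^N f\|_{L^2}$ that reflects the poorer low-frequency behavior already visible in the splitting $H_a^b \times H_{a+1}^b$ of Theorem \ref{thm.emodel_gwp}, and (iii) the sharper weighted local-decay estimate $|\langle\nabla W, U\,\real(e^{-iH_v t}f)\rangle| \lesssim (1+t)^{-4}\,\|\langle x\rangle^N f\|_{L^2}$, obtained by additional integrations by parts in $\rho$ against $h_{v,\theta}'$ using the rapid decay of $\hat W$. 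Estimate (iii) is the crucial input for the $(1+t)^{-3}$ rate of $v_t$.

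Finally, I run a continuity argument on the maximal interval $[0,T^*)$ on which $|\tilde v_t - v_0| \leq \delta$ and $\|\langle y\rangle^4 \eta_t\|_{L^2} \leq C\varepsilon_0$. Plugging the dispersive bounds into the modulation ODE gives $|\dot v_t| + |\dot{\tilde v}_t| \lesssim \varepsilon_0(1+t)^{-4}$, which integrates to the asymptotic velocity $v_\infty$ and the rate \eqref{eq:vtconv}; then $\dot X_t - v_\infty = (v_t - v_\infty) \lesssim \varepsilon_0(1+t)^{-3}$ integrates to \eqref{eq:xtconv} with an asymptotic shift $\bar X_0$. Feeding these back into the Duhamel formula for $E_t$, together with the $L^\infty$ bounds (i)--(ii), delivers the pointwise decay \eqref{eq.field_decay_re}--\eqref{eq.field_decay_im} for $\beta_t - \gamma_{v_\infty}^{\bar X_0 + v_\infty t}$ (after unwinding the $U_r$ transformation). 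Since the bootstrap bounds strictly improve for $\varepsilon_0$ small, $T^* = \infty$.

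The principal obstacle I foresee is that the generator $H_{\tilde v_t}$ is time-dependent, so the frozen-$v$ dispersive estimates do not apply verbatim to the actual propagator of $\eta_t$. The standard remedy is to conjugate by a time-dependent translation and phase that absorbs the slowly varying $\tilde v_t$, reducing the evolution to one with fixed generator plus an error proportional to $\dot{\tilde v}_t$; by the weighted local-decay estimate (iii) this error is already $O(\varepsilon_0(1+t)^{-4})$, hence integrable and bootstrap-compatible. A secondary difficulty is the well-posedness of the modulation ODE, which requires invertibility of the $3\times 3$ Gram matrix built from $\partial_{v_j}\gamma_{\tilde v_t}$; this is precisely where the hypothesis that $\hat W$ is nowhere vanishing enters, since it ensures that $v \mapsto \gamma_v$ is a genuine three-parameter family and the Gram matrix is nondegenerate at $\tilde v = v_0$.
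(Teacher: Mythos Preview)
Your outline follows a standard modulation-plus-orthogonality scheme, but it differs from the paper's approach and, more importantly, leaves the central difficulty unaddressed. The paper does \emph{not} introduce a separate modulation parameter $\tilde v_t$; it simply sets $\delta_t = \beta_t - \gamma_{v_t}^{X_t}$ with the dynamical $v_t = P_t$, so there are no orthogonality conditions at all. Substituting the Duhamel formula for $D_t = U_r^{-1}\delta_t$ into the particle law $\dot v_t = \langle \nabla W^{X_t}, U\Re D_t\rangle$ produces a Volterra equation
\[
\dot v_t = r(t) - \int_0^t M(t-s)\,\dot v_s\,ds,
\]
where $M_{ij}(t) = \Re\langle \partial_i W,\, iU e^{-itH_{v_0}} H_{v_0}^{-2}\partial_j W\rangle$. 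The crucial point is that $M$ has \emph{order-one} $L^1$-norm (it depends only on $W$, not on $\varepsilon_0$), so the feedback term cannot be absorbed by a bootstrap based on smallness alone. Your step ``plugging the dispersive bounds into the modulation ODE gives $|\dot v_t|+|\dot{\tilde v}_t|\lesssim \varepsilon_0(1+t)^{-4}$'' hides exactly this: when you apply your local-decay estimate (iii) to the Duhamel pieces, the forcing terms proportional to $\dot{\tilde v}_s$ (and, in your setup, also to $v_s-\tilde v_s$) regenerate a convolution $\int_0^t K(t-s)\dot{\tilde v}_s\,ds$ with $\|K\|_{L^1}\simeq C_W$ fixed; choosing $\varepsilon_0$ small does not make $C_W<\tfrac12$, and the loop does not close. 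Symplectic orthogonality does not help here because the particle--field coupling is genuinely linear in $\eta$ and cannot be made quadratic by gauge-fixing.

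The paper resolves this by taking the Fourier transform in $t$ and proving that the $3\times3$ matrix $1+\hat M(\omega)$ is invertible for every $\omega\in\mathbb{R}$ (indeed on all of $\overline{\Pi_-}$), so that $\dot v = (1+L)^{-1}r$ inherits the $(1+t)^{-4}$ decay of $r$. Spherical symmetry of $W$ makes $M$ diagonal, and the hypothesis that $\hat W$ is nowhere vanishing is used to show that either $\hat M(\omega)$ is positive-definite (when $\Re\omega=0$) or $\Im\hat M(\omega)$ is sign-definite (when $\Re\omega\neq0$); this is Lemma~\ref{lem:matrixinvertible}. So the non-vanishing of $\hat W$ enters not through a Gram matrix of tangent vectors, as you suggest, but through the spectral invertibility of $1+\hat M$. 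Without this step (or an equivalent argument showing $1+L$ is invertible on the relevant weighted space), the bootstrap you describe cannot be completed.
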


Theorem \ref{thm:technical} is proved in Section \ref{sec.stabpf}.

\begin{remark}
If one assumes less (but not too little) spatial decay for $\beta_0 - \gamma_{v_0}$ in Theorem \ref{thm:technical}, then one can still prove a correspondingly weaker version of Theorem \ref{thm:technical}, in which one obtains a slower rate of convergence from $v_t$ to $v_\infty$.
\end{remark}

\section{Proof of Theorem \ref{thm.traveling_wave_reg}} \label{sec.inertpf}

In this section, we present the proof of Theorem \ref{thm.traveling_wave_reg}, i.e., we establish existence, uniqueness, and decay properties of solutions describing inertial particle motion.

\subsection{Setup and Outline} \label{sec.inertpf_intro}

Much of the strategy for constructing subsonic and sonic solutions is the same.
We begin the proof by substituting \eqref{eq:trawave} into \eqref{eq.emodel}, which yields the traveling wave equation
\begin{align}\label{eq:realtrawave}
- \i v \cdot \nabla \gamma_v = - \lapl \gamma_v + \real \gamma_v + W \text{.}
\end{align}
Making the change of variables $G_v = U_r^{-1} \gamma_v$ (see Section \ref{sec.equations_cov}) and recalling the notation \eqref{eq.Hv}, Eq. \eqref{eq:realtrawave} becomes the complex-linear equation
\begin{equation} \label{eq.traveling_wave_alt} H_v G_v = - W \text{.} \end{equation}
As the transformation from $\gamma_v$ to $G_v$ is invertible, it suffices to solve this equation for $G_v$.

Now, one must invert $H_v$ in order to solve \eqref{eq.traveling_wave_alt} for $G_v$, and hence $\gamma_v$, in terms of $W$.
After Fourier transformation, \eqref{eq.traveling_wave_alt} becomes
\begin{equation} \label{eq.traveling_wave_falt} h_v (\xi) \hat{G}_v (\xi) = - \hat{W} (\xi) \text{,} \end{equation}
where $h_v (\xi)$ is the symbol of $H_v$, given in \eqref{eq.hv}.

Since $h_v$ has a (sharp) lower bound $| \xi |^{-1} | h_v (\xi) | \geq 1 - | v |$, one expects different phenomena in the cases $|v| < 1$, $|v| = 1$, and $|v| > 1$.
To be more specific, consider the operator $\tilde{H}_v = ( - \lapl )^{-1/2} H_v$ which has symbol $| \xi |^{-1} h_v (\xi)$.
Observe that:
\begin{itemize}
\item $0$ lies in the resolvent set of $\tilde{H}_v$ when $|v| < 1$.

\item $0$ lies at the tip of the absolutely continuous spectrum of $\tilde{H}_v$ when $|v| = 1$.

\item $0$ is embedded in the continuous spectrum of $\tilde{H}_v$ when $|v| > 1$.
\end{itemize}
It is thus necessary to treat the subsonic, sonic, and supersonic cases separately when solving Eq. \eqref{eq.traveling_wave_alt} for $G_v$.

If the solution $G_v$ exists, then we can determine the regularity of $G_v$ and $U G_v$.
Since $\real \gamma_v = \real U G_v$ and $\imag \gamma_v = \imag G_v$, this will yield the desired regularities for $\gamma_v$ in the subsonic and sonic settings.

With $G_v$, and hence $\gamma_v$, in hand, we proceed to establish another important property of inertial solutions: namely that the velocity $v_t$ of the particle is constant,  $\equiv v$.
This is equivalent to showing that $\dot{P}_t$ vanishes identically.
To prove this, we turn to the second equation in \eqref{eq.emodel_alt}.
In particular, we show that the integral on the right-hand side necessarily vanishes when $B_t (x)$ is replaced by $G_v (x - vt)$.
This demonstrates that we indeed have inertial solutions.

The final step is to establish decay estimates for $G_v$ and $U G_v$, and hence for $\gamma_v$.
This analysis focuses primarily on the behavior of $\hat{G}_v (\xi)$ near the origin in $\xi$- space, as the strength of the singularity there determines the optimal rate of decay.
In particular, we expect less decay in the sonic case, as compared to the subsonic case, because the symbol $h_v^{-1}$ is more singular near the origin (at least along the $v$-direction) in the sonic case than in the subsonic case.
Furthermore, since the Fourier transform of $\Im \gamma_v$ is more singular near at the origin than the Fourier transform of $\Re \gamma_v$, we expect $\Re \gamma_v$ to have better decay than $\Im \gamma_v$.

Roughly speaking, every power $|x|^{-1}$ in spatial decay corresponds to one derivative in Fourier space, and every such derivative makes the behavior of the Fourier transform near the origin worse.
This observation results in intrinsic upper bounds on the fastest possible rates of decay of $\gamma_v$.

Since the best decay rates tend to involve fractional powers of $|x|^{-1}$, we must perform a somewhat subtle analysis of those decay rates.
The idea is to obtain optimal fractional powers by interpolating between adjacent integer powers.
For this, we derive, in Appendix \ref{sec.decay_spatial}, \emph{frequency-localized} decay estimates, which yield precise information for $G_v$ restricted to each individual Fourier band $|\xi| \simeq 2^k$.
When restricted to these bands, the aforementioned interpolation process is trivialized.
\footnote{Indeed, on the individual Fourier bands, these interpolation estimates reduce to H\"older's inequality; see Propositions \ref{thm.spatial_decay} and \ref{thm.spatial_decay_ex}.}
We can then piece together these estimates on the individual Fourier bands into the desired decay estimates.

\subsection{The Subsonic Case}

Here, we assume that $|v| < 1$.
From our assumptions on $W$, we know that $W \in H_a^b (\R^3)$, for every $a > -3/2$.
Since
\[ h_v ( \xi )^{-1} \lesssim ( 1 - | v | + | \xi | )^{-1} | \xi |^{-1} \lesssim \begin{cases} | \xi |^{-1} & | \xi | \leq 1 \text{,} \\ | \xi |^{-2} & | \xi | \geq 1 \text{,} \end{cases} \]
we can solve for $G_v \in H_{a + 1}^{b + 2} (\R^3)$, as $\hat{G}_v = - h_v^{-1} \hat{W}$, since
\begin{align*}
\| G_v \|_{ H_{a + 1}^{b + 2} }^2 &\lesssim \int_{ | \xi | \leq 1 } | \xi |^{2a + 2} \frac{ | \hat{W} (\xi) |^2 }{ | h_v (\xi) |^2 } d \xi + \int_{ | \xi | \geq 1 } | \xi |^{2b + 4} \frac{ | \hat{W} (\xi) |^2 }{ | h_v (\xi) |^2 } d \xi \lesssim \| W \|_{ H_a^b }^2 \text{.}
\end{align*}
A similar estimate shows that
\[ \| U G_v \|_{ H_a^{b + 2} } \lesssim \| G_v \|_{ H_{a + 1}^{b + 2} } \lesssim \| W \|_{ H_a^b } < \infty \text{.} \]
This implies the regularity properties \eqref{eq.traveling_wave_subsonic} for $\gamma_v$.

To show that $\dot{P}_t$ vanishes, we make use of \eqref{eq.emodel}; in particular,
\begin{align}
\label{eq.Pt_eq} \dot{P}_t = \langle \nabla W^{vt}, \real \gamma_v^{vt} \rangle = \real \left[ i \int_{\R^3} \xi \frac{ |\xi| }{ \langle \xi \rangle } \frac{1}{ h_v (\xi) } | \hat{W} (\xi) |^2 d \xi \right] \text{.}
\end{align}
Since $W \in L^1 (\R^3) \cap H^b (\R^3)$ and $b \geq -1/2$, the integral on the right-hand side is finite.
Moreover, the quantity on the right-hand side within the bracket is purely imaginary, hence $\dot{P}_t$ vanishes identically.

It remains to exhibit the expected decay of $\gamma_v$.
Since $G_v = - H_v^{-1} W$, we can apply Proposition \ref{thm.spatial_decay}, with $f = W$ and $M = H_v^{-1}$.
Since the symbol $h_v^{-1}$ of $H_v^{-1}$ is in the class $\mc{M}_{-1}^{-2}$, Eq. \eqref{eq.spatial_decay} implies that
\[ | \langle x \rangle^\mu G_v (x) | \lesssim \| \langle x \rangle^4 W \|_{ L^2 } \text{,} \qquad \mu \in [0, 2) \text{.} \]
The above implies the same decay rate for $\Im \gamma_v = \Im G_v$.
On the other hand, since $U G_v = - U H_v^{-1} W$, and symbol $U H_v^{-1}$ is now in the class $\mc{M}_0^{-2}$, we have that
\[ | \langle x \rangle^\mu U G_v (x) | \lesssim \| \langle x \rangle^5 W \|_{ L^2 } \text{,} \qquad \mu \in [0, 3) \text{.} \]
This implies the expected decay rate for $\Re \gamma_v = \Re U G_v$ and proves \eqref{eq.traveling_wave_subsonic_decay}.

\subsection{The Sonic Case}

To understand inertial motion at the speed of sound, i.e., for $|v| = 1$, is trickier than for $|v| < 1$, because the quantity $| \xi |^{-1} h_v (\xi)$ is now singular at the origin.
However, this singularity only occurs along one ``bad" direction, the $v$-direction.
Thus, by leveraging the remaining ``good" directions, we still obtain regular inertial solutions.

We again solve for $G_v$, as $\hat{G}_v = - h_v^{-1} \hat{W}$, by requiring that the right-hand side belongs to an appropriate function space.
As $| h_v ( \xi ) |^{-1} \lesssim | \xi |^{-2}$ in the region $| \xi | \geq 1$, we have that
\[ \int_{ | \xi | \geq 1 } | \xi |^{ 2b + 4} \frac{ | \hat{W} (\xi) |^2 }{ | h_v (\xi ) |^2 } d \xi \lesssim \int_{ | \xi | \geq 1 } | \xi |^{2b} | \hat{W} (\xi) |^2 d \xi < \infty \text{.} \]
For the low-frequency region, we have that
\begin{align*}
\int_{ | \xi | \leq 1 } | \xi |^{2 a + 2 } \frac{ | \hat{W} (\xi) |^2 }{ | h_v (\xi) |^2 } d \xi &\leq \| \hat{W} \|_{ L^\infty }^2 \int_{ | \xi | \leq 1 } \frac{ | \xi |^{2 a + 2} }{ | h_v (\xi) |^2 } d \xi \\
&\lesssim \| W \|_{ L^1 }^2 \int_{ | \xi | \leq 1 } \frac{ | \xi |^{2 a + 2} }{ | h_v (\xi) |^2 } d \xi ,
\end{align*}\\
with $a \in (-1/2, 1/2)$,

It remains to show that the integral on the right-hand side is finite.
For this purpose, we introduce spherical coordinates, $(\rho, \theta, \phi)$, on the unit ball, with $\phi \in [0, \pi]$ measuring the angle from the $v$-direction.
Then,
\begin{align*}
\int_{ |\xi| \leq 1 } \frac{ | \xi |^{2 a + 2} }{ | h_v (\xi) |^2 } \d \xi &= \int_0^{2 \pi} \int_0^\pi \int_0^1 \frac{ \rho^{ 2 a + 2 } \sin \phi }{ | \langle \rho \rangle - \cos \phi |^2 } \d \rho \d \phi \d \theta \\
&\lesssim \int_0^\pi \int_0^1 \frac{ \rho^{ 2 a + 2 } \phi }{ ( \rho^2 + \phi^2 )^2 } \d \rho \d \phi \text{,}
\end{align*}
where, in the last step, we used that $\sin \phi \lesssim \phi$, $1 - \cos \phi \simeq \phi^2$, and $\langle \rho \rangle - 1 \simeq \rho^2$ on our domain.
By reparametrizing $\rho = r \cos \alpha$ and $\phi = r \sin \alpha$ (or by integrating directly), we see that the above integral is finite whenever $a > -1/2$.

The above reasoning implies that $G_v \in H_{a+1}^{b+2} (\R^3)$, and hence $U G_v \in H_a^{b+2} (\R^3)$, for any $a > -1/2$.
From this, it follows that $\real \gamma_v$ and $\imag \gamma_v$ satisfy \eqref{eq.traveling_wave_sonic}.

To show that $\dot{P}_t$ vanishes, we again take a look at the right-hand side of \eqref{eq.Pt_eq}.
Like in the subsonic case, it suffices to show that the integral in \eqref{eq.Pt_eq} is finite.
The high-frequency region $|\xi| \geq 1$ is treated in the same fashion as in the subsonic case.
For the low frequencies, $|\xi| \leq 1$, we change coordinates, as before:
\begin{align*}
\left| \int_{ |\xi| \leq 1 } \xi \frac{ |\xi| }{ \langle \xi \rangle } \frac{1}{ h_v (\xi) } | \hat{W} (\xi) |^2 d \xi \right| &\lesssim \| W \|_{ L^1 }^2 \int_0^{2 \pi} \int_0^\pi \int_0^1 \frac{ \rho^3 \sin \phi }{ | \langle \rho \rangle - \cos \phi | } \d \rho \d \phi \d \theta \\
&\lesssim \| W \|_{ L^1 }^2 \int_0^\pi \int_0^1 \frac{ \rho^3 \phi }{ ( \rho^2 + \phi^2 ) } \d \rho \d \phi \\
&< \infty \text{.} 
\end{align*}

To exhibit the expected decay rates, we let $S_v$ be defined as in \eqref{eq.Sv}, i.e., as the Fourier multiplier operator with symbol $( \langle \xi \rangle - v \cdot \hat{\xi} )^{-1}$.
Writing
\[ G_v = - (- \lapl)^{-1/2} S_v W \text{,} \]
we may apply Proposition \ref{thm.spatial_decay_ex}, with $f = W$ and $M = (-\lapl)^{-1/2}$.
Since the symbol $|\xi|^{-1}$ of $M$ is in the class $\mc{M}_{-1}^{-1}$, Eq. \eqref{eq.spatial_decay_ex} implies that
\[ | \langle x \rangle^\mu G_v (x) | \lesssim \| \langle x \rangle^3 W \|_{ L^2 } \text{,} \qquad \mu \in [0, 2/3) \text{.} \]
Moreover, since $U G_v = - (1 - \lapl)^{-1/2} S_v W$, and the symbol $\langle \xi \rangle$ of $( 1 - \lapl )^{-1/2}$ is in the class $\mc{M}_0^{-1}$, then \eqref{eq.spatial_decay_ex} yields
\[ | \langle x \rangle^\mu U G_v (x) | \lesssim \| \langle x \rangle^3 W \|_{ L^2 } \text{,} \qquad \mu \in [0, 1) \text{.} \]
The above decay rates for $G_v$ and $U G_v$ imply \eqref{eq.traveling_wave_sonic_decay}.

\subsection{The Supersonic Case}

Finally, if $|v| > 1$, it follows from the definition of the energy \eqref{eq.energy} that it suffices to show that $\Re \gamma_v$ does not belong to $L^2 (\R^3)$.
This is easily seen from the following expression:
\[ \widehat{ \Re \gamma_v } (\xi) = \widehat{ \Re U G_v } (\xi) = \frac{ \hat{W} (\xi) }{ 1 + | \xi |^2 - ( v \cdot \hat{\xi} )^2 } \text{.} \]
Since the denominator $1 + | \xi |^2 - ( v \cdot \hat{\xi} )^2$ vanishes on a two-dimensional surface, $\real \gamma_v$ does not have finite energy under our assumptions on $W$.

\section{Proof of Theorem \ref{thm:technical}} \label{sec.stabpf}

In this section, we present the proof of Theorem \ref{thm:technical}, i.e., we establish asymptotic stability of subsonic inertial particle motion under suitable hypotheses on the initial conditions.

\subsection{Setup and Outline of Proof} \label{sec.stabpf_intro}

By Theorem \ref{thm.traveling_wave_reg}, there exist inertial solutions $\gamma_v$, with $v \in \R^3$ and $|v| < 1$.
Next, we define
\[ G_v = U_r^{-1} \gamma_v = - H_v^{-1} W  \]
as in Sections \ref{sec.equations_cov} and \ref{sec.inertpf_intro}.

By Theorem \ref{thm.emodel_gwp}, there is a global solution $(X, P, \beta)$ to Eq.\ \eqref{eq.emodel}, with
\[ X_t, P_t \in C ( [0, \infty); \R^3 ) \text{,} \qquad (\real \beta_t, \imag \beta_t) \in C ( [0, \infty); H^1 (\R^3) \times \dot{H}^1 (\R^3) ) \text{.} \]
As in Section \ref{sec.equations_cov}, we define
\[ B_t = U_r^{-1} \beta_t \text{,} \qquad B \in C ( [0, \infty); \dot{H}^1 (\R^3) ) \text{.} \]

Recall that we assumed $|v_0| < 1$.
Suppose $T > 0$ is such that $|v_t| < 1$ for all $t \in [0, T)$, that is, the motion of the particle remains subsonic in this time interval.
It is then reasonable to introduce a field $\delta_{t}$ by setting
\begin{equation} \label{eq.delta} \beta_t = \gamma_{v_t}^{X_t} + \delta_t \text{,} \qquad t \in [0, T) \text{.} \end{equation}
Note that $\delta_t$ describes the deviation of the field $\beta_{t}$ from the inertial solution $\gamma_{v_t}^{X_{t}}$.
As for $\beta_t$, we transform $\delta_t$ by defining $D_t = U_r^{-1} \delta_t$.

We now cast the equations of motion for the field $\beta_t$ in a more convenient form, in terms of $D_t$, as long as the particle motion remains subsonic.
Using the equations \eqref{eq.emodel_alt} for $B_t$ and \eqref{eq.traveling_wave_alt} for $G_{v_t}$, we derive the following evolution equation for $D$,
\begin{align}
\label{eq.emodel_ansatz} \i \dot{D}_t &= H D_t - \i ( \partial_v G_{v_t} )^{X_t} \cdot \dot{v}_t = H D_t + H_{v_t}^{-2} \nabla W^{X_t} \cdot \dot{v}_t \text{,} \qquad t \in [0, T) \text{.}
\end{align}
By Duhamel's principle, this equation implies the following integral equation:
\begin{equation} \label{eq.D_duhamel} D_t = e^{- \i H t } D_0 - \i \int_0^t e^{ -\i H (t - s) } H_{v_s}^{-2}\nabla W^{X_s}\cdot \dot{v}_s \d s \text{.} \end{equation}
The equation for the particle acceleration, $\dot{v}_t$, reads
\begin{align}
\label{eq:acceleration} \dot{v}_t = \int_{ \R^3 } \nabla W^{X_t} U \real D_t \text{.}  
\end{align}

At this point, the proof of Theorem \ref{thm:technical} consists of two main steps:
\begin{enumerate}
\item The first step is to show that the acceleration $\dot{v}_t$ tends to 0, as $t \rightarrow \infty$, and that the particle moves at a subsonic speed for all times.
This will imply, in particular, the existence of an asymptotic (subsonic) velocity $v_\infty$.

\item The second step involves showing that the field $\beta_t$ converges to the field of the corresponding inertial solution, $\gamma_{v_\infty}^{\bar{X}_0 + v_\infty t}$.
\end{enumerate}
These steps are carried out in Sections \ref{sec.acceleration} and \ref{sec.decay_field}, respectively.
Here, we briefly outline the basic ideas underlying our proof.

For the first step, we will derive an integral equation for $\dot{v}$.
Since this equation is based on the ansatz \eqref{eq.delta} and equations \eqref{eq.D_duhamel} and \eqref{eq:acceleration}, one has to impose a bootstrap assumption on $\dot{v}$ that is adapted to its expected decay rate.
In particular, this assumption justifies \eqref{eq.delta} up to some time $T>0$.
Starting from \eqref{eq.D_duhamel} and \eqref{eq:acceleration}, we derive an integral equation (with memory)
\begin{align}
\label{eq.vdot_rel_pre} \dot{v}_t = r (t) - \int_0^t M (t-s) \dot{v}_s \d s \text{,} \qquad t \in [0, T) \text{,}
\end{align}
where $M$ is a $3 \times 3$ matrix-valued function.
This equation has the general form $(1 + L) \dot{v} = r$, where $L$ is an integral operator.
The goal is to show that $1 + L$ can be inverted, and that $r$ is small in a suitable sense.

In order to invert $1 + L$, we perform a Fourier transform in the time variable.
Because $L \dot{v}$ is defined by a convolution in the time variable of $M$ with $\dot{v}$, it becomes a product after Fourier transformation to frequency space.
Formally, $\dot{v}$ can be expressed in terms of $r$ as follows:
\[ \hat{\dot{v}} = ( 1 + \hat{M} )^{-1} \hat{r} \text{.} \]
By inverting the Fourier transform, we succeed in solving for and controlling $\dot{v}$ in terms of $r$.
We note that all these steps need to be justified rigorously---their validity is not apparent, a priori.

In the proof, we have to accomplish the following tasks:
\begin{itemize}
\item First, we prove decay estimates for $M$.
Among other things, these estimates validate the Fourier transform of \eqref{eq.vdot_rel_pre}, as well as the subsequent inverse Fourier transform, which yields an expression of $\dot{v}$ in terms of $r$.

\item Next, one shows that $1 + \hat{M}$ is an invertible matrix.
Here, we use the spherical symmetry of $W$ to show that $\hat{M}$, with respect to an appropriate orthonormal frame, is diagonal.
Subsequently, we use some special properties of $M$ to show that either $1 + \hat{M}$ is positive-definite, or $\Im \hat{M} \neq 0$.

\item Finally, we derive decay estimates for $r$:
\[ r (t) \lesssim (1 + t)^{-4} \text{,} \qquad t \geq 0 \text{.} \]
\end{itemize}
Combining these ingredients, we will prove the following decay estimate for $\dot{v}$:
\[ \vert \dot{v}_t \vert \lesssim (1 + t)^{-4} \text{,} \qquad t \geq 0 \text{.} \]

Much of the analytical work revolves around proving decay estimates for $M$ and $r$.
Here we rely on the following observations:
\begin{itemize}
\item Each matrix element of $M$ and term of $r$ can be expressed in the form
\[ \int_{ \R^3 } e^{-i t h_v (\xi)} |\xi|^l f (\xi) d \xi \text{,} \]
where $v \in \R^3$ is subsonic, i.e., $|v| < 1$.

\item Since $|v| < 1$, the phase $h_v$ has no stationary points away from the origin.
In particular, the radial derivative of $h_v$ is uniformly bounded from below.
\end{itemize}
As a result, one can apply standard integration by parts arguments in the radial variable, with each integration by parts resulting in a factor of $t^{-1}$.
The number of integrations by parts one can take depends on the behavior of the integrand near the origin in frequency space. This is captured by the exponent $l$.
Consequently, the best decay rate one finds for any one of these terms is determined by $l$.
The details underlying these decay estimates can be found in Appendix \ref{sec.decay_radial}.

For some of these terms, one can exploit additional symmetries of the integrand to show that there are no singularities at the origin in frequency space.
For such terms, one can derive an arbitrarily fast polynomial decay in $t$, because the number of integrations by parts one can perform is arbitrarily large.
This observation will be important in controlling $M$.
For details, see Appendix \ref{sec.decay_radial}.

The analysis sketched above enables us to obtain a strictly better estimate on $\dot{v}$ than the one imposed in the bootstrap assumption.
A standard continuity argument yields the desired $t^{-4}$-decay for $\dot{v}$.
This decay implies the existence of an asymptotic velocity $v_\infty$, as well as the rate of convergence of $v_t$ to $v_\infty$ stated in Theorem \ref{thm:technical}.
Moreover, we readily obtain the asymptotic initial position $\bar{X}_0$ appearing in the associated linear trajectory $t \mapsto \bar{X}_0 + v_\infty t$.

It remains to carry out the second main step, which consists of proving convergence of the field.
For this purpose, we study the field $\delta_t$ using the Duhamel formula \eqref{eq.D_duhamel}, and we attempt to control each term on the right-hand side in the space $L^\infty$.
The main tools used to carry out this step are dispersive estimates for the propagator, $e^{-itH}$ (see Section \ref{sec.equations_cov}), which are discussed in Appendix \ref{sec.decay_unif}.

The basic dispersive estimate comes from \cite{gustafson06}, where it is shown, roughly speaking, that if a function $f \in L^1 (\R^3)$ is localized in a single Fourier band, then $\| e^{-i t H} f \|_{ L^\infty }$ decays like $t^{-3/2}$.
From this basic estimate, one can derive similar decay estimates for $e^{-i t H} M f$, where $M$ is a differential operator whose symbol belongs to an appropriate class.
Just like in the integration by parts estimates of Appendix \ref{sec.decay_radial}, the amount of decay in $t$ depends on how singular the symbol of $M$ becomes near the origin in Fourier space.
This can be analyzed for rather general $M$ using interpolation arguments that are trivialized when one considers a single Fourier band.
We take full advantage of this flexibility when applying such estimates to several different operators $M$, as determined by \eqref{eq.D_duhamel} and the definitions of $D$ and $\delta$.

By piecing together decay estimates localized in Fourier space, we arrive at the expected decay estimates, which imply that $\delta_t = \beta_t - \gamma_{v_t}^{X_t}$ decays as desired.
Similar to Theorem \ref{thm.traveling_wave_reg}, since the Fourier transform of $\Re \delta_t$ is better behaved at the origin than the Fourier transform of $\Im \delta_t$, we expect better decay for $\Re \delta_t$ than for $\Im \delta_t$.

In order to complete the proof of Theorem \ref{thm:technical}, we must establish decay for $\beta_t - \gamma_{v_\infty}^{\bar{X}_0 + v_\infty t}$, rather than just for $\delta_t$.
To accomplish this, we combine the estimates alluded to above with a comparison between the traveling waves $\gamma_{v_t}$ and $\gamma_{v_\infty}$.
This last step is relatively straightforward, involving variants of Sobolev estimates for the $G_v$'s in terms of $W$.

\subsection{Behavior of the Acceleration at Large Times} \label{sec.acceleration}

We wish to prove that the acceleration of the particle decays in time, provided the initial deviation, $\delta_0$, of the field from a traveling wave is chosen small enough, as assumed in Theorem \ref{thm:technical}.
More precisely, we prove that
\begin{align}
\label{eq:vdecay} | \dot{v}_t | \lesssim \varepsilon_0 (1 + t)^{-4} \text{,} \qquad t \in [0, \infty) \text{,}
\end{align}
whence the bounds \eqref{eq:vtconv} and \eqref{eq:xtconv}.
We recall that the constant $\varepsilon_0$ appears in the hypotheses of Theorem \ref{thm:technical} and must be sufficiently small.

In order to prove \eqref{eq:vdecay}, we resort to a bootstrap argument, which proceeds as follows.
First, we fix a sufficiently small $\epsilon \propto \varepsilon_0$, depending on $|v_0|$ and $W$.
For an arbitrary but fixed $T > 0$, we make the bootstrap assumption
\begin{equation} \label{eq.bootstrap} \max_{0 \leq t < T} (1 + t)^4 | \dot{v}_t | \leq \epsilon \text{.} \end{equation}
From \eqref{eq:acceleration} we see that, since $\| \delta_0 \|_{ L^2 }$ is assumed to be very small (with respect to $\| W \|_{ \dot{H}^1 }$), \eqref{eq.bootstrap} automatically holds for small $T$.

The goal is to show that \emph{if \eqref{eq.bootstrap} holds, and if $\varepsilon_0$ (and hence $\epsilon$) is sufficiently small, then we can prove a strictly better estimate}
\begin{equation} \label{eq.bootstrap_ex} \max_{0 \leq t < T} (1 + t)^{-4} | \dot{v}_t | \leq \frac{1}{2} \epsilon \text{.} \end{equation}
A standard continuity argument then shows that \eqref{eq.bootstrap} (and hence \eqref{eq.bootstrap_ex}) holds\textit{ for all} $T > 0$.
This yields \eqref{eq:vdecay}.

The next step is to derive a relation for $\dot{v}_t$ on the interval $t \in [0, T)$.
Since $\epsilon$ is small, the bootstrap assumption \eqref{eq.bootstrap} implies that, for any $0 \leq t < T$,
\begin{equation} \label{eq:subsonic} | v_t | \lesssim | v_0 | + \int_0^t | \dot{v}_\tau | d \tau < v_\ast < 1 \text{,} \end{equation}
with $v_\ast$ independent of $T$.
\footnote{In other words, the particle remains ``uniformly subsonic".  In particular, in various estimates that depend on the velocity $v_{*}$, constants will not blow up.}
Consequently, the ansatz \eqref{eq.delta} (and hence \eqref{eq.emodel_ansatz}-\eqref{eq:acceleration}) remains appropriate for all $t \in [0, T)$.
Plugging \eqref{eq.D_duhamel} into \eqref{eq:acceleration} and using the identity $\nabla W^{X_t}= \e^{ -X_t \cdot \nabla } \nabla W$, we find
\begin{align*}
\dot{v}_t &=  r_0(t) - \Re \left\langle \nabla W, \int_0^t \i U \e^{-\i H (t-s) + (X_t - X_s) \cdot \nabla} H_{v_s}^{-2} \nabla W \cdot \dot{v}_s \d s \right\rangle \text{,} \\
r_0 (t) &= \Re \scalar{ \nabla W^{X_t} }{ U \e^{-\i Ht} D_0 } \text{.}
\end{align*}
Next, we expand the second term in the equation for $\dot{v}$ around $v_0$.
Recalling the notation \eqref{eq.Hv}, we can expand the exponential factor as
\begin{align*}
\e^{ -\i H (t-s) + (X_t - X_s) \cdot \nabla } &= \e^{-\i H (t-s) + \int_s^t (v_\tau - v_0) \d \tau \cdot \nabla + (t-s) v_0 \cdot \nabla} \\
&= \e^{- \i (t-s) H_{v_0} } + \e^{- \i (t-s) H_{v_0} } ( \e^{ \int_s^t (v_\tau - v_0) \d \tau \cdot \nabla } - 1 ) \text{.}
\end{align*}
From the above considerations we arrive at
\begin{align}
\label{eq:vdotM} \dot{v}_t &= r (t) - \int_0^t M (t-s) \dot{v}_s \d s \text{,} \qquad t \in [0, T) \text{,}
\end{align}
where $r (t) = r_0 (t) - r_1 (t) - r_2 (t)$, where
\begin{align*}
r_1 (t) &= \Re \left\langle \nabla W, \int_0^t \i U \e^{- \i (t-s) H_{v_0} } ( \e^{\int_s^t (v_\tau - v_0) \d \tau \cdot \nabla} - 1 ) H_{v_s}^{-2} \nabla W \cdot \dot{v}_s \d s \right\rangle \text{,} \\
r_2 (t) &= \Re \left\langle \nabla W, \int_0^t \i U \e^{- \i (t-s) H_{v_0} } ( H_{v_s}^{-2} - H_{v_0}^{-2} ) \nabla W \cdot \dot{v}_s \d s \right\rangle \text{,}
\end{align*}
and where $M$ is the matrix-valued function
\begin{align*}
M_{ij} (t) = \begin{cases} \Re \scalar{ \partial_{x_i} W }{ \i U \e^{-\i t H_{v_0} } H_{v_0}^{-2} \partial_{x_j} W} & t \geq 0 \text{,} \\ 0 & t < 0 \text{.} \end{cases}
\end{align*}

Equation \eqref{eq:vdotM} can be expressed in the form
\[ (1 + L) \dot{v} = r \text{,} \]
where $L$ is the linear operator
\[ Lh (t) = \int_0^t M (t-s) h (s) d s \text{.} \]
The key observation, stated in the following lemma, is that $1 + L$ is invertible.

\begin{lemma} \label{lem:invL}
There exists a $3 \times 3$ matrix function $K$ on $\R$ satisfying
\begin{itemize}
\item $K (t)$ vanishes for all $t < 0$.

\item The following decay estimate holds:
\begin{equation} \label{eq:Kdecay} | K (t) | \lesssim_W (1 + t)^{-4} \text{.} \end{equation}

\item The following identity holds:
\begin{align}
\label{eq:kernelForm} \dot{v}_t = r (t) + \int_0^t K (t-s) r (s) ds \text{.}
\end{align}
\end{itemize}
\end{lemma}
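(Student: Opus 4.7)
The natural approach is to Fourier transform the causal Volterra equation $(1+L)\dot v = r$ in time. Extending $\dot v$ and $r$ by zero to $t<0$ and using that $M$ is supported on $[0,\infty)$, the convolution $Lh = M\ast h$ becomes pointwise multiplication, so formally $(I+\hat M(\tau))\,\widehat{\dot v}(\tau) = \hat r(\tau)$ with $\hat M(\tau)=\int_0^\infty e^{-i\tau t}M(t)\,dt$. One then reads off $K$ from the identity $\hat K(\tau)=(I+\hat M(\tau))^{-1}-I$, and the lemma reduces to three subproblems: (i) enough decay and regularity of $M$ to make $\hat M$ smooth, vanishing at infinity, with several integrable derivatives; (ii) uniform invertibility of $I+\hat M(\tau)$ on $\R$; (iii) extracting the stated decay and causal support of $K$ from the properties of $(I+\hat M)^{-1}-I$.

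For step (i), I would show $|M(t)|\lesssim_N (1+t)^{-N}$ for every $N$. Each matrix entry of $M$ is an oscillatory integral of the form $\int_{\R^3}e^{-it h_{v_0}(\xi)}g(\xi)\,d\xi$, with $g$ of symbol type determined by the factors $\partial_{x_i}W$, $U$, $H_{v_0}^{-2}$, $\partial_{x_j}W$. Since $|v_0|<1$, the radial derivative of the phase satisfies $h_{v_0,\theta}^\prime(\rho)\simeq\langle\rho\rangle$ by \eqref{eq.phase_subsonic}, so there are no stationary points and repeated radial integration by parts---the mechanism of Appendix \ref{sec.decay_radial}---produces arbitrary polynomial decay. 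The two factors $\partial_{x_i}W$ and $\partial_{x_j}W$ contribute Fourier multipliers $\xi_i$, $\xi_j$ which cancel the apparent $|\xi|^{-2}$ origin singularity of $H_{v_0}^{-2}$, so the many integrations by parts go through cleanly. Fast polynomial decay of $M$ then forces $\hat M$ to be $C^4$, bounded, vanishing at $\tau\to\pm\infty$, with all derivatives of order $\leq 4$ in $L^1(\R)$.

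Step (ii) is the principal obstacle and uses spherical symmetry of $W$ in an essential way. Since $W$ is radial while $H_{v_0}$ has a distinguished direction, $M(t)$ commutes with rotations about $v_0$, and hence so does $\hat M(\tau)$; in an orthonormal frame adapted to $v_0$ this forces $\hat M(\tau)$ to be diagonal, with a longitudinal eigenvalue $m_\parallel(\tau)$ and a doubly-repeated transverse eigenvalue $m_\perp(\tau)$. I would write each $m_\bullet(\tau)$ as an explicit integral involving $|\hat W(\xi)|^2$ and $h_{v_0}(\xi)^{-2}$, and use the Plemelj formula $(x-i0)^{-1} = \mathrm{p.v.}\,x^{-1}+i\pi\delta(x)$ together with the fact that $h_{v_0}\geq 0$ to establish the dichotomy: for those $\tau$ at which the surface $\{h_{v_0}(\xi)=-\tau\}$ is nonempty, $\Im m_\bullet(\tau)\neq 0$, so $1+m_\bullet(\tau)\neq 0$ trivially; for the remaining $\tau$, $\Im m_\bullet(\tau)=0$ and a direct positivity calculation gives $1+\Re m_\bullet(\tau)>0$, with strict positivity ensured by the hypothesis that $\hat W$ is nowhere vanishing. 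Together with $m_\bullet(\tau)\to 0$ at infinity, this yields a uniform lower bound $|1+m_\bullet(\tau)|\geq c>0$, where $c$ depends only on $|v_0|$ and $W$.

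For step (iii), set $\hat K(\tau)=(I+\hat M(\tau))^{-1}-I$. Steps (i) and (ii) together guarantee that $\hat K$ is smooth, bounded, vanishing at infinity, and that its derivatives up to order four are in $L^1(\R)$. Four integrations by parts in $K(t)=(2\pi)^{-1}\int_\R e^{it\tau}\hat K(\tau)\,d\tau$ then give $|K(t)|\lesssim(1+t)^{-4}$. For causality, i.e.\ $K(t)=0$ when $t<0$: since $M$ is supported on $[0,\infty)$, $\hat M$ extends to a bounded holomorphic matrix-valued function on $\{\Im\tau<0\}$ vanishing as $|\tau|\to\infty$; the uniform lower bound $|1+\hat M|\geq c$ on $\R$ combined with the maximum modulus principle applied to each eigenvalue forces $|1+\hat M|\geq c$ throughout the closed lower half-plane, so $(I+\hat M)^{-1}-I$ is bounded and holomorphic there, and the matrix-valued Paley-Wiener theorem yields $\mathrm{supp}\,K\subset[0,\infty)$. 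Inverse Fourier transforming $\widehat{\dot v}=\hat r+\hat K\hat r$ then gives the representation \eqref{eq:kernelForm}.
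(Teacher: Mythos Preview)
Your overall strategy coincides with the paper's: Fourier-transform the Volterra equation, establish decay of $M$, prove invertibility of $I+\hat M$ by diagonalizing via spherical symmetry and a sign analysis, define $K$ as the inverse transform of $(I+\hat M)^{-1}-I$, and obtain causality from analyticity in the lower half-plane. However, two of your justifications have genuine gaps.

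In step (i), cancellation of the $|\xi|^{-2}$ singularity of $H_{v_0}^{-2}$ by $\xi_i\xi_j$ leaves an amplitude behaving like $|\xi|\cdot\hat\xi_i\hat\xi_j\cdot(\text{bounded angular})$ near the origin; this vanishes to first order but is not smooth through $\xi=0$. Radial integration by parts as in Proposition~\ref{thm.dispersive_subsonic} with $l=1$ then gives only $|M(t)|\lesssim t^{-4}$, since boundary terms at $\rho=0$ obstruct further gains. That is not enough: you need $t^4M(t)\in L^1$ to make $\hat M$ four times differentiable, i.e.\ at least $(1+t)^{-5-\varepsilon}$ decay. The paper's Proposition~\ref{lem:decayestimates} supplies arbitrary polynomial decay via a parity trick you are missing: since $M_{ij}$ is the \emph{imaginary part} of an integral whose radial amplitude is even and whose radial phase $\rho\mapsto\rho(\langle\rho\rangle-v_0\cdot\theta)$ is odd, the radial integral over $(0,\infty)$ can be extended to all of $\R$, removing the boundary at $\rho=0$ and permitting unlimited integrations by parts.

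In step (iii), your maximum-modulus argument is circular. The maximum principle gives upper bounds; to turn the lower bound $|1+m_\bullet|\geq c$ on $\R$ into a lower bound on $\Pi_-$ you would apply it to $(1+m_\bullet)^{-1}$, which presupposes $1+m_\bullet\neq 0$ in $\Pi_-$ --- precisely what you are trying to prove. The paper instead carries out the sign analysis of step (ii) directly for \emph{every} $z$ with $\Im z\leq 0$ (Lemma~\ref{lem:matrixinvertible}): for $\Re z\neq 0$ one shows $\Im\hat M_{ii}(z)$ has a definite sign, and for $\Re z=0$ one shows $\hat M_{ii}(z)>0$, throughout the closed half-plane. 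With invertibility established on all of $\bar\Pi_-$, causality of $K$ then follows by shifting the contour in the inverse Fourier integral to $\Im\omega=-y$ and letting $y\to\infty$.
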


\begin{proof}
See Section \ref{sec:invL}.
\end{proof}

In addition, we will also need the following estimates on $r$.

\begin{lemma} \label{lem:hdecay}
For any $t \in [0, T)$, the functions $r_i (t)$ satisfy
\begin{align}
\label{eq:hdecay_r0} | r_0 (t) | &\lesssim_W (1 + t)^{-4} \norm{ \langle x \rangle^4 \delta_0 }_{ L^2 } \leq \varepsilon_{0} (1+t)^{-4},\\
\label{eq:hdecay_r1} | r_1 (t) | &\lesssim_W \epsilon^2 (1 + t)^{-4} \text{,} \\
\label{eq:hdecay_r2} | r_2 (t) | &\lesssim_W \epsilon^2 (1 + t)^{-4} \text{.}
\end{align}
\end{lemma}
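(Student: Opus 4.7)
The plan is to reduce each $r_i(t)$ to an oscillatory integral in Fourier space and apply the radial integration-by-parts framework from Appendix \ref{sec.decay_radial}. By \eqref{eq:subsonic}, $|v_s|\le v_\ast<1$ uniformly on $[0,T)$, so the phases $h_{v_s}$ (and $h=h_0$) have radial derivatives bounded below away from $\xi=0$ by \eqref{eq.phase_subsonic}. Each radial integration by parts therefore gains a factor of $t^{-1}$ at the cost of one power of $|\xi|$ near the Fourier origin; the $(1+t)^{-4}$ target is reached provided the integrand is mild enough at $\xi=0$ for (at least) four such steps, which will be the case once all operators $U$, $\nabla$, $H_{v}^{-2}$ are accounted for.

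For $r_0$, a Plancherel computation yields
\begin{equation*}
r_0(t)=\Re\int_{\R^3}e^{-it|\xi|\langle\xi\rangle-i\xi\cdot X_t}\,\frac{i\xi\,|\xi|}{\langle\xi\rangle}\,\hat{W}(\xi)\,\overline{\hat{D}_0(\xi)}\,d\xi.
\end{equation*}
Near $\xi=0$, $\hat{D}_0$ has at worst a $|\xi|^{-1}$ singularity, inherited from $U^{-1}$ acting on $\Re\delta_0$, while the prefactor contributes $|\xi|^2$, giving an overall vanishing of order $|\xi|^1$. Four radial IBPs then produce the factor $(1+t)^{-4}$, with the resulting volume terms controlled by Cauchy--Schwarz against $\|\langle x\rangle^4\delta_0\|_{L^2}\le\varepsilon_0$; the smoothness and rapid decay of $W$ tame the high-frequency region, yielding \eqref{eq:hdecay_r0}.

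For $r_1$, expand the difference as $e^{Y_s\cdot\nabla}-1=\int_0^1 Y_s\cdot\nabla\,e^{\theta Y_s\cdot\nabla}d\theta$, with $Y_s:=\int_s^t(v_\tau-v_0)\,d\tau$. The bootstrap assumption \eqref{eq.bootstrap} gives $|v_\tau-v_0|\lesssim\epsilon$ and hence $|Y_s|\lesssim\epsilon(t-s)$. Substituting and applying the radial oscillatory-integral bound associated to the phase $h_{v_0}$ one obtains, for any $N$,
\begin{equation*}
|r_1(t)|\lesssim\int_0^t|\dot{v}_s|\,|Y_s|\,(1+t-s)^{-N}\,ds\lesssim\epsilon^2\int_0^t(1+s)^{-4}(t-s)(1+t-s)^{-N}\,ds,
\end{equation*}
which is $\lesssim\epsilon^2(1+t)^{-4}$ once $N$ is large (splitting the $s$-integral into $[0,t/2]$ and $[t/2,t]$). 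Arbitrarily large $N$ is legitimate here because the extra $\nabla$ inserted by $Y_s\cdot\nabla$ supplies a power of $|\xi|$ at the origin, so any number of radial IBPs is admissible. The term $r_2$ is handled analogously: write $H_{v_s}^{-2}-H_{v_0}^{-2}=(v_s-v_0)\cdot\int_0^1\partial_v H_{v_0+\theta(v_s-v_0)}^{-2}d\theta$, use $|v_s-v_0|\lesssim\epsilon$ from the bootstrap, and observe that the symbol of $\partial_v H_v^{-2}$ lies in a class $\mc{M}_a^b$ with behavior at the origin mild enough, via the subsonic bound \eqref{eq.phase_subsonic}, to permit the same radial IBP scheme.

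The main technical obstacle is the bookkeeping: one must pin down, for each term, the exact order of vanishing (or singularity) of the integrand at $\xi=0$ after composing $U$, $H_v^{-2}$, $\nabla$, and the exponential shifts, and match it with the number of radial IBPs needed to produce the $(1+t)^{-4}$ factor. The symbol classes $\mc{M}_a^b$ and $\mc{O}^b$ from Section \ref{sec.symbol} are tailored for exactly this accounting, and the subsonic lower bound on $h'_{v,\theta}$ in \eqref{eq.phase_subsonic} ensures that constants in the IBP estimates are uniform in $s$. Once the low-frequency accounting is settled, the high-frequency contributions are innocuous (they decay like any power of $(1+t)^{-1}$ by the smoothness of $\hat{W}$), and the $s$-convolutions close with the claimed $\epsilon^2(1+t)^{-4}$ bound.
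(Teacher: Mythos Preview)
Your overall strategy matches the paper's: express each $r_i$ as an oscillatory integral, count powers of $|\xi|$ at the origin, and invoke the radial integration-by-parts machinery of Appendix~\ref{sec.decay_radial}. The expansions you use for $r_1$ and $r_2$ are mild variants of the paper's (the paper expands $e^{\int_s^t(v_\tau-v_0)d\tau\cdot\nabla}-1$ via a $\tau$-integral and then writes $v_\tau-v_0=\int_0^\tau\dot v_{s'}\,ds'$, obtaining two explicit factors of $\dot v$; you instead pull out $|Y_s|\lesssim\epsilon(t-s)$). Both routes close.

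There is, however, a genuine gap in how you handle the \emph{spatial shifts}. In your formula for $r_0$ the factor $e^{-i\xi\cdot X_t}$ sits in the amplitude, and in $r_1$ the factor $e^{\theta Y_s\cdot\nabla}$ contributes $e^{i\theta Y_s\cdot\xi}$. If you run radial IBP with phase $h_0$ (or $h_{v_0}$) and leave these exponentials in the amplitude, each $\partial_\rho$ hitting them produces a factor of $|X_t|\sim t$ (respectively $|Y_s|\lesssim t-s$), which cancels the $t^{-1}$ gain and the argument collapses. The paper's fix is to \emph{absorb the shift into the phase}: for $r_0$ one writes the total phase as $t\,h_{\bar v}(\xi)$ with $\bar v=X_t/t$, and then the crucial observation \eqref{eq:averageX} that $|\bar v|<v_\ast<1$ makes \eqref{eq.phase_subsonic} applicable. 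The analogous move for $r_1$ replaces $h_{v_0}$ by $h_{\bar v}$ with $\bar v$ a convex combination of $v_0$ and a time-average of $v_\tau$, again subsonic. You must make this step explicit; without it the IBP scheme does not apply as stated.

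A smaller point: your claim that ``any number of radial IBPs is admissible'' for $r_1$ is not correct within the framework of Proposition~\ref{thm.dispersive_subsonic}. After inserting $Y_s\cdot\nabla$, the integrand behaves like $|\xi|^2$ at the origin, i.e.\ $l=2$, which yields $(1+t-s)^{-5}$ and no better from that proposition. Fortunately $N=5$ is exactly enough for your convolution estimate to give $\epsilon^2(1+t)^{-4}$, so the conclusion survives; but the justification you give for large $N$ is wrong. (The paper likewise obtains $(1+t-s)^{-5}$ here and remarks in a footnote that faster decay is available only via the evenness trick of Proposition~\ref{lem:decayestimates}.)
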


\begin{proof}
See Section \ref{sec:remainder}.
\end{proof}

Applying Lemma \ref{lem:hdecay} and recalling the assumptions on $\delta_0$ and $W$ in Theorem  \ref{thm:technical}, we obtain
\[ | r (t) | \lesssim \varepsilon_0 ( 1 + t )^{-4} + \epsilon^2 ( 1 + t )^{-4} \text{.} \]
Using \eqref{eq:Kdecay} and \eqref{eq:kernelForm}, we have that
\begin{align*}
| \dot{v}_t | &\lesssim ( \varepsilon_0 + \epsilon^2 ) \int_0^t (1 + t - s)^{-4} (1 + s)^{-4} ds + ( \varepsilon_0 + \epsilon^2 ) (1 + t)^{-4} \\
&\lesssim ( \varepsilon_0 + \epsilon^2 ) (1 + t)^{-4} \text{,}
\end{align*}
for all $t<T$.
Choosing $\epsilon \propto \varepsilon_0$ sufficiently small, we arrive at the strictly better estimate
\[ | \dot{v}_t | \leq \frac{\epsilon}{2} (1 + t)^{-4} \text{,} \]
provided $\varepsilon_0$ is chosen sufficiently small.
This completes the bootstrap argument and hence proves \eqref{eq:vdecay}.

The bounds \eqref{eq:vtconv} and \eqref{eq:xtconv} follow easily from \eqref{eq:vdecay}.
By \eqref{eq:vdecay}, $| \dot{v} |$ is integrable on $[0, \infty)$. Hence there exists an asymptotic velocity,
\[ v_\infty = v_0 + \lim_{t \nearrow \infty} \int_0^t \dot{v}_{s} ds \text{.} \]
Our hypotheses on $v_0$ and our bounds on $\dot{v}$ imply that $| v_\infty | < 1$
and show that
\begin{align*}
| v_t - v_\infty | \leq \int_t^\infty | \dot{v}_s | ds \lesssim \varepsilon_0 (1 + t)^{-3} \text{,}
\end{align*}
which proves \eqref{eq:vtconv}.

To prove existence of an asymptotic initial position, $\bar{X}_0$, we show that
\[ Y_t = X_t - v_{\infty} t \]
has a limit, as $t \nearrow \infty$.
For any $0 < t_1 < t_2$,
\begin{align*}
Y_{t_2} - Y_{t_1} = \int_{t_1}^{t_2} ( v_s - v_\infty ) ds \text{.}
\end{align*}
This, together with \eqref{eq:vtconv}, implies that $| Y_{t_2} - Y_{t_1} | \rightarrow 0$, as $t_1, t_2 \nearrow \infty$.
It follows that $Y_t$ has a limit $\bar{X}_0$, as $t \nearrow \infty$, and
\begin{align}\label{eq:barX}
| Y_t - \bar{X}_0 | \leq \int_{t}^{\infty} | v_s - v_\infty | ds \lesssim \varepsilon_0 (1 + t)^{-2} \text{,}
\end{align}
which completes the proof of \eqref{eq:xtconv}.

\subsubsection{Proof of Lemma \ref{lem:invL}} \label{sec:invL}

The main idea is to apply the Fourier transform in the $t$-variable to \eqref{eq:vdotM}, but some care is required to justify this.
We begin by exhibiting some rather straightforward properties of $M$.
We note that
\begin{align}
\label{eq:Momega} M_{ij} (t) &= \Re \scalar{ \xi_i \hat{W} (\xi) }{ \i | \xi | \langle \xi \rangle^{-1} \e^{-\i t h_{v_0} (\xi)} h_{v_0}^{-2} (\xi) \xi_j \hat{W} (\xi) } \text{,} \\
\notag \hat{M}_{ij} (\omega) &= \int_0^\infty \e^{- \i \omega t} \Re \scalar{ \xi_i \hat{W} (\xi) }{ \i | \xi | \langle \xi \rangle^{-1} \e^{-\i t h_{v_0} (\xi)} h_{v_0}^{-2} (\xi) \xi_j \hat{W} (\xi) } \d t
\end{align}

\begin{lemma} \label{lem:Momega}
The following properties hold.
\begin{itemize}
\item $M$ satisfies the estimate
\begin{equation} \label{eq.M_est} | M (t) | \lesssim_W ( 1 + t )^{-6} \text{,} \qquad t \in [0, \infty) \text{.} \end{equation}

\item $\hat{M}$ extends to an analytic function in the lower half-plane, $\Pi_-$.
Moreover, this extension is continuous on the closure $\bar{\Pi}_-$.

\item For any $0 \leq l \leq 4$ and any $z = \omega - i y \in \bar{\Pi}_- \setminus \{ 0 \}$,
\begin{equation} \label{eq.Mhat_est} \left| \left( \frac{d}{d \omega} \right)^l \hat{M} (\omega - i y) \right| \lesssim_W ( 1 + | z | )^{-2} \text{.} \end{equation}
\end{itemize}
\end{lemma}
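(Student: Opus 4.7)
The starting point for all three parts is the Fourier representation of $M_{ij}(t)$ obtained from \eqref{eq:Momega} via Plancherel:
\[
M_{ij}(t) = \int_{\R^3} \xi_i \xi_j\, u(\xi)\, h_{v_0}^{-2}(\xi)\, |\hat{W}(\xi)|^2\, \sin\!\bigl(t\, h_{v_0}(\xi)\bigr)\, d\xi,
\]
where $u(\xi) = |\xi|\langle\xi\rangle^{-1}$. I will repeatedly exploit four structural features: (i) the prefactor $\xi_i \xi_j$ cancels the $|\xi|^{-2}$ singularity of $h_{v_0}^{-2}$ at the Fourier origin; (ii) $u$ contributes an additional factor of $|\xi|$ there; (iii) by \eqref{eq.phase_subsonic} and $|v_0| < 1$, the radial derivative $h_{v_0,\omega}'(\rho)$ is bounded above and bounded below away from $0$, uniformly in $\omega \in \Sph^2$; (iv) $\hat{W}$ is smooth with Schwartz-type decay, since $W$ is sufficiently fast decaying. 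Note at the outset that $M(0) = 0$ from $\sin 0 = 0$, a cancellation that will be crucial in part (3).

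For part (1), the plan is to pass to spherical coordinates $(\rho,\omega)$ with polar axis aligned with $v_0$. By the spherical symmetry of $W$, $\hat{W}(\rho\omega) = W_0(\rho)$ depends only on $\rho$. Together with the radial Jacobian $\rho^2$, the integrand then vanishes to order $\rho^3$ at the origin (modulo smooth angular data) and decays rapidly as $\rho \to \infty$. I then apply the radial oscillatory integration-by-parts machinery of Appendix \ref{sec.decay_radial}: each IBP in $\rho$ divides by $h_{v_0,\omega}'$ (bounded above and below), produces a factor $t^{-1}$, and contributes no boundary term at $\rho = \infty$ (Schwartz decay) nor, for the first several stages, at $\rho = 0$ (vanishing there). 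Carrying this out six times yields $|M(t)| \lesssim (1+t)^{-6}$ for $t \geq 1$, and absolute integrability of the integrand supplies the matching bound for small $t$.

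Part (2) is a soft consequence of part (1). Since $M \in L^1([0,\infty))$, the integral $\hat{M}(z) = \int_0^\infty e^{-izt} M(t)\, dt$ converges absolutely on $\bar{\Pi}_-$ (there $|e^{-izt}| = e^{-yt} \leq 1$ for $t \geq 0$ with $y := -\Im z \geq 0$). Dominated convergence with dominating function $t(1+t)^{-6}$ legitimizes differentiation in $z$ inside the integral on $\Pi_-$, giving analyticity; a further application of dominated convergence gives continuity up to $\bar{\Pi}_-$.

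For part (3), set $G_l(t) := (-it)^l M(t)$, so that $(d/d\omega)^l \hat{M}(z) = \int_0^\infty e^{-izt} G_l(t)\, dt$. For $|z| \leq 1$ the bound is immediate from $\int_0^\infty t^l (1+t)^{-6}\, dt < \infty$ for $l \leq 4$. For $|z| \geq 1$ I integrate by parts twice in $t$. Using $M(0) = 0$ together with the $t^l$ prefactor, the stage-$1$ boundary $G_l(0)/(iz)$ vanishes for every $l$, and the stage-$2$ boundary $G_l'(0)/(iz)^2$ reduces to $M'(0)/(iz)^2 \cdot \delta_{l,0}$, which is already $O(|z|^{-2})$. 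The remaining integral $(iz)^{-2}\int e^{-izt} G_l''(t)\, dt$ is also $O(|z|^{-2})$, provided $G_l'' \in L^1([0,\infty))$. The main obstacle is precisely this $L^1$ estimate for $l = 0,\dots,4$: each $\partial_t$ applied to the oscillatory integral representing $M$ brings out a factor of $h_{v_0}$, partially eroding the cancellation at the Fourier origin that powered part (1). To address this I will redo the radial IBP analysis for $M^{(k)}$ with $k = 1, 2$: the integrand of $M^{(k)}$ still vanishes at $\rho = 0$ to reduced but positive order, and the uniform lower bound on $h_{v_0,\omega}'$ leaves enough room to produce integrable bounds on $t^{l-2+k} M^{(k)}(t)$ for the relevant $l, k$. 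Summing the contributions yields $\|G_l''\|_{L^1} < \infty$ and hence the claimed $(1+|z|)^{-2}$ bound on $(d/d\omega)^l \hat{M}(z)$.
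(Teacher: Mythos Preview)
Your proposal is correct and follows essentially the same approach as the paper: radial integration by parts for the $t^{-6}$ decay of $M$ (the paper packages this as Proposition~\ref{lem:decayestimates}, using the evenness of $\hat W$ to extend the radial integral to the full line and thereby avoid all boundary terms at $\rho=0$), the resulting $L^1$ bound for the analyticity in part~(2), and two integrations by parts in $t$ for the $(1+|z|)^{-2}$ decay in part~(3). One small correction: the factors of $h_{v_0}$ brought down by $\partial_t$ actually \emph{improve} rather than erode the vanishing at the Fourier origin, since $h_{v_0}(\xi)\sim|\xi|$ there; the only cost is at high frequencies, where the rapid decay of $\hat W$ absorbs it, so your plan for part~(3) goes through even more easily than you anticipate.
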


\begin{proof}
The decay estimate \eqref{eq.M_est} follows from repeated integrations by parts. More precisely, we apply Proposition \ref{lem:decayestimates}, with $f = W$, $v = v_0$, and $N = 6$.
In particular, $M$ is supported on the half-line $[0, \infty)$ and belongs to $L^2(\mathbb{R}) \cap L^1(\mathbb{R})$.
This implies that $\hat{M}$ can be analytically continued to the lower-half plane $\Pi_-$, and that it is continuous on $\bar{\Pi}_-$.

Next, for any $z = \omega - i y \in \CC \setminus \{ 0 \}$, with $y \leq 0$, Eq. \eqref{eq:Momega} implies that
\begin{align*}
z \hat{M}_{ij} (z) &= \i \int_0^\infty [ \partial_t \e^{-\i z t} ] \Re \scalar{ \xi_i \hat{W} (\xi) }{ \i | \xi | \langle \xi \rangle^{-1} \e^{-\i t h_{v_0} (\xi)} h_{v_0}^{-2} (\xi) \xi_j \hat{W} (\xi) } \d t \\
&= -\i \Re \scalar{ \xi_i \hat{W} (\xi) }{ \i | \xi | \langle \xi \rangle^{-1} h_{v_0}^{-2} (\xi) \xi_j \hat{W} (\xi) } \\
&\qquad - \i \int_0^\infty \e^{-\i z t} \Re \scalar{ \xi_i \hat{W} (\xi) }{ | \xi | \langle \xi \rangle^{-1} \e^{-\i t h_{v_0} (\xi)} h_{v_0}^{-1} (\xi) \xi_j \hat{W} (\xi) } \d t \text{.}
\end{align*}
Using integration by parts (see Proposition \ref{lem:decayestimates}), one shows that the ($t$-)integrand in the second term on the right-hand side decays nicely, as $t \nearrow \infty$.
The first term on the right-hand side vanishes, since the $L^2$-inner product is purely imaginary.
Thus, we can integrate by parts again, and this yields
\begin{align*}
z^2 \hat{M}_{ij} (z) &= - \Re \scalar{ \xi_i \hat{W} (\xi) }{ | \xi | \langle \xi \rangle^{-1} h_{v_0}^{-1} (\xi) \xi_j \hat{W} (\xi) } \\
&\qquad + \int_0^\infty \e^{-\i z t} \Re \i \scalar{ \xi_i \hat{W} (\xi) }{ | \xi | \langle \xi \rangle^{-1} \e^{-\i t h_{v_0} (\xi)} \xi_j \hat{W} (\xi) } \d t \text{.}
\end{align*}
The second term on the right is well defined by Proposition \ref{lem:decayestimates}.
Moreover, by Plancherel's theorem, the first term on the right side is bounded by $\| W \|_{ L^2 }^2$.

This proves \eqref{eq.Mhat_est} for $l = 0$.
If $l > 0$, then
\[ | z^2 \partial_\omega^l \hat{M}_{ij} (z) | = \left| \int_0^\infty [ \partial_t^2 \e^{-\i z t} ] t^l \Re \scalar{ \xi_i \hat{W} (\xi) }{ \i | \xi | \langle \xi \rangle^{-1} \e^{-\i t h_{v_0} (\xi)} h_{v_0}^{-2} (\xi) \xi_j \hat{W} (\xi) } \d t \right| \text{.} \]
By \eqref{eq.M_est}, the integrand is integrable in $t$.
From here, we can integrate by parts twice, as in the $l = 0$ case, and complete the proof of \eqref{eq.Mhat_est}.
\end{proof}

Next, we define the auxiliary functions
\begin{align*}
E_v (t) = \begin{cases} \dot{v}_t & 0 \leq t < T \text{,} \\ 0 & t < 0 \text{ or } t \geq T \text{,} \end{cases} \qquad E_r (t) = \begin{cases} r (t) & 0 \leq t < T \text{,} \\ 0 & t < 0 \text{,} \\ \int_0^T M (t - s) \dot{v}_s ds & t \geq T \text{.} \end{cases}
\end{align*}
From \eqref{eq:vdotM}, we obtain the equation
\[ E_v = E_r - M \ast E_v \text{.} \]
By definition, $E_v$ is compactly supported.
Moreover, Lemma \ref{lem:Momega} implies that $E_r (t)$ is $O( t^{-6} )$, as $t \nearrow \infty$.
Thus, we can apply the Fourier transform to obtain
\begin{equation} \label{eq.fourier_pre} [ 1 + \hat{M} (\omega) ] \hat{E}_v (\omega) = \hat{E}_r (\omega) \text{,} \qquad \omega \in \R \text{.} \end{equation}

The matrix on the left-hand side turns out to be invertible.

\begin{lemma} \label{lem:matrixinvertible}
Let $z \in \C$, with $\Im z \leq 0$.
\begin{itemize}
\item If $\Re z > 0$, then $\Im \hat{M} (z)$ is strictly negative definite.

\item If $\Re z < 0$, then $\Im \hat{M} (z)$ is strictly positive definite.

\item If $\Re z = 0$, then $\hat{M} (z)$ is strictly positive definite.
\end{itemize}
In particular, $1 + \hat{M}$ is invertible everywhere on $\bar{\Pi}_-$.
\end{lemma}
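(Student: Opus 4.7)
The plan is to derive an explicit Fourier-integral representation of the matrix $\hat{M}(z)$ valid on all of $\bar{\Pi}_-$ and then read off the required definiteness properties directly, handling the boundary $\Im z = 0$ by continuity and a Sokhotsky--Plemelj argument. Starting from \eqref{eq:Momega} and using that $\hat{W}$ is real-valued and radial (since $W$ is real and spherically symmetric), Parseval combined with $\widehat{\partial_{x_j} W}(\xi) = \i\xi_j \hat{W}(\xi)$ yields, after a short direct computation,
\begin{align*}
M_{ij}(t) = c \int_{\RR^3} \xi_i\, \xi_j\, \hat{W}(\xi)^2\, \frac{|\xi|}{\langle\xi\rangle\, h_{v_0}(\xi)^2}\, \sin\!\bigl(t\, h_{v_0}(\xi)\bigr)\, \d\xi,
\end{align*}
for some absolute constant $c > 0$ fixed by the Fourier convention. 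For $z \in \Pi_-$ the decay estimate \eqref{eq.M_est} legitimates Fubini, and the elementary identity $\int_0^\infty \e^{-\i z t}\sin(\alpha t)\,\d t = \alpha/(\alpha^2 - z^2)$ (valid when $\Im z < 0$ and $\alpha > 0$) then gives the clean formula
\begin{align*}
\hat{M}_{ij}(z) = c \int_{\RR^3} \frac{\xi_i\, \xi_j\, \hat{W}(\xi)^2\, |\xi|}{\langle\xi\rangle\, h_{v_0}(\xi)\, [h_{v_0}(\xi)^2 - z^2]}\, \d\xi.
\end{align*}
Note that $\hat{M}$ is always complex-symmetric; rotational symmetry around $\hat{v}_0$ further reduces it to block-diagonal form, but the argument below does not need this simplification.

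For the interior $\Im z < 0$, write $z = \omega - \i y$ with $y > 0$. A direct computation gives $\Im[h_{v_0}^2 - z^2]^{-1} = -2\omega y / |h_{v_0}^2 - z^2|^2$, so for any real vector $\eta \in \RR^3$,
\begin{align*}
\eta^T\, \Im \hat{M}(z)\, \eta = -2 c\, \omega\, y \int_{\RR^3} \frac{(\eta\cdot\xi)^2\, \hat{W}(\xi)^2\, |\xi|}{\langle\xi\rangle\, h_{v_0}(\xi)\, |h_{v_0}^2 - z^2|^2}\, \d\xi.
\end{align*}
Because $|v_0| < 1$ forces $h_{v_0} > 0$ on $\RR^3 \setminus \{0\}$ and $\hat{W}$ is nowhere vanishing, the integral above is strictly positive for every $\eta \neq 0$; hence $\Im \hat{M}(z)$ is strictly negative (respectively positive) definite when $\Re z > 0$ (respectively $\Re z < 0$). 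When $\Re z = 0$, i.e.\ $z = -\i y$, the representation collapses to the manifestly real and positive-definite expression
\begin{align*}
\hat{M}_{ij}(-\i y) = c \int_{\RR^3} \frac{\xi_i\, \xi_j\, \hat{W}(\xi)^2\, |\xi|}{\langle\xi\rangle\, h_{v_0}\, (h_{v_0}^2 + y^2)}\, \d\xi,
\end{align*}
with integrability near $\xi = 0$ guaranteed by the lower bound $h_{v_0}(\xi) \simeq |\xi|$ inherited from \eqref{eq.phase_subsonic}.

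The remaining boundary case $\Im z = 0$ with $\omega := \Re z \neq 0$ is the main technical step, since the denominator $h_{v_0}^2 - z^2$ vanishes along the closed surface $\{h_{v_0} = |\omega|\}$. I would invoke the continuity of $\hat{M}$ on $\bar{\Pi}_-$ from Lemma \ref{lem:Momega} and evaluate the limit $y \downarrow 0$ ray-by-ray in spherical coordinates: by \eqref{eq.phase_subsonic} the map $\rho \mapsto h_{v_0}(\rho\theta)$ is strictly increasing on $(0,\infty)$ with derivative $\simeq \langle\rho\rangle$, so each radial ray contains exactly one pole, lying just below the real $\rho$-axis when $\omega > 0$ and just above it when $\omega < 0$. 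The standard Plemelj formula then yields
\begin{align*}
\eta^T\, \Im \hat{M}(\omega)\, \eta = -c\,\pi\,\mathrm{sign}(\omega) \int_{\{h_{v_0} = |\omega|\}} \frac{(\eta\cdot\xi)^2\, \hat{W}(\xi)^2\, |\xi|}{\langle\xi\rangle\, h_{v_0}\, |\partial_\rho h_{v_0}|}\, \d\sigma(\xi),
\end{align*}
and the surface integral is strictly positive for $\eta \neq 0$, since the level surface is topologically a sphere meeting every radial ray and $\hat{W}$ is nowhere vanishing. Passing from real $\eta$ to $u \in \CC^3$ is automatic: using that $\hat{M}$ is complex-symmetric, one has $\Im(u^\ast \hat{M}(z)\, u) = (\Re u)^T \Im\hat{M}(z)(\Re u) + (\Im u)^T \Im\hat{M}(z)(\Im u)$, so the strict definiteness of $\Im \hat{M}(z)$ on $\RR^3$ transfers to $\CC^3$.

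Finally, invertibility of $1 + \hat{M}(z)$ is immediate case by case: when $\Re z \neq 0$, $\Im(u^\ast(1 + \hat{M}(z))\, u) = \Im(u^\ast \hat{M}(z)\, u)$ has a fixed, nonzero sign for every nonzero $u$, so $(1 + \hat{M}(z))u$ cannot vanish; when $\Re z = 0$, the quadratic form $u^\ast(1 + \hat{M}(z))\, u$ is real and bounded below by $|u|^2$. The main obstacle throughout is the rigorous justification of the boundary Plemelj computation in the third paragraph, which in turn rests crucially on the strict radial monotonicity of $h_{v_0}$ and on the nowhere-vanishing hypothesis on $\hat{W}$.
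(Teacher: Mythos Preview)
Your proof is correct and follows essentially the same route as the paper: both derive the integral representation $\hat{M}_{ij}(z) = \int \xi_i\xi_j\,|\hat{W}|^2\,|\xi|\,\langle\xi\rangle^{-1}\,h_{v_0}^{-1}\,(h_{v_0}^2-z^2)^{-1}\,d\xi$, exploit the positivity of $h_{v_0}$ and the non-vanishing of $\hat{W}$ to read off the sign of the imaginary part in the open half-plane, and use a Sokhotsky--Plemelj limit for the boundary $\Im z = 0$. The only substantive difference is presentational: the paper first uses spherical symmetry of $W$ to diagonalize $M$ and then argues entry-by-entry, whereas you work directly with the quadratic form $\eta^T\hat{M}(z)\eta$ and add the clean observation that complex-symmetry of $\hat{M}$ transfers definiteness of $\Im\hat{M}$ from real to complex test vectors---your version is marginally more streamlined and makes the invertibility conclusion for complex vectors explicit, while the paper's diagonalization makes the scalar Plemelj step slightly more transparent.
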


\begin{proof}
Without loss of generality, we may assume that $v_0 = (0, 0, |v_0|)$.
From the explicit formula for $M$ in \eqref{eq:Momega} and from the assumption that $W$ is spherically symmetric, we conclude that $M$ is diagonal.
Thus, it suffices to show that each diagonal element $\hat{M}_{ii}$ has the desired sign.

We define the quantity
\[ Q_i (\xi) = \xi_i^2 | \xi | \langle \xi \rangle^{-1} | \hat{W} (\xi) |^2 | h_{v_0} (\xi) |^{-2} > 0 \text{,} \qquad \xi \neq 0 \text{.} \]
Note $Q_i$ is strictly positive, since $\hat{W}$ is assumed to be everywhere nonvanishing.
For any $z \in \CC$, with $\Im z < 0$, we have, by \eqref{eq:Momega}, that
\begin{align*}
\hat{M}_{ii} (z) &= \int_0^\infty \e^{-\i z t} \Re \int_{ \R^3 } \i Q_i (\xi) e^{-i t h_{v_0} (\xi) } \d \xi \d t \\
&= \frac{1}{2} \int_0^\infty \e^{-\i z t} \int_{ \R^3 } \i Q_i (\xi) [ e^{-i t h_{v_0} (\xi)} - e^{i t h_{v_0} (\xi)} ] \d \xi \d t \\
&= \frac{1}{2} \int_0^\infty \int_{ \R^3 } \i Q_i (\xi) \{ e^{-i t [ h_{v_0} (\xi) + z] } - e^{i t [ h_{v_0} (\xi) - z ] } \} \d \xi \d t \text{.}
\end{align*}
Because the integrand has good decay when $t \nearrow \infty$, we can integrate this directly:
\begin{align}
\label{eql.matrixinvertible_1} \hat{M}_{ii} (z) &= \frac{1}{2} \int_{ \R^3 } Q_i (\xi) \left[ \frac{1}{ h_{v_0} (\xi) + z } + \frac{1}{ h_{v_0} (\xi) - z } \right] d \xi \text{.}
\end{align}

Next, we write $z = \omega - i y$, with $y > 0$.
If $\omega = \Re z = 0$ then \eqref{eql.matrixinvertible_1} yields
\begin{align*}
\hat{M}_{ii} (z) &= \frac{1}{2} \int_{ \R^3 } Q_i (\xi) \left[ \frac{1}{ h_{v_0} (\xi) - y i } + \frac{1}{ h_{v_0} (\xi) + y i } \right] d \xi \\
&= \int_{ \R^3 } Q_i (\xi) \frac{ h_{v_0} (\xi) }{ h_{v_0}^2 (\xi) + y^2 } d \xi \\
&> 0 \text{.}
\end{align*}
From the above, we also see that $\hat{M}_{ii} (z)$ remains strictly positive even when $y \searrow 0$; hence the proof is complete in the case where $\Re z = 0$.

Next, if $\omega \neq 0$ and $y > 0$, then
\begin{align*}
\Im \hat{M}_{ii} (z) &= \frac{1}{2} \Im \int_{ \R^3 } Q_i (\xi) \left[ \frac{1}{ h_{v_0} (\xi) + \omega - y i } + \frac{1}{ h_{v_0} (\xi) - \omega + y i } \right] d \xi \\
&= \frac{1}{2} \Im \int_{ \R^3 } Q_i (\xi) y \left[ \frac{1}{ | h_{v_0} (\xi) + \omega |^2 + y^2 } - \frac{1}{ | h_{v_0} (\xi) - \omega |^2 + y^2 } \right] d \xi \text{.}
\end{align*}
Since $h_{v_0}$ is everywhere positive, it follows that $| h_{v_0} (\xi) + \omega |^2 > | h_{v_0} (\xi) - \omega |^2$ if $\omega > 0$, and $| h_{v_0} (\xi) + \omega |^2 < | h_{v_0} (\xi) - \omega |^2$ if $\omega < 0$.
It follows that $\Im \hat{M}_{ii} (z)$ is negative if $\omega > 0$, while $\Im \hat{M}_{ii} (z)$ is positive if $\omega < 0$.

Finally, we examine the limit $y \searrow 0$ when $\omega \neq 0$.
Writing
\begin{align*}
\Im \hat{M}_{ii} (\omega) &= \frac{1}{2} \lim_{y \searrow 0} \Im \int_{ \R^3 } Q_i (\xi) \left[ \frac{1}{ h_{v_0} (\xi) + \omega - y i } + \frac{1}{ h_{v_0} (\xi) - \omega + y i } \right] d \xi
\end{align*}
in polar coordinates and recalling the standard formula $\Im (a \mp i 0)^{-1} = \pm \pi^{-1} \delta (a)$ (in one dimension), we can write the integral above as
\begin{align*}
\Im M_{ii} (x) &= \frac{1}{2} \int_{ \R^3 } Q_i (\xi) \delta ( h_{v_0} (\xi) + \omega ) \d \xi - \frac{1}{2} \int_{ \R^3 } Q_i (\xi) \delta ( h_{v_0} (\xi) - \omega ) \d \xi \\
&=: A_+ - A_- \text{,}
\end{align*}
where $A_+$ is the integral of $Q_i$ over the set where $h_{v_0} = - \omega$ and $A_-$ is the integral of $Q_i$ over the set where $h_{v_0} = \omega$.
\footnote{Since $h_{v_0}$ does not have any critical points on $\R^3 \setminus \{ 0 \}$, these integrals are well-defined.}

Suppose first that $\omega > 0$.
Then, $A_+$ vanishes, because $h_{v_0}$ is everywhere positive on $\R^3 \setminus \{ 0 \}$.
In contrast, $A_-$ is strictly positive, since $Q_i > 0$, and since $h_{v_0}$ maps onto $(0, \infty)$.
Consequently, in the case $\omega > 0$, we conclude that
\[ \Im \hat{M}_{ii} (\omega) = A_+ - A_- < 0 \text{.} \]
On the other hand, if $\omega < 0$, similar reasoning shows that $A_+ > 0$, while $A_- = 0$; hence, $\Im \hat{M}_{ii} (\omega) > 0$.
\end{proof}

Applying Lemma \ref{lem:matrixinvertible} to Eq. (\ref{eq.fourier_pre}), we get
\begin{align}
\label{eq:halffourier} \hat{E}_v (\omega) &= [ 1 + \hat{M} (\omega) ]^{-1} \hat{E}_r (\omega) = \{ [ 1 + \hat{M} (\omega) ]^{-1} - 1 \} \hat{E}_r (\omega) + \hat{E}_r (\omega) \text{.}
\end{align}
Note that $( 1 + \hat{M} )^{-1}$ tends to $1$ at infinity, hence its inverse Fourier transform is singular. We therefore prefer to work with $( 1 + \hat{M} )^{-1} - 1$.

The matrix $1 + \hat{M}$ vanishes nowhere on the real line.
Moreover, since $\hat{M}$ tends to $0$, as $\omega \nearrow \infty$ (see \eqref{eq.Mhat_est}), it follows that $[ 1 + \hat{M} (\omega) ]^{-1}$ is uniformly bounded from above, for all $\omega \in \R$.
\footnote{The precise upper bound, though, depends on the properties of $W$.}
Thus, the inverse Fourier transform
\begin{align}
\label{eq:functionK} K (t) &= \int_{-\infty}^\infty \{ [ 1 + \hat{M} (\omega) ]^{-1} - 1 \} \e^{\i \omega t} \d \omega = - \int_{-\infty}^\infty \hat{M} (\omega) [ 1 + \hat{M} (\omega) ]^{-1} e^{i \omega t} d \omega
\end{align}
is well-defined.

To prove the bound \eqref{eq:Kdecay} in Lemma 5.1, we first estimate
\begin{align*}
\| \langle t \rangle^4 K \|_{ L^\infty } &\lesssim \sum_{k = 0}^4 \| \partial_\omega^l \{ \hat{M} (\omega) [ 1 + \hat{M} (\omega) ]^{-1} \} \|_{ L^1 } \lesssim \sum_{k = 0}^4 \| \partial_\omega^l \hat{M} (\omega) \|_{ L^1 } \text{,}
\end{align*}
where, in the last step, we have used the boundedness of $( 1 + \hat{M} )^{-1}$.
Applying H\"older's inequality and \eqref{eq.Mhat_est}, we find that
\begin{align*}
\| \langle t \rangle^4 K \|_{ L^\infty } &\lesssim \sum_{k = 0}^4 \| \langle \omega \rangle^2 \partial_\omega^l \hat{M} (\omega) \|_{ L^\infty } < \infty \text{.}
\end{align*}
This proves \eqref{eq:Kdecay}.

Since Lemma \ref{lem:matrixinvertible} implies that the integrand in \eqref{eq:functionK} is analytic on the lower half-plane, a contour integration argument shows that, for any $y > 0$,
\begin{align*}
K (t) = \e^{yt} \int_{-\infty}^\infty \left\{ \left[ 1 + \hat{M} (\omega - \i y) \right]^{-1} - 1 \right\} \e^{\i \omega t} \d \omega \text{.}
\end{align*}
If $t < 0$, then, letting $y \nearrow \infty$ and recalling \eqref{eq.Mhat_est}, one sees that $K (t) = 0$.

Finally, using \eqref{eq:Kdecay} and \eqref{eq:halffourier}, we see that
\[ E_v = K \ast E_r + E_r \text{.} \]
Since $K (t)$ vanishes when $t < 0$, this  yields \eqref{eq:kernelForm}.

\subsubsection{Proof of Lemma \ref{lem:hdecay}} \label{sec:remainder}

First, since $X_0 = 0$, we have that
\begin{equation} \label{eq:averageX} \frac{ | X_t | }{ t } \leq \frac{1}{t} \int_0^t |v_\tau| \d \tau < v_\ast < 1 \text{,} \end{equation}
where $v_\ast$ is as in \eqref{eq:subsonic}.
We use the Parseval identity to write $r_0 (t)$ as
\begin{align*}
r_0 (t) &= \Re \left\langle \i \xi \hat{W} (\xi), \e^{- \i t [ h (\xi) - \frac{X_t}{t} \cdot \xi ] } \widehat{U D}_0 (\xi) \right\rangle \\
&= \Re \left\langle \i \xi \hat{W} (\xi), \e^{- \i t h_{ \bar{v} } (\xi) } [ \widehat{\Re \delta}_0 (\xi) + | \xi | \langle \xi \rangle^{-1} \i \widehat{ \Im \delta_0 } ] \right\rangle \text{,}
\end{align*}
where $\bar{v} = t^{-1} X_t$.
If $t \leq 1$, we can apply H\"older's inequality to bound
\[ r_0 (t) \lesssim \| W \|_{ H^1 } \| \delta_0 \|_{ L^2 } \text{.} \]
Next, for $t > 1$, we apply Proposition \ref{thm.dispersive_subsonic}, with $v = \bar{v}$, and with $(l, b) = (1, 0)$ and $(l, b) = (2, -1)$, respectively.
Combining the two cases yields \eqref{eq:hdecay_r0}.

For $r_1$, we first expand
\begin{align*}
e^{ \int_s^t ( v_\tau - v_0 ) d \tau \cdot \nabla } - 1 &= \int_s^t \frac{d}{d \tau} e^{ \int_s^\tau ( v_{\tau^\prime} - v_0 ) d \tau^\prime \cdot \nabla } d \tau \\
&= \int_s^t e^{ \int_s^\tau ( v_{\tau^\prime} - v_0 ) d \tau^\prime \cdot \nabla } [ ( v_\tau - v_0 ) \cdot \nabla ] d \tau \\
&= \int_s^t e^{ \int_s^\tau ( v_{\tau^\prime} - v_0 ) d \tau^\prime \cdot \nabla } \left( \int_0^\tau \dot{v}_{ s^\prime } d s^\prime \cdot \nabla \right) d \tau \text{,}
\end{align*}
so that, by Parseval's identity,
\begin{align*}
r_1 (t) &= \real \int_0^t \int_s^t \int_0^\tau \i f_1 (t, s, \tau, s^\prime) d s^\prime d \tau d s \text{,} \\
f_1 (t, s, \tau, s^\prime) &= \langle \nabla W, U \e^{- \i (t-s) H_{v_0} } e^{ \int_s^\tau ( v_{\tau^\prime} - v_0 ) d \tau^\prime \cdot \nabla } \dot{v}_{s^\prime} \cdot \nabla H_{v_s}^{-2} \dot{v}_s \cdot \nabla W \rangle \\
&= \i \langle \xi \hat{W} (\xi), | \xi | \langle \xi \rangle^{-1} \e^{- \i (t-s) h_{ \bar{v} } (\xi) } ( \dot{v}_{s^\prime} \cdot \xi ) h_{v_s}^{-2} (\xi) ( \dot{v}_s \cdot \xi ) \hat{W} (\xi) \rangle \text{,} \\
\bar{v} &= v_0 + \frac{1}{t - s} \int_s^\tau ( v_{\tau^\prime} - v_0 ) d \tau^\prime \text{.}
\end{align*}
Note that $\bar{v}$ is subsonic, since
\[ | \bar{v} | \leq \frac{t - \tau}{t - s} | v_0 | + \frac{1}{t - s} \int_s^\tau | v_\tau^\prime | d \tau^\prime < v_\ast \text{.} \]

The idea is the same as that for $r_0$.
If $t \leq 1$, then we can trivially bound
\[ r_1 (t) \lesssim \| W \|_{ H^1 } \| W \|_{ L^2 } \text{.} \]
For $t > 1$, we note that the symbol
\[ g (\xi) = \langle \xi \rangle^{-1} \left( \frac{ \dot{v}_{s^\prime} }{ | \dot{v}_{s^\prime} | } \cdot \xi \right) h_{v_s}^{-2} (\xi) \left( \frac{ \dot{v}_s }{ | \dot{v}_s | } \cdot \xi \right) \]
belongs to the class $\mc{O}^{-3}$ (see Section \ref{sec.symbol}).
Consequently, applying Proposition \ref{thm.dispersive_subsonic}, with $l = 2$, $b = -3$, and $v = \bar{v}$, yields
\footnote{One can apply an argument similar to one in Proposition \ref{lem:decayestimates} to conclude even better decay for $r_1$.}
\begin{align*}
| f_1 (t, s, \tau, s^\prime) | &\lesssim ( 1 + t - s )^{-5} | \dot{v}_{s^\prime} | | \dot{v}_s | \| \langle x \rangle^5 W \|_{ L^2 }^2 \text{.}
\end{align*}

Similarly, for $r_2$, we can write
\begin{align*}
H_{v_s}^{-2} - H_{v_0}^{-2} &= \int_0^s \frac{d}{d \tau} H_{v_\tau}^{-2} d \tau = -2 i \int_0^s H_{v_\tau}^{-3} ( \dot{v}_\tau \cdot \nabla ) d \tau \text{,}
\end{align*}
which yields the identity
\begin{align*}
r_2(t) &= 2 \Re \int_0^t \int_0^s \scalar{ \nabla W }{ U \e^{- \i (t-s) H_{v_0} } H_{v_\tau}^{-3} ( \dot{v}_\tau \cdot \nabla ) ( \dot{v}_s \cdot \nabla ) W } \d \tau \d s \\
&= 2 \Re \i \int_0^t \int_0^s \scalar{ \xi \hat{W} (\xi) }{ | \xi | \langle \xi \rangle^{-1} \e^{ - \i (t-s) h_{v_0} (\xi) } h_{v_\tau}^{-3} (\xi) ( \xi \cdot \dot{v}_\tau ) ( \xi \cdot \dot{v}_s ) \hat{W} (\xi) } \d \tau \d s \text{.}
\end{align*}
Let $f_2 (t, s, \tau)$ denote the quantity within the above double integral.

When $t \leq 1$, we once again have
\[ r_2 (t) \lesssim \| W \|_{ L^2 }^2 \text{.} \]
In the case $t > 1$, since
\[ g (\xi) = | \xi | \langle \xi \rangle^{-1} \left( \frac{ \dot{v}_\tau }{ | \dot{v}_\tau | } \cdot \xi \right) h_{v_s}^{-3} (\xi) \left( \frac{ \dot{v}_s }{ | \dot{v}_s | } \cdot \xi \right) \]
is in the class $\mc{O}^{-4}$, we can apply Proposition \ref{thm.dispersive_subsonic}, with $l = 1$, $b = -4$, and $v = v_0$:
\[ | f_2 (t, s, \tau) | \lesssim (1 + t - s)^{-4} \| \langle x \rangle^4 W \|_{ L^2 }^2 \text{.} \]
By the bootstrap assumption for $\dot{v}$,
\begin{align*}
| r_2 (t) | &\lesssim \epsilon^2 \int_0^t \int_0^s (1 + t - s)^{-4} (1 + s)^{-4} (1 + \tau)^{-4} \d\tau \d s \lesssim \epsilon^2 (1 + t)^{-4} \text{.}
\end{align*}
This completes the proof of \eqref{eq:hdecay_r2}.

\subsection{Asymptotics of the Field, as $t$ tends to $\infty$} \label{sec.decay_field}

In this part of the proof, we examine the asymptotics of the field $\beta$.
In particular, we wish to prove the decay estimates \eqref{eq.field_decay_re} and \eqref{eq.field_decay_im}, and thereby complete the proof of Theorem \ref{thm:technical}.

\subsubsection{Preliminary Estimates}

We first collect several estimates that are needed in the proofs of \eqref{eq.field_decay_re} and \eqref{eq.field_decay_im}.
We begin with some $L^\infty$-estimates involving the operator $H_v^{-1}$ defined in \eqref{eq.Hv}.

\begin{lemma} \label{thm.trivial_unif_cor}
Let $v, v^\prime \in \R^3$, with $| v | < 1$ and $| v^\prime | < 1$, and let $f \in \mc{S} (\R^3)$.
\begin{itemize}
\item The following uniform bounds hold:
\begin{align}
\label{eq.trivial_unif_cor} \| U H_v^{-1} f \|_{ L^\infty } + \| H_v^{-1} f \|_{ L^\infty } &\lesssim_{ |v| } \| f \|_{ L^2 } \text{,} \\
\notag \| U H_v^{-1} H_{v^\prime}^{-1} \nabla f \|_{ L^\infty } + \| H_v^{-1} H_{v^\prime}^{-1} \nabla f \|_{ L^\infty } &\lesssim_{ |v|, |v^\prime| } \| f \|_{ L^2 } \text{.}
\end{align}

\item The following uniform bound holds:
\begin{equation} \label{eq.trivial_comp_cor} \| U H_v^{-1} f - U H_{v^\prime}^{-1} f \|_{ L^\infty } + \| H_v^{-1} f - H_{v^\prime}^{-1} f \|_{ L^\infty } \lesssim_{ |v|, |v^\prime| } | v - v^\prime | \| f \|_{ L^2 } \text{.} \end{equation}
\end{itemize}
\end{lemma}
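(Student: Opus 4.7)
The plan is to prove all three bounds by recognizing that every operator appearing in the statement is a Fourier multiplier whose symbol lies in $L^2(\R^3)$ uniformly for $v, v'$ in compact subsets of the open unit ball. Once this is verified, the $L^2 \to L^\infty$ mapping property is immediate from the convolution representation: if $T$ has symbol $m \in L^2$ with kernel $K = m^\vee \in L^2$ (by Plancherel), then Cauchy--Schwarz gives
\[ \|Tf\|_{L^\infty} = \|K*f\|_{L^\infty} \leq \|K\|_{L^2}\|f\|_{L^2} = \|m\|_{L^2}\|f\|_{L^2}. \]

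The crucial input is the uniform subsonic lower bound $\langle\xi\rangle - v\cdot\hat\xi \geq 1-|v|$ already recorded in Section \ref{sec.symbol}, which gives $|h_v(\xi)| \gtrsim_{|v|} |\xi|\langle\xi\rangle$ on $\R^3\setminus\{0\}$; equivalently, $h_v^{-1}$ belongs to the class $\mc{M}_{-1}^{-2}$ with constants depending only on $|v|$. Using the multiplicative property of the $\mc{M}_a^b$ classes noted in Section \ref{sec.symbol}, one obtains that the symbol of $H_v^{-1}$ lies in $\mc{M}_{-1}^{-2}$, the symbol of $UH_v^{-1}$ lies in $\mc{M}_0^{-2}$, the symbol of $H_v^{-1}H_{v'}^{-1}\nabla$ lies in $\mc{M}_{-1}^{-3}$, and the symbol of $UH_v^{-1}H_{v'}^{-1}\nabla$ lies in $\mc{M}_0^{-3}$. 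Each of these symbols is square-integrable in three dimensions: the low-frequency singularity is at worst $|\xi|^{-1}$, well within the $L^2$-local threshold $|\xi|^{-3/2}$, and the high-frequency decay is at least $\langle\xi\rangle^{-2}$. This establishes \eqref{eq.trivial_unif_cor}, with constants of the form $C(1-|v|)^{-k}(1-|v'|)^{-k}$.

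For the comparison estimate \eqref{eq.trivial_comp_cor}, the plan is to invoke the resolvent identity
\[ H_v^{-1} - H_{v'}^{-1} = H_v^{-1}(H_{v'} - H_v) H_{v'}^{-1} = \i\, H_v^{-1}\, (v'-v)\cdot\nabla\, H_{v'}^{-1}, \]
(which is legitimate because all three factors are Fourier multipliers and commute), and then apply the second inequality in \eqref{eq.trivial_unif_cor} to the operator $H_v^{-1}\nabla H_{v'}^{-1}$, with the factor $|v-v'|$ coming out explicitly. The version with $U$ prepended is handled identically via the $UH_v^{-1}H_{v'}^{-1}\nabla$ bound.

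No serious obstacle is anticipated. The only point requiring some attention is the bookkeeping of low-frequency behavior when $\nabla$ is present in the second pair of operators, since the extra factor of $\xi$ is offset by two factors of $h_{\cdot}^{-1}$, giving a net $|\xi|^{-1}$ singularity at the origin — comfortably below the $L^2$-critical power in three dimensions, but close enough that it is worth spelling out. All constants are tracked through the single expression $(1-|v|)^{-1}$ (and its counterpart for $v'$), yielding the claimed dependence.
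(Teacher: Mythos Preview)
Your proposal is correct and follows essentially the same approach as the paper's proof: both show that each Fourier multiplier has symbol in $L^2(\R^3)$ (you frame this via the $\mc{M}_a^b$ classes, the paper checks the low/high frequency bounds directly for $UH_v^{-1}$ and then asserts the same for the others), then apply Cauchy--Schwarz/Plancherel (you on the physical side via $\|K*f\|_{L^\infty}\le\|K\|_{L^2}\|f\|_{L^2}$, the paper on the Fourier side via $\|m\hat f\|_{L^1}\le\|m\|_{L^2}\|\hat f\|_{L^2}$), and both reduce \eqref{eq.trivial_comp_cor} to the second line of \eqref{eq.trivial_unif_cor} via the resolvent identity $H_v^{-1}-H_{v'}^{-1}=\i(v'-v)\cdot H_v^{-1}H_{v'}^{-1}\nabla$.
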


\begin{proof}
We first note that the symbol, $m (\xi)$, of $U H_v^{-1}$ satisfies
\[ | m (\xi) | = \frac{ | \xi | }{\langle \xi \rangle |  |\xi|\langle \xi \rangle - v \cdot \xi | } \lesssim \begin{cases} 1 & | \xi | \leq 1 \text{,} \\ | \xi |^{-2} & | \xi | \geq 1 \text{,} \end{cases} \]
and hence is in $L^2 (\RR^3)$.
Thus, we can estimate
\[ \| U H_v^{-1} f \|_{ L^\infty } \lesssim \| m \hat{f} \|_{ L^1 } \lesssim \| m \|_{ L^2 } \| \hat{f} \|_{ L^2 } \lesssim \| f \|_{ L^2 } \text{.} \]
The symbols of the operators $H_v^{-1}$, $U H_v^{-1} H_{v^\prime}^{-1} \nabla$, and $H_v^{-1} H_{v^\prime}^{-1} \nabla$ are also in $L^2 (\RR^3)$, and the bounds \eqref{eq.trivial_unif_cor} follow.

Finally, \eqref{eq.trivial_comp_cor} follows from the above observations and the identity
\[ ( H_v^{-1} - H_{v^\prime}^{-1} ) f = i ( v - v^\prime ) \cdot H_v^{-1} H_{v^\prime}^{-1} \nabla f \text{.} \qedhere \]
\end{proof}

Let $G_v = U_r^{-1} \gamma_v$, with $v \in \R^3$ and $|v| < 1$, as in Section \ref{sec.inertpf_intro}.
For $v$ and $v^\prime$ as in Corollary \ref{thm.trivial_unif_cor}, \eqref{eq.trivial_comp_cor} implies that
\begin{equation} \label{eq.traveling_wave_comp} \| U ( G_v - G_{v^\prime} ) \|_{ L^\infty } + \| G_v - G_{v^\prime} \|_{ L^\infty } \lesssim_{ |v|, |v^\prime| } | v - v^\prime | \| W \|_{ L^2 } \text{.} \end{equation}

The next batch of estimates involves decay in time and forms the backbone of the proofs of \eqref{eq.field_decay_re} and \eqref{eq.field_decay_im}.
The main technical ingredient is the dispersive estimate given in Proposition \ref{thm.dispersive_unif}.

\begin{lemma} \label{thm.trivial_decay_cor}
Let $v \in \RR^3$, with $| v | < 1$.
Furthermore, let $t \in (0, \infty)$, and let $f \in \mc{S} (\RR^3)$.
\begin{itemize}
\item The following decay estimates hold:
\begin{align}
\label{eq.trivial_decay_cor} \| e^{-i t H} f \|_{ L^\infty } &\lesssim t^{-\frac{3}{2}} ( \| \nabla f \|_{ L^1 } + \| f \|_{ L^1 } ) \text{,} \\
\notag \| e^{-i t H} U f \|_{ L^\infty } &\lesssim t^{-\frac{3}{2}} \| \nabla f \|_{ L^1 } \text{,} \\
\notag \| e^{-i t H} U H_v^{-2} \nabla f \|_{ L^\infty } &\lesssim t^{-\frac{3}{2}} \| f \|_{ L^1 } \text{.}
\end{align}

\item The following decay estimates hold for some small $\varepsilon > 0$:
\begin{align}
\label{eq.trivial_decay_cor_ex} \| e^{-i t H} U^{-1} f \|_{ L^\infty } &\lesssim_\varepsilon t^{-1 + \varepsilon} ( \| \nabla f \|_{ L^1 } + \| f \|_{ L^1 } ) \text{,} \\
\notag \| e^{-i t H} f \|_{ L^\infty } &\lesssim_\varepsilon t^{-1 + \varepsilon} \| \nabla f \|_{ L^1 } \text{,} \\
\notag \| e^{-i t H} H_v^{-2} \nabla f \|_{ L^\infty } &\lesssim_\varepsilon t^{-1 + \varepsilon} \| f \|_{ L^1 } \text{.}
\end{align}
\end{itemize}
\end{lemma}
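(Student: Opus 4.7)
Each of the six estimates is a direct consequence of the general dispersive estimate of Proposition \ref{thm.dispersive_unif} (Appendix \ref{sec.decay_unif}), applied to an appropriate Fourier multiplier $m(D)$ sitting between $e^{-itH}$ and $f$. My approach is to (i) identify that multiplier in each of the six cases, (ii) assign its symbol to the appropriate class $\mc{M}_a^b$ using the calculus of Section \ref{sec.symbol}, and (iii) read off the decay rate. Proposition \ref{thm.dispersive_unif} will deliver $t^{-3/2}$ when the low-frequency exponent $a$ is nonnegative, and $t^{-1+\varepsilon}$ when $a=-1$, the mild loss reflecting the singularity of the symbol at $\xi=0$.

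For the three bounds in \eqref{eq.trivial_decay_cor}, the symbol calculus of Section \ref{sec.symbol}, together with $h_v^{-1}\in\mc{M}_{-1}^{-2}$ for $|v|<1$ and the product rule for $\mc{M}$-classes, yields
\[
U\in\mc{M}_1^0,\qquad U H_v^{-2}\nabla\in\mc{M}_0^{-3},\qquad \mathrm{id}\in\mc{M}_0^0.
\]
In each case $a\geq 0$, so Proposition \ref{thm.dispersive_unif} produces the full $t^{-3/2}$ rate. For the three bounds in \eqref{eq.trivial_decay_cor_ex}, I would use instead
\[
U^{-1}\in\mc{M}_{-1}^0,\qquad (-\Delta)^{-1}\nabla\in\mc{M}_{-1}^{-1},\qquad H_v^{-2}\nabla\in\mc{M}_{-1}^{-3},
\]
where the middle symbol arises from writing $f=(-\Delta)^{-1}\nabla\cdot(\nabla f)$ to cast $\nabla f$ as the input when only $\|\nabla f\|_{L^1}$ is available. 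All three have $a=-1$, giving the $t^{-1+\varepsilon}$ rate. The presence or absence of the $\|\nabla f\|_{L^1}$ norm on the right is controlled by the high-frequency exponent $b$: when $b<0$ the symbol damps the high-frequency regime by itself and $\|f\|_{L^1}$ alone suffices; when $b=0$ I would split $f=\chi(D)f+(1-\chi(D))f$ with a smooth low-frequency cutoff $\chi$ and write $(1-\chi(D))f=\sum_j m_j(D)\partial_j f$ with the Schwartz (hence $L^1$-bounded) multipliers $m_j(\xi)=-i\xi_j(1-\chi(\xi))|\xi|^{-2}$, reducing the high-frequency piece to $\nabla f\in L^1$.

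The main obstacle is Proposition \ref{thm.dispersive_unif} itself rather than the present lemma. Its proof in Appendix \ref{sec.decay_unif} will combine the frequency-localized $t^{-3/2}$ phonon estimate of \cite{gustafson06} with a Littlewood-Paley decomposition: the high-frequency bands sum geometrically through the negative $b$-exponent, while the low-frequency bands sum either by the gain $2^{ak}$ (when $a\geq 0$) or, when $a=-1$, by interpolating the dispersive bound against the Bernstein bound $\|e^{-itH}P_k g\|_{L^\infty}\lesssim 2^{3k}\|P_k g\|_{L^1}$, which is what costs the $\varepsilon$-loss in the exponent. Granting Proposition \ref{thm.dispersive_unif}, the present lemma follows immediately from the three symbol identifications listed above.
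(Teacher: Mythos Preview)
Your approach is essentially the paper's: everything is reduced to Proposition~\ref{thm.dispersive_unif} after identifying the correct symbol class. The symbol identifications you list are right, and for the first, third, fourth, fifth, and sixth estimates your plan goes through. The paper proceeds identically, with the sole cosmetic difference that for the ``only $\|\nabla f\|_{L^1}$'' cases it factors through $|\nabla|^{-1}$ and invokes the variant \eqref{eq.dispersive_unif_ex}, whereas you factor through the vector-valued Riesz potential $(-\Delta)^{-1}\nabla$ and invoke \eqref{eq.dispersive_unif}; these are interchangeable.

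There is, however, a small gap in your treatment of the second estimate of \eqref{eq.trivial_decay_cor}, namely $\|e^{-itH}Uf\|_{L^\infty}\lesssim t^{-3/2}\|\nabla f\|_{L^1}$. You record $U\in\mc{M}_1^0$ and then defer to the cutoff argument in your last paragraph. But that argument, as you have written it, handles only the high-frequency piece: it produces $\|\nabla f\|_{L^1}$ for $(1-\chi(D))f$, while the low-frequency piece $e^{-itH}U\chi(D)f$ is still estimated (via Proposition~\ref{thm.dispersive_unif} with $l=0$) by $t^{-3/2}\|f\|_{L^1}$, not $t^{-3/2}\|\nabla f\|_{L^1}$. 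The point is that the right-hand side of the second estimate is \emph{not} determined solely by the high-frequency exponent $b'$; the two $b'=0$ cases ($M=\mathrm{id}$ and $M=U$) have different right-hand sides precisely because $U$ carries an extra low-frequency factor $|\xi|$ that must be cashed in as a derivative on $f$. The fix is immediate: apply the same device you use for the fifth estimate, writing $Uf=U(-\Delta)^{-1}\nabla\cdot(\nabla f)$ so that the effective multiplier is $U(-\Delta)^{-1}\nabla\in\mc{M}_0^{-1}$, and then invoke Proposition~\ref{thm.dispersive_unif} with $(a',b',a,l)=(0,-1,-\tfrac12,0)$ and input $\partial_j f$. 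This is exactly what the paper does (in the $|\nabla|^{-1}$ formulation). Incidentally, for the first estimates of \eqref{eq.trivial_decay_cor} and \eqref{eq.trivial_decay_cor_ex} the cutoff is unnecessary: Proposition~\ref{thm.dispersive_unif} with $l=1$ already outputs $\|\nabla f\|_{L^1}+\|f\|_{L^1}$ directly.
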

\begin{proof}
For the first and third estimates of \eqref{eq.trivial_decay_cor}, we apply the dispersive estimate \eqref{eq.dispersive_unif}, with the arguments
\begin{alignat*}{3}
M &= I \text{,} &\qquad (a^\prime, b^\prime, a, l) &= \left( 0, 0, -\frac{1}{2}, 1 \right) \text{,} \\
M &= U H_{v_s}^{-2} \nabla \text{,} &\qquad (a^\prime, b^\prime, a, l) &= \left( 0, -3, -\frac{1}{2}, 0 \right) \text{.}
\end{alignat*}
For the second part of \eqref{eq.trivial_decay_cor}, we apply \eqref{eq.dispersive_unif_ex} to $| \nabla | f$, with
\[ M = U | \nabla |^{-1} \text{,} \qquad (a^\prime, b^\prime, a, l) = \left( 0, -1, -\frac{1}{2}, 0 \right) \text{.} \]

The proof of \eqref{eq.trivial_decay_cor_ex} is analogous: to prove the first and third estimates we again apply \eqref{eq.dispersive_unif}, but with the following parameters,
\begin{alignat*}{3}
M &= U^{-1} \text{,} &\qquad (a^\prime, b^\prime, a, l) &= (-1, 0, -1 - \varepsilon, 1) \text{,} \\
M &= H_{v_s}^{-2} \nabla \text{,} &\qquad (a^\prime, b^\prime, a, l) &= (-1, -3, -1 - \varepsilon, 0) \text{.}
\end{alignat*}
The second part of \eqref{eq.trivial_decay_cor_ex} follows from an application of  \eqref{eq.dispersive_unif_ex} to $| \nabla | f$, with
\[ M = | \nabla |^{-1} \text{,} \qquad (a^\prime, b^\prime, a, l) = (-1, -1, -1 - \varepsilon, 0) \text{.} \qedhere \]
\end{proof}

\subsubsection{Decay of $\delta$}

Using the preliminary estimates just established, we are able to prove decay of $\delta_t$, as $t \nearrow \infty$.
Since $D = U_r^{-1} \delta$, we have the identities
\begin{equation} \label{eql.delta_D} \Re \delta + i U \Im \delta = U D \text{,} \qquad U^{-1} \Re \delta + i \Im \delta = D \text{,} \end{equation}
so that decay estimates on $U D_t$ and $D_t$ will suffice.

We begin by analyzing $U D_t$.
Using the Duhamel formula \eqref{eq.D_duhamel}, we see that
\begin{align*}
\| U D_t \|_{ L^\infty } &\lesssim \| e^{-i t H} ( \Re \delta_0 ) \|_{ L^\infty } + \| e^{-i t H} ( U \Im \delta_0 ) \|_{ L^\infty } \\
&\qquad + \int_0^\frac{t}{2} \| e^{ -i H (t - s) } U H_{v_s}^{-2} ( \dot{v}_s \cdot \nabla W ) \|_{ L^\infty } ds \\
&\qquad + \int_\frac{t}{2}^t \| e^{ -i H (t - s) } U H_{v_s}^{-2} ( \dot{v}_s \cdot \nabla W ) \|_{ L^\infty } ds \\
&= I_1 + I_2 + I_3 + I_4 \text{.}
\end{align*}
To bound $I_1$ and $I_2$, we apply the first and second parts of \eqref{eq.trivial_decay_cor} and find that
\begin{align*}
I_1 &\lesssim t^{-\frac{3}{2}} ( \| \nabla ( \Re \delta_0 ) \|_{ L^1 } + \| \Re \delta_0 \|_{ L^1 } ) \lesssim t^{-\frac{3}{2}} \text{,} \\
I_2 &\lesssim t^{-\frac{3}{2}} \| \nabla ( \Im \delta_0 ) \|_{ L^1 } \lesssim t^{-\frac{3}{2}} \text{.}
\end{align*}
Similarly, for $I_3$, we apply the third part of \eqref{eq.trivial_decay_cor}:
\[ I_3 \lesssim t^{-\frac{3}{2}} \| W \|_{ L^1 } \int_0^\frac{t}{2} | \dot{v}_s | ds \lesssim t^{-\frac{3}{2}} \int_0^\frac{t}{2} ( 1 + s )^{-4} ds \lesssim t^{-\frac{3}{2}} \text{.} \]
Finally, to bound $I_4$, we apply the second part of \eqref{eq.trivial_unif_cor}, along with the fact that $e^{-i H (t - s)}$ is a bounded operator on $L^2 (\R^3)$, to conclude that
\[ I_4 \lesssim \| W \|_{ L^2 } \int_\frac{t}{2}^t | \dot{v}_s | ds \lesssim t ( 1 + t )^{-4} \lesssim ( 1 + t )^{-3} \text{.} \]
Combining the estimates for $I_1$ through $I_4$, it follows that
\begin{equation} \label{eq.decay_delta_re} \| \real \delta_t \|_{ L^\infty } \lesssim \| U D_t \|_{ L^\infty } \lesssim t^{-\frac{3}{2}} \text{.} \end{equation}

The proof of decay of $D_t$ in $t$ is analogous.
Again, by \eqref{eq.D_duhamel}, we have that
\begin{align*}
\| D_t \|_{ L^\infty } &\lesssim \| e^{-i t H} U^{-1} ( \Re \delta_0 ) \|_{ L^\infty } + \| e^{-i t H} ( \Im \delta_0 ) \|_{ L^\infty } \\
&\qquad + \int_0^\frac{t}{2} \| e^{ -i H (t - s) } H_{v_s}^{-2} ( \dot{v}_s \cdot \nabla W ) \|_{ L^\infty } ds \\
&\qquad + \int_\frac{t}{2}^t \| e^{ -i H (t - s) } H_{v_s}^{-2} ( \dot{v}_s \cdot \nabla W ) \|_{ L^\infty } ds \\
&= J_1 + J_2 + J_3 + J_4 \text{.}
\end{align*}
We bound $J_1$, $J_2$, and $J_3$ using \eqref{eq.trivial_decay_cor_ex}:
\begin{align*}
J_1 &\lesssim t^{-1 + \varepsilon} ( \| \nabla ( \Re \delta_0 ) \|_{ L^1 } + \| \Re \delta_0 \|_{ L^1 } ) \lesssim t^{-1 + \varepsilon} \text{,} \\
J_2 &\lesssim t^{-1 + \varepsilon} \| \nabla ( \Im \delta_0 ) \|_{ L^1 } \lesssim t^{-1 + \varepsilon} \text{,} \\
J_3 &\lesssim t^{-1 + \varepsilon} \int_0^\frac{t}{2} | \dot{v}_s | ds \lesssim t^{-1 + \varepsilon} \text{.}
\end{align*}
For $J_4$, we apply \eqref{eq.trivial_unif_cor} (like for $I_4$) and obtain
\[ J_4 \lesssim \int_\frac{t}{2}^t | \dot{v}_s | ds \lesssim ( 1 + t )^{-3} \text{.} \]
Combining these bounds yields
\begin{equation} \label{eq.decay_delta_im} \| \imag \delta_t \|_{ L^\infty } \lesssim \| D_t \|_{ L^\infty } \lesssim t^{-1 + \varepsilon} \text{.} \end{equation}

Bounds \eqref{eq.decay_delta_re} and \eqref{eq.decay_delta_im} prove the desired decay for $\delta$.

\subsubsection{Completion of the Proof of Theorem \ref{thm:technical}}

In order to complete the proofs of \eqref{eq.field_decay_re} and \eqref{eq.field_decay_im}, we first control the intermediate quantity $\beta_t - \gamma_{v_\infty}^{X_t}$.
Recalling \eqref{eq.decay_delta_re} and applying \eqref{eq.traveling_wave_comp}, we obtain that
\begin{align}
\label{eq.decay_interm_re} \| \real ( \beta_t - \gamma_{v_\infty}^{X_t} ) \|_{ L^\infty } &\lesssim \| \real \delta_t \|_{ L^\infty } + \| \real U ( G_{v_t} - G_{v_\infty} ) \|_{ L^\infty } \\
\notag &\lesssim t^{-\frac{3}{2}} + | v_\infty - v_t | \| W \|_{ L^2 } \\
\notag &\lesssim t^{-\frac{3}{2}} + \int_t^\infty | \dot{v}_s | ds \\
\notag &\lesssim t^{-\frac{3}{2}} \text{.}
\end{align}
By similar reasoning, using \eqref{eq.traveling_wave_comp} and \eqref{eq.decay_delta_im}, we also have that
\begin{equation} \label{eq.decay_interm_im} \| \Im ( \beta_t - \gamma_{v_\infty}^{X_t} ) \|_{ L^\infty } \lesssim t^{-1 + \varepsilon} \text{.} \end{equation}

Finally, in order to establish \eqref{eq.field_decay_re}, we use the triangle inequality
\[ \| \real ( \beta_t - \gamma_{v_\infty}^{ \bar{X}_0 + v_\infty t } ) \|_{ L^\infty } \lesssim \| \real ( \beta_t - \gamma_{v_\infty}^{X_t} ) \|_{ L^\infty } + \| \real ( \gamma_{v_\infty}^{ \bar{X}_0 + v_\infty t } - \gamma_{v_\infty}^{X_t} ) \|_{ L^\infty } \text{.} \]
The first term on the right-hand side is controlled by using \eqref{eq.decay_interm_re}.
To bound the remaining term, we apply the first inequality in \eqref{eq.trivial_unif_cor} to $\nabla W$ and then use \eqref{eq:xtconv}:
\begin{align*}
\| \real ( \gamma_{v_\infty}^{ \bar{X}_0 + v_\infty t } - \gamma_{v_\infty}^{X_t} ) \|_{ L^\infty } &\lesssim \| U ( G_{v_\infty}^{ \bar{X}_0 + v_\infty t } - G_{v_\infty}^{X_t} ) \|_{ L^\infty } \\
&\lesssim | \bar{X}_0 + v_\infty t - X_t | \| U \nabla G_{v_\infty} \|_{ L^\infty } \\
&\lesssim \| U H_{v_\infty}^{-1} \nabla W \|_{ L^\infty } ( 1 + t )^{-2} \\
&\lesssim \| \nabla W \|_{ L^2 } ( 1 + t )^{-2} \text{.}
\end{align*}
These estimates result in the bound \eqref{eq.field_decay_re}.
The remaining bound \eqref{eq.field_decay_im} for the imaginary part of the field is proven in a very similar manner.

This completes our proof of Theorem \ref{thm:technical}.

\appendix

\section{Global Well-Posedness} \label{sec.gwp}

In this section, we establish the well-posedness properties of the system \eqref{eq.emodel}.
More specifically, here we will prove Theorem \ref{thm.emodel_gwp}.
The proof is an application of standard methods, in particular the contraction mapping theorem.
The process here is particularly straightforward, as this system is close to linear.

\subsection{Local Well-Posedness}

The first step in the proof of Theorem \ref{thm.emodel_gwp} is the following general local well-posedness result.

\begin{proposition} \label{thm.emodel_lwp}
Fix $a \in (-\infty, 1/2)$ and $b \in [1, \infty)$, let $P_0 \in \R^3$, and suppose both $\varphi_0 \in H_a^b (\R^3)$, $\pi_0 \in H_{a+1}^b (\R^3)$ are real-valued.
Then, there exists $T > 0$, depending on $\| W \|_{ H_{a+1}^b }$, $\| \varphi_0 \|_{ H_a^b }$, and $\| \pi_0 \|_{ H_{a+1}^b }$, such that the system \eqref{eq.emodel} has a unique solution
\[ X, P \in C ( [0, T]; \R^3 ) \text{,} \qquad (\real \beta, \imag \beta) \in C ( [0, T]; H_a^b (\R^3) \times H_{a+1}^b (\R^3) ) \text{,} \]
with corresponding initial data $X_0 = 0$, $P_0$, and $(\varphi_0, \pi_0)$.
\end{proposition}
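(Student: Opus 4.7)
The plan is to run a contraction mapping argument on the Duhamel formulation of the system, working in the variables $B_t = U_r^{-1}\beta_t$ from Section \ref{sec.equations_cov}. In these variables the field equation becomes $\i\dot B_t = HB_t + W^{X_t}$ with the unitary propagator $e^{-\i Ht}$, and since $U^{-1}$ (symbol $\langle\xi\rangle/|\xi|$) shifts the low-frequency Sobolev exponent up by one, the transformed initial data lies in $B_0 = U^{-1}\varphi_0 + \i\pi_0 \in H_{a+1}^b$. The system is then equivalent to the integral equations
\begin{gather*}
X_t = \int_0^t P_s\,ds, \qquad P_t = P_0 + \int_0^t \scalar{\nabla W^{X_s}}{U\Re B_s}\,ds, \\
B_t = e^{-\i Ht}B_0 - \i\int_0^t e^{-\i H(t-s)}W^{X_s}\,ds,
\end{gather*}
and I would seek a fixed point of the associated map $\Phi$.

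The ambient space is the closed ball
\[ \mc{Y}_T^R = \bigl\{ (X,P,B) \in L^\infty([0,T]; \R^3\times\R^3\times H_{a+1}^b(\R^3)) : \|X\|_\infty + \|P\|_\infty + \sup_t\|B_t\|_{H_{a+1}^b} \leq R\bigr\}, \]
with $R$ chosen to accommodate the free evolution and $T > 0$ small. Boundedness of $\Phi$ on $\mc{Y}_T^R$ is routine: $e^{-\i Ht}$ is unitary on every $H_c^d$-space (its Fourier symbol has modulus one), translation preserves the $H_{a+1}^b$-norm, and the forcing in $\dot P$ is controlled by the duality pairing
\[ |\scalar{\nabla W^{X_s}}{U\Re B_s}| = |\scalar{U\nabla W^{X_s}}{\Re B_s}| \lesssim \|U\nabla W\|_{H_{-a-1}^{-b}}\|B_s\|_{H_{a+1}^b}, \]
whose right-hand side is finite under the standing hypotheses $W\in L^1\cap H^b$, $b\geq 1$, $a<1/2$, by a routine split into a low-frequency contribution (using $\hat W\in L^\infty$ from $W\in L^1$ together with $a+1 < 3/2$) and a high-frequency contribution (using $W\in H^b$ together with $b\geq 1$). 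Each component of $\Phi$ thus differs from its initial iterate by an $O(T)$-term in the strong norm, and for $R$ large enough one obtains $\Phi(\mc{Y}_T^R) \subset \mc{Y}_T^R$.

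The main obstacle is the Lipschitz estimate needed to turn $\Phi$ into a contraction. The inequality $|e^{-\i X_s\cdot\xi}-e^{-\i X'_s\cdot\xi}|\leq |\xi|\,|X_s-X'_s|$ costs an extra power of $|\xi|$ at high frequencies, so the naive bound on $\|W^{X_s}-W^{X'_s}\|_{H_{a+1}^b}$ would require $W\in H^{b+1}$, which is more than is assumed. To circumvent this I would equip $\mc{Y}_T^R$ with the weaker metric
\[ d_T\bigl((X,P,B),(X',P',B')\bigr) = \|X-X'\|_\infty + \|P-P'\|_\infty + \sup_t\|B_t-B'_t\|_{H_{a+1}^{b-1}}, \]
with respect to which the ball remains complete because the strong $H_{a+1}^b$-bound survives weak-$*$ limits. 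In the weaker target $H_{a+1}^{b-1}$ one has the estimate $\|W^{X_s}-W^{X'_s}\|_{H_{a+1}^{b-1}}\lesssim |X_s-X'_s|\,\|W\|_{H_{a+1}^b}$ using only the assumed regularity. Combining this with the trivial $T$-gain in $\tilde X$ and with the pairing inequality applied to differences (splitting $\nabla W^{X_s}\Re B_s - \nabla W^{X'_s}\Re B'_s$ into an $X$-difference and a $B$-difference piece), the map $\Phi$ becomes a $d_T$-contraction on $\mc{Y}_T^R$ for $T$ small in terms of $\|W\|_{L^1\cap H^b}$, $|P_0|$, $\|\varphi_0\|_{H_a^b}$, and $\|\pi_0\|_{H_{a+1}^b}$. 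The unique fixed point lies in $\mc{Y}_T^R$ as measurable functions, and the $H_{a+1}^b$-continuity of each of the three right-hand sides (the Duhamel integral of the uniformly $H_{a+1}^b$-bounded forcing $W^{X_s}$ is continuous in $t$) forces the fixed point to lie in $C([0,T]; H_{a+1}^b)$. Translating back via $\beta = U_r B$ recovers the statement of the proposition, while uniqueness in the original class follows from the contraction on any common interval of existence.
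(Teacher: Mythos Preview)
Your argument is correct, but it takes a different and somewhat heavier route than the paper's. The paper does not set up a fixed point for the full triple $(X,P,B)$ with a two-norm (Kato-type) metric; instead it substitutes the Duhamel formula \eqref{eq.B_duhamel} for $B$ directly into the equation for $P$, obtaining a single fixed-point equation $\Phi(P)=P$ on the finite-dimensional space $C([0,T];\R^3)$. The derivative-loss obstacle you identify is then handled not by weakening the metric on $B$, but by noticing that in the $P$-equation $\nabla W^{X_s}$ only ever appears paired against $e^{-\i Hs}UB_0$ or $e^{-\i H(s-\tau)}UW$, so its Lipschitz variation can be measured in the \emph{dual} space $H_{-a}^{-b}$: one estimates $\|(\nabla W)^{Y_s}-(\nabla W)^{Y'_s}\|_{H_{-a}^{-b}}\lesssim |Y_s-Y'_s|\,\|\nabla^2 W\|_{H_{-a}^{-b}}\lesssim |Y_s-Y'_s|\,\|W\|_{H_{a+1}^b}$, where the last inequality uses exactly $b\geq 1$ and $a\leq 1/2$. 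This yields a contraction on $C([0,T];\R^3)$ with no weak-$*$ compactness argument needed; $X$ and $B$ are then recovered a posteriori from $P$. Your weak-metric approach works too and is a standard device, but the paper's reduction to the particle variable alone is more economical and makes the role of the constraints $a<1/2$, $b\geq 1$ slightly more transparent. A minor point: in the paper's formulation the lifespan $T$ depends only on $\|W\|_{H_{a+1}^b}$ and the field data, not on $|P_0|$, since $\Phi$ acts on all of $C([0,T];\R^3)$ rather than on a ball.
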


\begin{proof}
For convenience, we let
\[ \mc{R} = \| W \|_{ H_{a+1}^b } \text{,} \qquad R = \| \varphi_0 \|_{ H_a^b } + \| \pi_0 \|_{ H_{a+1}^b } \text{,} \]
and we make the usual change of variables, $B = U_r^{-1} \beta$, as in Section \ref{sec.equations_cov}.
Integrating the second equation in \eqref{eq.emodel_alt} and applying \eqref{eq.B_duhamel}, we obtain the fixed point equation $\Phi (P) = P$, where the map $\Phi$, from $C ( [0, T]; \R^3 )$ into itself, is given by
\begin{align}
\label{eql.emodel_lwp_2} [ \Phi (Q) ]_t &= P_0 + \real \int_0^t \int_{ \R^3 } ( \nabla W )^{Y_s} e^{- \i H s} U B_0 dx ds \\
\notag &\qquad - \real \i \int_0^t \int_{ \R^3 } ( \nabla W )^{Y_s} \int_0^s e^{ -\i H (s - \tau) } U W^{Y_\tau} d\tau dx ds \text{,}
\end{align}
with $Y \in C ( [0, T]; \R^3 )$ defined
\[ Y_t = X_0 + \int_0^t Q_s ds = \int_0^t Q_s ds \text{.} \]
It suffices now to show $\Phi$ is a contraction on $C ( [0, T]; \R^3 )$ for small enough $T$.

Given $Q, Q^\prime \in C ( [0, T]; \R^3 )$, we have
\begin{align}
\label{eql.emodel_lwp_3} [ \Phi (Q) - \Phi (Q^\prime) ]_t &= \real \int_0^t \int_{ \R^3 } I_1 (s) \cdot e^{- \i H s} U B_0 dx ds \\
\notag &\qquad - \real \i \int_0^t \int_0^s \int_{ \R^3 } I_2 (s, \tau) \cdot e^{ -\i H (s - \tau) } U W dx d \tau ds \text{,} \\
\notag I_1 (s) &= ( \nabla W )^{ Y_s } - ( \nabla W )^{ Y^\prime_s } \\
\notag I_2 (s, \tau) &= ( \nabla W )^{ Y_s - Y_\tau } - ( \nabla W )^{ Y^\prime_s - Y^\prime_\tau } \text{,}
\end{align}
where $Y^\prime$ is constructed from $Q^\prime$ in the same manner as $Y$ from $Q$.
To bound \eqref{eql.emodel_lwp_3}, we begin by writing
\begin{align*}
\| \Phi (Q) - \Phi (Q^\prime) \|_{ L^\infty } &\leq T \sup_{ s \in [0, T] } \| I_1 (s) \|_{ H_{-a}^{-b} } \| e^{- \i H s} U B_0 \|_{ H_a^b } \\
&\qquad + T^2 \sup_{ s, \tau \in [0, T] } \| I_2 (s, \tau) \|_{ H_{-a}^{-b} } \| e^{- \i H (s - \tau)} U W \|_{ H_a^b } \\
&\leq T \| B_0 \|_{ H_{a+1}^b } \sup_{ s \in [0, T] } \| I_1 (s) \|_{ H_{-a}^{-b} } \\
&\qquad + T^2 \| W \|_{ H_{a+1}^b } \sup_{ s, \tau \in [0, T] } \| I_2 (s, \tau) \|_{ H_{-a}^{-b} } \\
&\lesssim T R \sup_{ s \in [0, T] } \| I_1 (s) \|_{ H_{-a}^{-b} } + T^2 \mc{R} \sup_{ s, \tau \in [0, T] } \| I_2 (s, \tau) \|_{ H_{-a}^{-b} } \text{.}
\end{align*}

As a result, our goal is to control $I_1 (s)$ and $I_2 (s, \tau)$.
By the mean value theorem and the translation-invariance of the $H_a^b$-norms, we obtain
\begin{align*}
\sup_{ s \in [0, T] } \| I_1 (s) \|_{ H_{-a}^{-b} } &\lesssim \| Y - Y^\prime \|_{ L^\infty } \| \nabla^2 W \|_{ H_{-a}^{-b} } \\
&\lesssim T \| W \|_{ H_{-a+2}^{-b+2} } \| Q - Q^\prime \|_{ L^\infty } \\
&\lesssim T \mc{R} \| Q - Q^\prime \|_{ L^\infty } \text{.}
\end{align*}
In the last step, we used the assumptions $b \geq 1$ and $a \leq 1/2$ to bound the $H_{-a+2}^{-b+2}$-norm by the $H_{a+1}^b$-norm.
Similarly, we can bound $I_2 (s, \tau)$ as follows:
\[ \sup_{ s, \tau \in [0, T] } \| I_2 (s, \tau) \|_{ H_{-a}^{-b} } \lesssim \| Y - Y^\prime \|_{ L^\infty } \| \nabla^2 W \|_{ H_{-a}^{-b} } \lesssim T \mc{R} \| Q - Q^\prime \|_{ L^\infty } \text{.} \]

Finally, combining all the above, we obtain
\[ \| \Phi (Q) - \Phi (Q^\prime) \|_{ L^\infty } \lesssim \mc{R} ( T^2 R + T^3 \mc{R} ) \| Q - Q^\prime \|_{ L^\infty } \text{.} \]
Therefore, if $T$ is sufficiently small, depending on $R$ and $\mc{R}$, then $\Phi$ is a contraction on $C ( [0, T]; \R^3 )$.
We set $P$ to be the unique fixed point of $\Phi$.

With $P$ in hand, we can now define $X$ by
\[ X_t = \int_0^t P_s ds \text{,} \]
and then $B$ using the Duhamel formula \eqref{eq.B_duhamel}.
In particular, since $W$ has finite $H_{a+1}^b$-norm, then \eqref{eq.B_duhamel} implies $B \in C ( [0, T]; H_{a+1}^b (\R^3) )$.
It is clear that $X$, $P$, and $B$ form the unique solution to \eqref{eq.emodel_alt} in the given spaces, with initial data $X_0 = 0$, $P_0$, and $B_0$, respectively.
Finally, defining $\beta_t = U_r B_t$ for each $t \in [0, T]$, then
\[ \real \beta \in C ( [0, T]; H_a^b (\R^3) ) \text{,} \qquad \imag \beta = C ( [0, T]; H_{a+1}^b (\R^3) ) \text{.} \]
Moreover, $X$, $P$, and $\beta$ must form the unique solution to \eqref{eq.emodel} in the desired spaces, with initial data $X_0 = 0$, $P_0$, and $(\varphi_0, \pi_0)$, respectively.
\end{proof}

\subsection{Global Well-Posedness}

We can now complete the proof of Theorem \ref{thm.emodel_gwp}.
First of all, since $W \in L^1 (\R^3) \cap H^b (\R^3)$ by the assumptions of Theorem \ref{thm.emodel_gwp}, then $W \in H_{a+1}^b (\R^n)$ for all $a \in (-5/2, 1/2)$.
Suppose we have a local solution of \eqref{eq.emodel},
\[ X, P \in C ( [0, T); \R^3 ) \text{,} \qquad (\real \beta, \imag \beta) \in C ( [0, T); H_a^b (\R^3) \times H_{a+1}^b (\R^3) ) \text{,} \]
for a fixed $a$ in this range, with $T < \infty$.
Since the time of existence from Proposition \ref{thm.emodel_lwp} depends only on the size of the initial data for $\beta$, then it suffices to show
\begin{equation} \label{eql.emodel_gwp} \sup_{0 \leq t < T} \| B_t \|_{ H_{a+1}^b (\R^3) } \leq C \text{,} \end{equation}
where $B_t = U_r^{-1} \beta_t$.
If \eqref{eql.emodel_gwp} holds, then one can apply Proposition \ref{thm.emodel_lwp} with initial data $(X_{T - \epsilon}, P_{T - \epsilon}, \beta_{T - \epsilon})$, with sufficiently small $\epsilon > 0$, in order to push the above solution beyond time $T$.
This implies that there cannot be a finite maximal time of existence and hence complete the proof.

To establish \eqref{eql.emodel_gwp}, we can simply use \eqref{eq.B_duhamel}:
\begin{align*}
\| B_t \|_{ H_{a+1}^b } &\leq \| e^{-\i t H} B_0 \|_{ H_{a+1}^b } + T \sup_{s \in [0, T)} \| e^{-\i (t - s) H} W^{X_s} \|_{ H_{a+1}^b } \\
&\lesssim \| B_0 \|_{ H_{a+1}^b } + T \| W \|_{ H_{a+1}^b } \text{.} \qedhere
\end{align*}
This concludes the proof of Theorem \ref{thm.emodel_gwp}.

\section{Decay Estimates} \label{sec.decay}

We collect in this appendix the decay estimates, in both time and space, that are used in the paper.
First, we give a brief overview of Littlewood-Paley theory, which will be used in the decay estimates of Sections \ref{sec.decay_unif} and \ref{sec.decay_spatial}.

\subsection{Littlewood-Paley Theory} \label{sec.LP}

Let $\psi \in \mc{S} (\R^3)$ be compactly supported in the annulus $1/2 \leq | \xi | \leq 2$.
For each $k \in \mathbb{Z}$, we let $\psi_k \in \mc{S} (\R^3)$ be its rescaling
\[ \psi_k (\xi) = \psi (2^{-k} \xi) \text{.} \]
In addition, we can choose $\psi$ such that
\[ \sum_{k \in \mathbb{Z}} \psi_k (\xi) = 1 \text{,} \qquad \xi \in \R^3 \setminus \{ 0 \} \text{.} \]
We can also assume $\psi$ is spherically symmetric.

\begin{remark}
Note that by scaling properties, $\psi_k$ is in the class $\mc{M}_0^0$.
\end{remark}

Recall that the usual (homogeneous) Littlewood-Paley operators are defined as
\[ P_k: \mc{S} (\R^3) \rightarrow \mc{S} (\R^3) \text{,} \qquad \widehat{P_k u} = \psi_k \hat{u} \text{,} \qquad k \in \ZZ \text{.} \]
In other words, $P_k u$ is a ``projection'' of $u$ to the frequency band $| \xi | \simeq 2^k$.
For convenience, we sometimes adopt the notation
\[ P_{\sim k} = P_{k-1} + P_k + P_{k+1} \text{,} \qquad k \in \ZZ \text{.} \]
Due to Fourier support considerations, we have $P_k = P_k P_{\sim k}$.

Next, we recall the Bernstein, or finite band, inequalities, cf. \cite{bah_che_dan:fa_pde, tao:disp_eq}, which use the operators $P_k$ to convert derivatives to constants, and vice versa.

\begin{proposition} \label{thm.bernstein}
Let $p \in [1, \infty]$, and let $f \in \mc{S} (\R^3)$.
\begin{itemize}
\item If $b \in \R$, then
\begin{equation} \label{eq.bernstein_deriv} \| P_k | \nabla |^b f \|_{ L^p } \lesssim 2^{bk} \| f \|_{ L^p } \text{.} \end{equation}

\item In addition,
\begin{equation} \label{eq.bernstein_grad} \| P_k \nabla f \|_{ L^p } \lesssim 2^k \| f \|_{ L^p } \text{,} \qquad \| P_k f \|_{ L^p } \lesssim 2^{-k} \| \nabla f \|_{ L^p } \text{.} \end{equation}
\end{itemize}
\end{proposition}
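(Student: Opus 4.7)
The plan is to reduce all three estimates to a single application of Young's convolution inequality, once we pin down the $L^1$-norm of the appropriate convolution kernel.

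First I would observe that each of the operators $P_k|\nabla|^b$, $P_k\nabla$, and the operator sending $\nabla f$ to $P_k f$ is a Fourier multiplier whose symbol is of the form $\psi_k(\xi)\,m(\xi)$ for a function $m$ that is homogeneous (of degree $b$, $1$, or $-1$, respectively). The crucial point is that $\psi$ is supported in the annulus $\{1/2\le|\xi|\le 2\}$, which stays away from both the origin and infinity; consequently $\psi(\eta)m(\eta)$ is always smooth and compactly supported, so its inverse Fourier transform $K$ is Schwartz, hence in $L^1(\R^3)$, \emph{regardless} of the sign of $b$ or the singularity of $|\eta|^{-1}$ at the origin.

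Next I would pass from scale $0$ to scale $k$ via the change of variables $\xi = 2^k\eta$. A direct computation shows that the kernel of the multiplier with symbol $\psi_k(\xi)m(\xi)$ equals $2^{(3+d)k}K(2^k\,\cdot\,)$, where $d$ is the degree of homogeneity of $m$. Therefore its $L^1$-norm equals $2^{dk}\|K\|_{L^1}$, giving factors $2^{bk}$, $2^k$, and $2^{-k}$ in the three cases. Young's inequality then yields precisely the three bounds claimed in the statement: for the third one, one writes $\widehat{P_k f}(\xi) = \sum_j [\psi_k(\xi)\xi_j/(i|\xi|^2)]\cdot\widehat{\partial_j f}(\xi)$ so that $P_k f$ is the convolution of $\nabla f$ with a kernel of $L^1$-norm $\lesssim 2^{-k}$.

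There is no serious obstacle; the one point that deserves a line of justification is the smoothness of $\psi(\eta)|\eta|^b$ (and $\psi(\eta)\eta/|\eta|^2$) on all of $\R^3$ in spite of the potential singularity of $|\eta|^b$ at the origin, which is immediate from the annular support of $\psi$. Everything else is scaling and Young's inequality.
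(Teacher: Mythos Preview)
Your argument is correct. Note, however, that the paper does not actually give a proof of this proposition: it is stated as a standard result with references to \cite{bah_che_dan:fa_pde, tao:disp_eq}, and only the generalization in Proposition~\ref{thm.bernstein_mult} receives a (one-line) proof sketch. Your write-up is precisely the classical scaling-plus-Young argument one finds in those references, so there is nothing to compare; you have simply supplied the omitted standard proof. The identification of the rescaled kernel as $2^{(3+d)k}K(2^k\cdot)$ with $L^1$-norm $2^{dk}\|K\|_{L^1}$ is correct, and your treatment of the reverse inequality via the identity $\widehat{P_k f}(\xi)=\sum_j\bigl[\psi_k(\xi)\xi_j/(i|\xi|^2)\bigr]\widehat{\partial_j f}(\xi)$ is clean.
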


In this paper, we require more than the usual finite band estimates found in \eqref{eq.bernstein_deriv} and \eqref{eq.bernstein_grad}.
We need extensions of these estimates to other Fourier multiplier operators.
Examples of such operators include $U$ and powers of $H_v$, where $v \in \RR^3$.
The generalized estimate we need is given in the subsequent proposition.

\begin{proposition} \label{thm.bernstein_mult}
Let $a, b \in \RR$, let $p \in [1, \infty]$, and let $M$ be a Fourier multiplier operator, with symbol $m: \R^3 \setminus \{ 0 \} \rightarrow \C$ in the class $\mc{M}_a^b$.
Then, for any $f \in \mc{S} (\R^3)$,
\begin{equation} \label{eq.bernstein} \| P_k M f \|_{ L^p } \lesssim_m \begin{cases} 2^{bk} \| f \|_{ L^p } & k \geq 0 \text{,} \\ 2^{ak} \| f \|_{ L^p } & k \leq 0 \text{.} \end{cases} \end{equation}
\end{proposition}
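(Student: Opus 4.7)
The plan is to write $P_k M$ as convolution with a kernel and invoke Young's inequality to reduce the claim to an $L^1$-bound on that kernel. Let $K_k$ denote the kernel of $P_k M$, characterized by $\mc{F}(K_k) = \psi_k m$; since $\psi_k m$ is smooth and compactly supported (here I use that the singularity of $m$ at the origin is cut out by $\psi_k$), $K_k$ is Schwartz, and $P_k M f = K_k * f$ for every $f \in \mc{S}(\R^3)$. Young's inequality then gives $\| P_k M f \|_{ L^p } \leq \| K_k \|_{ L^1 } \| f \|_{ L^p }$ for every $p \in [1,\infty]$, so it suffices to prove
\[ \| K_k \|_{ L^1 } \lesssim_m \begin{cases} 2^{bk} & k \geq 0, \\ 2^{ak} & k \leq 0. \end{cases} \]

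To estimate $\| K_k \|_{ L^1 }$, I rescale in frequency. Setting $\phi_k (\eta) = \psi (\eta) m (2^k \eta)$ and substituting $\xi = 2^k \eta$ in the inverse Fourier integral defining $K_k$ yields the identity $K_k (x) = c \cdot 2^{3k} ( \mc{F}^{-1} \phi_k ) (2^k x)$ for a harmless dimensional constant $c$, and a change of variable in the $L^1$-integral then gives $\| K_k \|_{ L^1 } = c \| \mc{F}^{-1} \phi_k \|_{ L^1 }$. Since $\phi_k$ is smooth and supported in the \emph{fixed} annulus $\{ 1/2 \leq |\eta| \leq 2 \}$, repeated integration by parts in the inverse Fourier integral yields the standard pointwise bound
\[ | ( \mc{F}^{-1} \phi_k ) (y) | \lesssim_N ( 1 + | y | )^{-N} \| \phi_k \|_{ C^N } \]
for every $N$, and choosing $N = 4$ produces $\| \mc{F}^{-1} \phi_k \|_{ L^1 } \lesssim \| \phi_k \|_{ C^4 }$.

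The remaining step is to bound $\| \phi_k \|_{ C^N }$ using the hypothesis $m \in \mc{M}_a^b$. On the support of $\psi$ one has $|\eta| \simeq 1$ and therefore $|2^k \eta| \simeq 2^k$, so the estimate \eqref{eq.MN} yields $| (\partial^\beta m) (2^k \eta) | \lesssim_{m,\beta} (2^k)^{c - |\beta|}$, with $c = b$ when $k \geq 0$ and $c = a$ when $k \leq 0$; meanwhile the chain rule generates exactly matching factors $2^{k|\beta|}$ from derivatives falling on the rescaled argument $2^k \eta$. These factors cancel, leaving $\| \phi_k \|_{ C^N } \lesssim_{m,N} 2^{ck}$ as required. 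The main piece of bookkeeping is that \eqref{eq.MN} splits into two regimes at $|\xi| = 1$: for $|k| \geq 1$ the rescaled support $\{ 2^{k-1} \leq |\xi| \leq 2^{k+1} \}$ lies entirely in one regime, and for $k = 0$ either branch of \eqref{eq.MN} supplies an $O(1)$ bound on the compact annulus $|\xi| \simeq 1$, so no difficulty arises at the boundary.
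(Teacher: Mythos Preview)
Your proof is correct and is precisely the standard argument the paper defers to (it cites \cite[Lemma 2.2]{bah_che_dan:fa_pde} without giving details): write $P_k M$ as convolution with a kernel, rescale to a fixed annulus, and bound the $L^1$-norm of the rescaled kernel via $C^N$-bounds obtained from the $\mc{M}_a^b$ hypothesis. Your bookkeeping at $k=0$ and the cancellation of the $2^{k|\beta|}$ chain-rule factors against the $|\xi|^{-|\beta|}$ in \eqref{eq.MN} are handled correctly.
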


\begin{proof}
The proof is a straightforward adaptation of standard Fourier multiplier estimates; see, e.g., \cite[Lemma 2.2]{bah_che_dan:fa_pde}.
\end{proof}

\subsection{Subsonic Decay Estimates} \label{sec.decay_radial}

First, we discuss a simpler class of stationary phase estimates, in which the radial derivative of the phase never vanishes.
In this case, we can always integrate by parts in the radial direction.

In particular, this situation holds when the phase is the symbol $h_v$ of $H_v$ and when $|v| < 1$; see \eqref{eq.phase_subsonic}.
We encounter precisely this situation in the proof of Theorem \ref{thm:technical}, in particular for controlling the acceleration of the tracer particle.

\begin{proposition} \label{thm.dispersive_subsonic}
Assume the following:
\begin{itemize}
\item Let $l \in \ZZ$ and $b \in \RR$, with $l > 0$ and $b \leq 0$.

\item Let $v \in \RR^3$, with $|v| < 1$.

\item Suppose $m: \R^3 \setminus \{ 0 \} \rightarrow \C$ is in the class $\mc{O}^b$.

\item Let $f, g \in \mc{S} (\R^3)$.
\end{itemize}
Then, for any $t \in (0, \infty)$, we have the estimate
\begin{align}
\label{eq.dispersive_subsonic} &\left| \int_{ \R^3 } e^{-i t h_v ( \xi ) } | \xi |^l m (\xi) \hat{f} (\xi) \bar{\hat{g}} (\xi) \d \xi \right| \lesssim_{l, b, |v|, m} t^{-l - 3} \| \langle x \rangle^{l + 3} f \|_{ L^2 } \| \langle x \rangle^{l + 3} g \|_{ L^2 } \text{.}
\end{align}
\end{proposition}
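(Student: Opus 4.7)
The proof proceeds by passing to polar coordinates $\xi = \rho\theta$ and then integrating by parts repeatedly in the radial variable $\rho$, exploiting the uniform bound $h'_{v,\theta}(\rho) \simeq \langle \rho \rangle$ from \eqref{eq.phase_subsonic} (which holds precisely because $|v| < 1$). The plan is first to write
\begin{equation*}
I(t) = \int_{\Sph^2} \int_0^\infty e^{-\i t h_{v,\theta}(\rho)} \rho^{l+2}\, m(\rho\theta)\, \hat{f}(\rho\theta)\, \overline{\hat{g}(\rho\theta)}\, \d\rho\, \d\theta,
\end{equation*}
and then introduce the first-order operator $L[F] := \partial_\rho(F/h'_{v,\theta})$, so that $\int_0^\infty e^{-\i t h_{v,\theta}} F\, \d\rho = (\i t)^{-1} \int_0^\infty e^{-\i t h_{v,\theta}} L[F]\, \d\rho$ modulo boundary terms. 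Iterating $N = l+3$ times, each application yields a factor $t^{-1}$, and an induction using the $\mc{O}$-class calculus of Section \ref{sec.symbol} (noting $1/h'_{v,\theta} \in \mc{O}^{-1}$ uniformly in $\theta$) gives $L^N[F] = \sum_{k=0}^{N} c_{N,k}(\rho)\, F^{(k)}(\rho)$ with $c_{N,k} \in \mc{O}^{k - 2N}$ uniformly in $\theta$.

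Next I would handle the boundary terms. At $\rho = \infty$, Schwartz decay of $\hat f$ and $\hat g$ kills every contribution. At $\rho = 0$, the prefactor $\rho^{l+2}$ (with $l \geq 1$) ensures that all derivatives of $\rho^{l+2} \cdot(\text{smooth})$ up to order $l+1$ vanish at the origin, so the first $l+2$ boundary contributions vanish. The single remaining boundary contribution, arising from the last IBP, already carries $t^{-(l+3)}$ and is bounded directly by $|m(0)|\,|\hat{f}(0)|\,|\hat{g}(0)| \lesssim \|\langle x \rangle^{2} f\|_{L^2} \|\langle x \rangle^{2} g\|_{L^2}$ using the Sobolev-type embedding $\|h\|_{L^1(\R^3)} \lesssim \|\langle x \rangle^{2} h\|_{L^2}$. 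For the bulk integral, expanding via Leibniz produces a finite sum of terms of the form $C(\rho)\, \rho^{l+2-j}\, m_\theta^{(m_1)}(\rho)\, \partial_\rho^{m_2}[\hat f(\rho\theta)]\, \partial_\rho^{m_3}[\overline{\hat g(\rho\theta)}]$ with $j + m_1 + m_2 + m_3 \leq l+3$, $j \leq l+2$, and $|C(\rho)| \lesssim \langle\rho\rangle^{k - 2(l+3)}$ for some $k \leq l+3$. The key identity $\partial_\rho^{n}[\hat f(\rho\theta)] = \widehat{(-\i \theta\cdot x)^n f}(\rho\theta)$ converts radial derivatives into polynomial weights on $f, g$, compatible with the $L^2$ structure via Plancherel.

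Returning to Cartesian coordinates, each resulting piece reduces to estimating $\int_{\R^3} |\xi|^{l-j} \langle\xi\rangle^{\alpha}\, |\widehat{f_{m_2}}(\xi)|\, |\widehat{g_{m_3}}(\xi)|\, \d\xi$ with $\alpha \leq 0$, where $f_{m_2}(x) = (\hat\xi \cdot x)^{m_2} f(x)$ and similarly for $g$. For $|\xi| \geq 1$ the weight is bounded, so Cauchy-Schwarz plus Plancherel immediately yield a bound by $\|\langle x \rangle^{m_2} f\|_{L^2} \|\langle x \rangle^{m_3} g\|_{L^2} \leq \|\langle x\rangle^{l+3}f\|_{L^2}\|\langle x\rangle^{l+3}g\|_{L^2}$. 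The main obstacle is the low-frequency region when $j \in \{l+1, l+2\}$, which produces a $|\xi|^{-1}$ or $|\xi|^{-2}$ singularity at the origin. The plan here is to exploit the constraint $m_2 + m_3 \leq l+3 - j \leq 2$ in these cases and replace Cauchy-Schwarz by the $L^\infty$ estimate $|\widehat{f_{m_2}}(\xi)| \lesssim \|\langle x \rangle^{2} f_{m_2}\|_{L^2}$ (and likewise for $g$), combined with the local integrability of $|\xi|^{-2}$ in $\R^3$. Since $m_2 + 2,\, m_3 + 2 \leq 4 \leq l+3$ for $l \geq 1$, these two extra $\langle x \rangle$-weights are absorbed into the stated $\langle x \rangle^{l+3}$ bound, concluding the estimate.
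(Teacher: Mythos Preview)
Your proposal is correct and follows essentially the same route as the paper: pass to polar coordinates, integrate by parts $l+3$ times in $\rho$ using the non-stationary phase bound \eqref{eq.phase_subsonic}, note that the $\rho^{l+2}$ prefactor kills all boundary contributions at the origin except at the final step, and then estimate the resulting terms by controlling low-order derivatives of $\hat f\bar{\hat g}$ in $L^\infty$ (equivalently, via the weighted $L^1\hookrightarrow L^2$ embedding you invoke) and higher-order derivatives in $L^2$ via Plancherel. The only difference is bookkeeping---you organize terms by the index $j$ (derivatives landing on $\rho^{l+2}$) and handle the singular low-frequency pieces via the local integrability of $|\xi|^{-2}$ in $\R^3$, whereas the paper organizes by the total derivative count on $F_\theta$ and stays in the one-dimensional $\rho$-integral---but the substance is identical.
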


\begin{proof}
Throughout, we write $\mc{O}^r$ to denote any function in the class $\mc{O}^r$.
Expressing \eqref{eq.dispersive_subsonic} in polar coordinates and letting $F = \hat{f} \bar{\hat{g}}$, it suffices to derive
\[ \left| \int_{ \Sph^2 } \int_0^\infty e^{-i t h_{v, \theta} (\rho) } \rho^{l + 2} \mc{O}^b (\rho) F_\theta (\rho) \d \rho \d \theta \right| \lesssim t^{-l - 3} \| \langle x \rangle^{l + 3} f \|_{ L^2 } \| \langle x \rangle^{l + 3} g \|_{ L^2 } \text{.} \]
Let $I$ denote the left-hand side of the above inequality.

Recalling \eqref{eq.phase_subsonic}, we bound
\begin{align*}
I &= t^{-1} \left| \int_{ \Sph^2 } \int_0^\infty \rho^{l + 2} \mc{O}^b (\rho) \frac{ \frac{d}{d \rho} e^{-i t h_{v, \theta} (\rho) } }{ h_{v, \theta}^\prime (\rho) } F_\theta (\rho) d \rho d \theta \right| \\
&\lesssim t^{-1} \left| \int_{ \Sph^2 } \int_0^\infty \rho^{l + 2} \mc{O}^{b - 1} (\rho) \frac{d}{d \rho} e^{-i t h_{v, \theta} (\rho) } F_\theta (\rho) d \rho d \theta \right| \text{.}
\end{align*}
Integrating the above by parts yields
\[ I \lesssim t^{-1} \left| \int_{ \Sph^2 } \int_0^\infty e^{-i t h_{v, \theta} (\rho) } \mc{O}^{b-1} (\rho) [ \rho^{l + 1} F_\theta (\rho) + \rho^{l + 2} F_\theta^\prime (\rho) ] d \rho d \theta \right| \text{.} \]
Moreover, we can iterate this process until all positive powers of $\rho$ are exhausted:
\begin{align*}
I &\lesssim t^{-l-3} \int_{ \Sph^2 } \int_0^\infty | \mc{O}^{b - l - 3} (\rho) | | F_\theta (\rho) | d \rho d \theta + t^{-l-3} | F_\theta (0) | \\
&\qquad\qquad + t^{-l-3} \int_{ \Sph^2 } \int_0^\infty | \mc{O}^{b - l - 2} (\rho) | \sum_{k = 0}^{l + 2} \rho^k | F_\theta^{ (k+1) } (\rho) | d \rho d \theta \text{.}
\end{align*}
Note that in the final integration by parts, since no powers of $\rho$ remain, we pick up an extra boundary term at the origin.

Noting our assumptions for $l$ and $b$, we have
\begin{align*}
I &\lesssim t^{-l - 3} ( \| F \|_{ L^\infty } + \| \nabla F \|_{ L^\infty } + \| \nabla^2 F \|_{ L^\infty } ) \\
&\qquad + t^{-l - 3} \int_{\Sph^2} \int_0^\infty \mc{O}^{b - l - 2} (\rho) \rho^2 \sum_{ k = 0 }^l \rho^k | F_\theta^{ (k + 3) } (\rho) | d \rho d \theta \\
&\lesssim I_1 + I_2 \text{.}
\end{align*}
For $I_1$, we can trivially bound
\begin{align*}
I_1 &\lesssim t^{-l - 3} \sum_{ |\alpha| + |\beta| \leq 2 } \| \partial^\alpha \hat{f} \|_{ L^\infty } \| \partial^\beta \hat{g} \|_{ L^\infty } \lesssim \| \langle x \rangle^2 f \|_{ L^1 } \| \langle x \rangle^2 g \|_{ L^1 } \text{.}
\end{align*}
For $I_2$, we convert back to Cartesian coordinates to obtain
\begin{align*}
I_2 &\lesssim t^{-l - 3} \sum_{ |\alpha| + |\beta| \leq l + 3 } \| \partial^\alpha \hat{f} \|_{ L^2 } \| \partial^\beta \hat{g} \|_{ L^2 } \lesssim t^{-l - 3} \| \langle x \rangle^{l + 3} f \|_{ L^2 } \| \langle x \rangle^{l + 3} g \|_{ L^2 } \text{.}
\end{align*}
Combining the above completes the proof.
\end{proof}

\begin{proposition} \label{lem:decayestimates}
Let $v \in \RR^3$, with $|v| < 1$, and let $f \in \mc{S} (\R^3)$, with $\hat{f}$ an even function.
Then, for any $N \geq 3$ and $t \in (0, \infty)$, we have
\begin{align}
\label{eq:decayestimates} \Im \scalar{ \nabla f }{ U \e^{-\i t H_v} H_v^{-2} \nabla f } \lesssim_{|v|, f, N} (1 + t)^{-N} \| \langle x \rangle^N f \|_{ L^2 }^2 \text{.}
\end{align}
\end{proposition}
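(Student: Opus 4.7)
The plan is to express $I(t) := \scalar{\nabla f}{U e^{-\i tH_v}H_v^{-2}\nabla f}$ as an oscillatory Fourier integral and then exploit the evenness of $\hat{f}$ to extend the radial variable to all of $\R$, thereby eliminating the boundary at the origin so that iterated non-stationary-phase integration by parts yields arbitrary polynomial decay.

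First, since $\nabla$, $U$, $H_v$, $e^{-\i t H_v}$ are mutually commuting Fourier multipliers with symbols $\i\xi$, $|\xi|/\langle\xi\rangle$, $h_v$, $e^{-\i t h_v}$, a direct computation gives
\[ I(t) = \int_{\R^3}\frac{|\xi|^3}{\langle\xi\rangle\, h_v(\xi)^2}\, e^{\i t h_v(\xi)}\,|\hat{f}(\xi)|^2\, d\xi. \]
Passing to polar coordinates $\xi=\rho\theta$, this becomes $\int_{\Sph^2}\int_0^\infty A(\rho,\theta)\, e^{\i t\phi(\rho,\theta)}\,d\rho\, d\theta$, with phase $\phi(\rho,\theta)=\rho(\langle\rho\rangle-v\cdot\theta)$ and amplitude $A(\rho,\theta)=\rho^3|\hat{f}(\rho\theta)|^2 \bigl[\langle\rho\rangle(\langle\rho\rangle-v\cdot\theta)^2\bigr]^{-1}$.

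The key observation is that both $\phi(\cdot,\theta)$ and $A(\cdot,\theta)$ extend naturally to $\R$ as odd functions of $\rho$: for $\phi$ this is immediate; for $A$, the factor $\rho^3$ is odd while $\langle\rho\rangle$, $(\langle\rho\rangle-v\cdot\theta)^{-2}$, and $|\hat{f}(\rho\theta)|^2$ are even (the last because $\hat{f}$ is even). Since $A,\phi$ are real-valued, $J:=A e^{\i t\phi}$ satisfies $J(-\rho,\theta)=-\overline{J(\rho,\theta)}$, and a direct splitting of the $\rho$-integral shows
\[ \Im I(t)=\frac{1}{2\i}\int_{\Sph^2}\int_{-\infty}^\infty A(\rho,\theta)\, e^{\i t\phi(\rho,\theta)}\, d\rho\, d\theta. \]
The crucial gain is that on all of $\R$ the amplitude $A(\cdot,\theta)$ is Schwartz uniformly in $\theta$: the factor $(\langle\rho\rangle-v\cdot\theta)^{-2}$ is smooth with uniform bound $(1-|v|)^{-2}$, while $\rho^3$ is smooth and $\hat{f}(\rho\theta)$ is rapidly decreasing. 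The problematic boundary at $\rho=0$ has been absorbed by the symmetry.

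I would then integrate by parts $N$ times in $\rho$ against the oscillation. The phase derivative
\[ \phi'(\rho,\theta)=\frac{1+2\rho^2}{\langle\rho\rangle}-v\cdot\theta\geq 1-|v|>0 \]
is bounded below uniformly on $\R\times\Sph^2$, and higher $\rho$-derivatives of $\phi$ lie in the class $\mc{O}^0$ independently of $v$ and $\theta$. Since $A(\cdot,\theta)$ is Schwartz on $\R$, no boundary terms appear, and $N$ iterations of $\int A e^{\i t\phi}d\rho=-(\i t)^{-1}\int\partial_\rho(A/\phi')\,e^{\i t\phi}d\rho$ yield a prefactor $t^{-N}$ multiplied by the integral of a linear combination of terms of the form $c_{k,v}(\rho,\theta)\,\partial_\rho^k A$ with $k\leq N$, where the $c_{k,v}$ are built from $(\phi')^{-1}$ and its derivatives and are uniformly bounded for $|v|<1$.

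The final, most technical step, and the main obstacle, is to dominate the remaining $L^1_{\rho,\theta}$ norm of this integrand by $\|\langle x\rangle^N f\|_{L^2}^2$. Each $\partial_\rho$ landing on $\hat{f}(\rho\theta)$ produces a directional Fourier derivative $\theta\cdot\nabla_\xi \hat{f}=\theta\cdot\widehat{(-\i x)f}$, converting Fourier derivatives into $x$-weights on $f$. Undoing the polar parametrization and applying Cauchy-Schwarz together with Plancherel then dominates each such term by $\|\langle x\rangle^j f\|_{L^2}\|\langle x\rangle^{N-j}f\|_{L^2}\leq\|\langle x\rangle^N f\|_{L^2}^2$ for some $0\leq j\leq N$, establishing the bound for $t\geq 1$. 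For $0<t\leq 1$, \eqref{eq:decayestimates} follows immediately from $\|U e^{-\i tH_v}H_v^{-2}\nabla f\|_{L^2}\lesssim_{|v|}\|f\|_{L^2}$ and Cauchy-Schwarz. The careful bookkeeping in the Leibniz expansion of $\partial_\rho^N A$ and in the reduction of the resulting polar $L^1$-norm to standard weighted $L^2$-norms of $f$ is the delicate part; the requirement $N\geq 3$ arises in ensuring that sufficiently many weights are available to absorb the angular factors and the Jacobian when translating back to Cartesian coordinates.
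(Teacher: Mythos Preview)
Your proposal is correct and follows essentially the same route as the paper: both arguments write the quantity as a polar oscillatory integral, use the evenness of $\hat f$ (together with the oddness of $\rho^3$ and the evenness of the remaining factors) to extend the radial integral from $(0,\infty)$ to all of $\R$, thereby removing the boundary at $\rho=0$, and then integrate by parts $N$ times against the non-stationary radial phase $\phi'(\rho,\theta)\ge 1-|v|>0$. The only cosmetic difference is that the paper works componentwise with $I_{i,j}$ (picking up the harmless factor $\theta_i\theta_j$) and organizes the post-integration-by-parts bookkeeping by explicitly tracking the powers of $\rho$, bounding the top two $\rho$-derivatives of $|\hat f|^2$ in $L^2$ and the lower ones in $L^\infty$; your sketch of ``undoing polar coordinates plus Cauchy--Schwarz/Plancherel'' amounts to the same thing.
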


\begin{proof}
For $1 \leq i, j \leq 3$, let
\[ I_{i, j} = \Im \scalar{ \partial_i f }{ U \e^{-\i t H_v} H_v^{-2} \partial_j g } \text{.} \]
By H\"older's inequality, we can trivially estimate $I_{i, j} \lesssim \| f \|_{ L^2 }^2$.

Writing $I_{i, j}$ in polar coordinates, we have
\[ I_{i, j} = \Im \int_{\Sph^2} \theta_i \theta_j \int_0^\infty \frac{ \rho^3 | \hat{f} (\rho \theta) |^2 \e^{-\i t \rho ( \langle \rho \rangle - \theta \cdot v )} }{ \langle \rho \rangle ( \langle \rho \rangle - \theta \cdot v )^2 } \d \rho \d \theta =: \Im \int_{\Sph^2} \theta_i \theta_j I_\theta \d \theta \text{.} \]
Fixing $\theta \in \Sph^2$, we see there are smooth even functions
\[ u_\theta, w_\theta: \RR \rightarrow \RR \text{,} \qquad u_\theta = \frac{ | \hat{f} ( \rho \theta ) |^2 }{ \langle \rho \rangle ( \langle \rho \rangle - v \cdot \theta )^2 } \text{,} \qquad w_\theta (\rho) = \langle \rho \rangle - v \cdot \theta \text{,} \]
such that
\begin{align*}
I_\theta &= \int_0^\infty \rho^3 u_\theta (\rho) \e^{-\i t \rho w_\theta (\rho) } \d \rho \\
&= \frac{1}{2} \left( \int_0^\infty \rho^3 u_\theta (\rho) \e^{-\i t \rho w_\theta (\rho)} \d \rho - \int_0^\infty \rho^3 u_\theta (\rho) \e^{\i t \rho w_\theta (\rho)} \d \rho \right) \\
&= \frac{1}{2} \left( \int_0^\infty \rho^3 u_\theta (\rho) \e^{-\i t \rho w_\theta (\rho)} \d \rho + \int_{-\infty}^0 \rho^3 u_\theta (\rho) \e^{-\i t \rho w_\theta (\rho)} \d \rho \right) \text{,}
\end{align*}
where we applied a change of variables $\rho \rightarrow -\rho$ in the last step.
Thus,
\[ I_\theta = \frac{1}{2} \int_{-\infty}^\infty \rho^3 u_\theta (\rho) \e^{-\i t \rho ( \langle \rho \rangle - v \cdot \theta )} \d \rho \text{,} \]
and recalling the explicit form of $u_\theta$, we have
\[ I_{i, j} = \int_{ \Sph^2 } \theta_i \theta_j \int_{-\infty}^\infty \rho^3 \mc{O}^{-3} (\rho) | \hat{f} ( \rho \theta ) |^2 \e^{-\i t \rho ( \langle \rho \rangle - v \cdot \theta )} \d \rho \d \theta \text{.} \]

From here, the proof proceeds like for \eqref{eq.dispersive_subsonic}, except we no longer have a boundary at $\rho = 0$.
Because of \eqref{eq.phase_subsonic}, we can integrate by parts like in the proof of \eqref{eq.dispersive_subsonic} as many times as we wish.
Doing this $N$ times, we obtain
\begin{align*}
I_{i, j} &\lesssim t^{-N} \int_{\Sph^2} \left| \int_{-\infty}^\infty | \mc{O}^{-3 - N} (\rho) | [ \rho^3 \partial_\rho^N | \hat{f} (\rho \theta) |^2 + \rho^2 \partial_\rho^{N-1} | \hat{f} (\rho \theta) |^2 ] \d \rho \right| \d \theta \\
&\qquad + t^{-N} \int_{\Sph^2} \left| \int_{-\infty}^\infty | \mc{O}^{-3 - N} (\rho) | [ \rho \partial_\rho^{N-2} | \hat{f} (\rho \theta) |^2 + \partial_\rho^{N-3} | \hat{f} (\rho \theta) |^2 ] \d \rho \right| \d \theta \\
&\qquad + t^{-N} \int_{\Sph^2} \left| \int_{-\infty}^\infty \sum_{l = 0}^{N - 3} | \mc{O}^{2 N + l} (\rho) | \partial_\rho^{3 + l} | \hat{f} (\rho \theta ) |^2 \d \rho \right| \d \theta \text{.}
\end{align*}
The top two derivatives of $| \hat{f} |^2$ can be bounded in $L^2$, while the remaining lower derivatives can be bounded in $L^\infty$.
Thus, proceeding like in \eqref{eq.dispersive_subsonic}, we obtain
\begin{align*}
I_{i, j} &\lesssim t^{-N} \left( \| \langle x \rangle^N f \|_{ L^2 }^2 + \| \langle x \rangle^{N-1} f \|_{ L^2 }^2 + \sum_{l = 0}^{N-2} \| \langle x \rangle^l f \|_{ L^1 }^2 \right) \lesssim t^{-N} \| \langle x \rangle^N f \|_{ L^2 }^2 \text{.}
\end{align*}
This completes the proof of \eqref{eq:decayestimates}.
\end{proof}

\subsection{Dispersive Decay Estimates} \label{sec.decay_unif}

Next, we discuss general $L^\infty$-decay estimates for the operator $H$.
Since $H$ is of the same form as the linear differential operator studied in \cite{gustafson06}, we can take advantage of the decay estimate of \cite[Theorem 2.2]{gustafson06}.
The proof of this result relies on stationary phase techniques.

\begin{thm} \label{thm.dispersive_pre}
Let $h = h_0$, i.e., $h (\xi) = | \xi | \langle \xi \rangle$.
Then, for any $k \in \ZZ$, we have
\begin{align}
\label{eq.dispersive_pre} \sup_{ x \in \R^3 } \left| \int_{ \R^3 } \psi_k ( \xi ) e^{ -i h ( \xi ) t + i \xi \cdot x } d \xi \right| &\lesssim t^{ - \frac{3}{2} } 2^\frac{k}{2} \langle 2^k \rangle^{-\frac{1}{2}} \text{,} \\
\notag \sup_{ x \in \R^3 } \left| \int_{ \R^3 } \frac{ \psi_k ( \xi ) }{ h( \xi ) } e^{ -i h ( \xi ) t + i \xi \cdot x } d \xi \right| &\lesssim t^{ - \frac{1}{2} } 2^\frac{k}{2} \langle 2^k \rangle^{-\frac{1}{2}} \text{.}
\end{align}
\end{thm}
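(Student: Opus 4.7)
My plan is to prove the first estimate by stationary phase and then derive the second from the first by integrating in $t$.

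\emph{Radial reduction.} Since both amplitudes and the phase are radial, I would apply the standard 3D radial Fourier formula to rewrite each of the two integrals as
\[ \frac{2\pi}{i|x|}\int_{\R}\rho\,a(\rho)\,e^{i(|x|\rho - th(\rho))}\,d\rho, \]
where $a(\rho) = \psi(2^{-k}\rho)$ in the first case and $a(\rho) = \psi(2^{-k}\rho)/h(\rho)$ in the second; this uses that $h(\rho)=\rho\langle\rho\rangle$ is already odd in $\rho$. The problem reduces to a one-dimensional oscillatory integral with phase $\Phi(\rho) = |x|\rho - th(\rho)$, for which a direct computation yields $\Phi'(\rho) = |x| - th'(\rho)$ with $h'(\rho)=(1+2\rho^2)/\sqrt{1+\rho^2}\simeq\langle\rho\rangle$, and $\Phi''(\rho) = -th''(\rho)$ with $h''(\rho) = \rho(3+2\rho^2)/(1+\rho^2)^{3/2}\simeq\rho/\langle\rho\rangle$.

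\emph{First estimate via 1D stationary phase.} Since $h'\geq 1$, $\Phi'$ has a unique positive zero $\rho_*$ satisfying $h'(\rho_*)=|x|/t$ iff $|x|\geq t$. In the non-stationary regime $|x|\leq t/2$, one has $|\Phi'|\gtrsim t\langle\rho\rangle$ on $\mathrm{supp}\,a$, and repeated integration by parts against $\Phi'$ yields arbitrarily fast polynomial decay in $t$. In the stationary regime $|x|\gtrsim t/2$, the van der Corput lemma gives a 1D bound of order $(t|h''(\rho_*)|)^{-1/2}\,|\rho_*\,a(\rho_*)|$. Multiplying by the $1/|x|$ prefactor, using $|x|=th'(\rho_*)$ at the critical point, and inserting the asymptotics of $h', h''$ at $\rho_*\simeq 2^k$ produces exactly the weights $2^{k/2}\langle 2^k\rangle^{-1/2}$ in front of $t^{-3/2}$.

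\emph{Second estimate by integrating in $t$.} Denote the first integral by $G(t,x)$ and the second by $F(t,x)$. Differentiating under the integral gives $\partial_tF(t,x)=-iG(t,x)$. A single integration by parts against $\nabla h$ in $\xi$ (valid since $\nabla h\neq 0$ on $\mathrm{supp}\,\psi_k$) shows $|F(T,x)|\to 0$ as $T\to\infty$ for each fixed $x$, hence
\[ F(t,x) = i\int_t^\infty G(s,x)\,ds. \]
Combining with the first estimate and $\int_t^\infty s^{-3/2}ds\lesssim t^{-1/2}$ yields the second claim (with the trivial bound $\|\psi_k/h\|_{L^1}\lesssim 2^{2k}/\langle 2^k\rangle$ covering the small-$t$ range, where the claimed right-hand side is already large).

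\emph{Main obstacle.} The bulk of the work is the first estimate. The subtle point is that $h''(\rho)$ degenerates linearly at the origin, so the stationary-phase contribution depends nontrivially on the frequency band: the interplay between the $1/|x|$ prefactor, the factor $h''(\rho_*)^{-1/2}$, and the stationary-point relation $|x|=th'(\rho_*)$ is what produces the precise weights $2^{k/2}$ (reflecting the wave-like degeneracy at low frequencies) and $\langle 2^k\rangle^{-1/2}$ (reflecting the Klein--Gordon / Schrödinger-like behavior at high frequencies). Careful matching of the non-stationary and stationary regions at the boundary $|x|\simeq t$ is also required, but this is standard. Once the first estimate is in hand, the second follows cleanly by the time-integration identity above, which explains why the two bounds differ by exactly one power of $t$.
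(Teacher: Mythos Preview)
Your approach is correct and, for the second estimate, is exactly what the paper does: write $F(t,x)=i\int_t^\infty G(s,x)\,ds$ and apply the first bound together with $\int_t^\infty s^{-3/2}\,ds\lesssim t^{-1/2}$. For the first estimate the paper does not give an argument at all but simply invokes \cite[Theorem~2.2]{gustafson06}; your radial reduction plus van~der~Corput sketch is essentially the content of that reference, so in spirit the approaches coincide.

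Two small imprecisions in your sketch are worth flagging, though neither affects the method. First, the one-dimensional formula you wrote, with $h(\rho)=\rho\langle\rho\rangle$ taken odd, is not literally the even radial extension required by the 3D formula: the correct extension of $e^{-it|\xi|\langle\xi\rangle}$ is even in $\rho$, so the integral over $\rho<0$ is the complex conjugate of the $\rho>0$ piece rather than the same phase. This only changes the bookkeeping, not the estimate. Second, the relevant threshold between the stationary and non-stationary regimes is $|x|\simeq t\,h'(2^k)\simeq t\langle 2^k\rangle$, not $|x|\simeq t$; it is precisely by dividing by $|x|\simeq t\langle 2^k\rangle$ (rather than $|x|\simeq t$) in the stationary region that the factor $\langle 2^k\rangle^{-1/2}$ emerges. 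With those corrections your computation indeed yields $t^{-3/2}2^{k/2}\langle 2^k\rangle^{-1/2}$.
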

\begin{proof}
The first estimate follows directly from \cite[Theorem 2.2]{gustafson06}, with phase $h$ and amplitude $\psi_k$.
For the second estimate, we can write
\begin{align*}
\left| \int_{ \R^3 } \frac{ \psi_k ( \xi ) }{ h ( \xi ) } e^{ -i h ( \xi ) t + i \xi \cdot x } d \xi \right| &\lesssim \left| \int_t^\infty \int_{ \R^3 } \psi_k ( \xi ) e^{ -i h ( \xi ) s + i \xi \cdot x } d \xi ds \right| \\
&\lesssim 2^\frac{k}{2} \langle 2^k \rangle^{-\frac{1}{2}} \int_t^\infty s^{ - \frac{3}{2} } ds \text{,}
\end{align*}
where we applied the first estimate.
The desired bound follows.
\end{proof}

Theorem \ref{thm.dispersive_pre} implies the following dyadic dispersive estimate.

\begin{corollary} \label{thm.dispersive_dyadic}
If $f \in \mc{S} ( \R^3 )$, $t \in (0, \infty)$, and $0 \leq \sigma \leq 1$, then
\begin{equation} \label{eq.dispersive_dyadic} \| P_k H^{-\sigma} e^{-i t H} f \|_{ L^\infty_x } \lesssim t^{ - \frac{3}{2} + \sigma } \| P_{\sim k} U^\frac{1}{2} f \|_{ L^1_x } \text{,} \end{equation}
where $P_{\sim k} = P_{k-1} + P_k + P_{k+1}$.
\end{corollary}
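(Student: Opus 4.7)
The plan is to realize $P_k H^{-\sigma} e^{-itH}$ as convolution against a kernel whose $L^\infty$-norm has the expected decay in $t$, and then to transfer the factor $U^{1/2}$ onto the $f$-side via a Bernstein-type inequality. By Fourier support, $P_k = P_k P_{\sim k}$, so
\[ P_k H^{-\sigma} e^{-itH} f = K^\sigma_k * P_{\sim k} f, \qquad \widehat{K^\sigma_k}(\xi) = \psi_k(\xi) h(\xi)^{-\sigma} e^{-ith(\xi)}. \]
Young's inequality then reduces the proof to the two separate claims
\begin{equation} \label{eq.plan_kernel}
\|K^\sigma_k\|_{L^\infty} \lesssim t^{-\tfrac{3}{2}+\sigma} \cdot 2^{\tfrac{k}{2}} \langle 2^k \rangle^{-\tfrac{1}{2}},
\end{equation}
and the inverse-Bernstein inequality
\begin{equation} \label{eq.plan_bern}
2^{\tfrac{k}{2}} \langle 2^k \rangle^{-\tfrac{1}{2}} \, \|P_{\sim k} f\|_{L^1} \lesssim \|P_{\sim k} U^{\tfrac{1}{2}} f\|_{L^1}.
\end{equation}

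For \eqref{eq.plan_kernel}, the endpoints $\sigma = 0$ and $\sigma = 1$ are exactly the two estimates of Theorem \ref{thm.dispersive_pre}. To cover the intermediate range $\sigma \in (0,1)$, my approach is to use the subordination identity
\[ h^{-\sigma} e^{-ith} = \frac{e^{i\pi\sigma/2}}{\Gamma(\sigma)} \int_t^\infty (s-t)^{\sigma-1} e^{-ish} \, ds, \]
which follows from evaluating the Euler integral $\int_0^\infty u^{\sigma-1} e^{-\lambda u} du = \Gamma(\sigma) \lambda^{-\sigma}$ along $\lambda = ih$ (legitimate since $h > 0$ on $\mathrm{supp}\,\psi_k$) and setting $s = u + t$. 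Plugging this into the definition of $K^\sigma_k$ and applying Fubini gives
\[ K^\sigma_k(x) = \frac{e^{i\pi\sigma/2}}{\Gamma(\sigma)} \int_t^\infty (s-t)^{\sigma-1} \int_{\R^3} \psi_k(\xi) e^{-ish(\xi) + ix\cdot\xi} \, d\xi \, ds. \]
By the first estimate of Theorem \ref{thm.dispersive_pre}, the inner oscillatory integral is bounded pointwise in $x$ by $s^{-3/2} 2^{k/2} \langle 2^k \rangle^{-1/2}$, and the outer integral evaluates to $B(\sigma, \tfrac{3}{2} - \sigma) \, t^{-3/2+\sigma}$, a finite constant for $\sigma \in (0, 3/2)$, so \eqref{eq.plan_kernel} follows.

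For \eqref{eq.plan_bern}, $U^{1/2}$ commutes with $P_{\sim k}$, so it suffices to show that any $g$ Fourier-localized in the annulus $|\xi| \simeq 2^k$ satisfies $2^{k/2} \langle 2^k \rangle^{-1/2} \|g\|_{L^1} \lesssim \|U^{1/2} g\|_{L^1}$. Writing $g = U^{-1/2}(U^{1/2} g)$ and recalling that $U^{-1/2}$ has symbol $|\xi|^{-1/2} \langle \xi \rangle^{1/2}$ in the class $\mc{M}^0_{-1/2}$, Proposition \ref{thm.bernstein_mult} applied to any Littlewood--Paley projector covering $\mathrm{supp}\,\widehat{U^{1/2} g}$ gives $\|g\|_{L^1} \lesssim 2^{-k/2} \langle 2^k \rangle^{1/2} \|U^{1/2} g\|_{L^1}$, which is what we want.

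The main technical point is the intermediate range $\sigma \in (0,1)$ in \eqref{eq.plan_kernel}: the two endpoint bounds of Theorem \ref{thm.dispersive_pre} do not admit an obvious elementary interpolation of the $L^\infty$-norms of their kernels, but the subordination identity bypasses this by reducing every $\sigma \in (0,1)$ to the $\sigma = 0$ case via a single time integration, with the Beta-function integral automatically delivering the right $t^{-3/2+\sigma}$ decay.
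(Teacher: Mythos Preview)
Your proof is correct, and the overall architecture---write $P_k H^{-\sigma} e^{-itH}$ as convolution against a frequency-localized kernel, bound the kernel in $L^\infty$, then use the Bernstein-type estimate \eqref{eq.bernstein} for $U^{-1/2}$ to absorb the factor $2^{k/2}\langle 2^k\rangle^{-1/2}$---matches the paper's proof exactly.

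The genuine difference is in how the intermediate range $\sigma\in(0,1)$ is handled. The paper treats the endpoints $\sigma=0$ and $\sigma=1$ directly from the two estimates of Theorem~\ref{thm.dispersive_pre} and then dispatches $0<\sigma<1$ with the single phrase ``by interpolation'' (implicitly Stein interpolation for the analytic family $\sigma\mapsto P_k H^{-\sigma} e^{-itH}$ acting $L^1\to L^\infty$). Your subordination identity instead reduces every $\sigma\in(0,1)$ to the $\sigma=0$ kernel bound via a single time integration, with the Beta integral supplying the decay exponent. This is more elementary and fully explicit (you even get the constant), at the cost of a few extra lines; the Fubini step is justified since the improper $s$-integral converges uniformly on the compact support of $\psi_k$, where $h$ is bounded away from zero. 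It is worth noting that the paper's own proof of the second estimate in Theorem~\ref{thm.dispersive_pre}---writing $h^{-1}e^{-ith}$ as $\int_t^\infty e^{-ish}\,ds$---is precisely your subordination formula at $\sigma=1$, so in a sense you have extended the paper's endpoint trick to cover the full range directly, bypassing the appeal to abstract interpolation.
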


\begin{proof}
We first consider the case $\sigma = 0$.
Consider the function
\[ g_k: \R^3 \rightarrow \C \text{,} \qquad g_k (x) = \int_{ \R^3 } \psi_k ( \xi ) e^{ -i h ( \xi ) t + i \xi \cdot x } d \xi \text{,} \]
where $h$ is as in the statement of Theorem \ref{thm.dispersive_pre}.
Then,
\[ \| P_k e^{-i t H} f \|_{ L^\infty_x } \lesssim \| g_k \ast P_{\sim k} f \|_{ L^\infty_x } \lesssim \| g_k \|_{ L^\infty_x } \| P_{\sim k} f \|_{ L^1_x } \text{.} \]
The $L^\infty$-bound for $g_k$ follows from the first estimate of \eqref{eq.dispersive_pre}, so that
\[ \| P_k e^{-i t H} f \|_{ L^\infty_x } \lesssim t^{ - \frac{3}{2} } 2^\frac{k}{2} \langle 2^k \rangle^{-\frac{1}{2}} \| P_{\sim k} f \|_{ L^1_x } \lesssim t^{ - \frac{3}{2} } \| P_{\sim k} U^\frac{1}{2} f \|_{ L^1_x } \text{,} \]
where in the last step, we applied \eqref{eq.bernstein} to the operator $U^{-1/2}$.

The case $\sigma = 1$ is proved similarly, using the second estimate in \eqref{eq.dispersive_pre}.
For the remaining $0 < \sigma < 1$, the result can be established by interpolation.
\end{proof}

Finally, we piece together the dyadic estimates \eqref{eq.dispersive_dyadic} in the various frequency bands into a single dispersive estimate.
This is stated below, in Proposition \ref{thm.dispersive_unif}, in a form applicable to a general class of Fourier multiplier operators.

\begin{proposition} \label{thm.dispersive_unif}
Let $a^\prime, b^\prime \in \RR$, and let $M$ be a Fourier multiplier operator, with symbol of the form $\mc{M}_{a^\prime}^{b^\prime}$.
Also, suppose $a \in \RR$ and $l \in \ZZ$ satisfy $-3/2 \leq a < a^\prime$, $l \geq 0$, and $b^\prime < l$.
Then, for any $f \in \mc{S} (\R^3)$,
\begin{align}
\label{eq.dispersive_unif} \| e^{-i t H} M f \|_{ L^\infty } &\lesssim_{m, a^\prime, b^\prime, a, l} t^{-\frac{3}{2}} \| \nabla^l f \|_{ L^1 } + t^{- \min ( \frac{3}{2}, 2 + a ) } \| f \|_{ L^1 } \text{,} \\
\label{eq.dispersive_unif_ex} \| e^{-i t H} M | \nabla | f \|_{ L^\infty } &\lesssim_{m, a^\prime, b^\prime, a, l} t^{-\frac{3}{2}} \| \nabla^{l+1} f \|_{ L^1 } + t^{- \min ( \frac{3}{2}, 2 + a ) } \| \nabla f \|_{ L^1 } \text{,}
\end{align}
\end{proposition}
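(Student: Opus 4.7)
The plan is to decompose $e^{-itH}Mf$ via Littlewood--Paley projections and estimate each dyadic piece via Corollary \ref{thm.dispersive_dyadic}, choosing the auxiliary parameter $\sigma \in [0,1]$ differently at high and low frequencies. Since $H$ commutes with $P_k$ and with $e^{-itH}$, the corollary rewrites as
\[ \|P_k e^{-itH} Mf\|_{L^\infty} \lesssim t^{-3/2+\sigma}\|P_{\sim k} U^{1/2} H^\sigma M f\|_{L^1}. \]
The operator $U^{1/2}H^\sigma M$ has symbol in the class $\mathcal{M}_{a'+1/2+\sigma}^{b'-1/2+2\sigma}$, which will allow me to invoke Proposition \ref{thm.bernstein_mult} to convert its $L^1$-action on a single frequency band into a numerical power of $2^k$.

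At high frequencies ($k \geq 0$) I would take $\sigma = 0$. Proposition \ref{thm.bernstein_mult} then gives $\|P_{\sim k} U^{1/2} Mf\|_{L^1} \lesssim 2^{k(b'-1/2)}\|P_{\sim k} f\|_{L^1}$, and an $L^1$ Bernstein inequality of the form $\|P_{\sim k} f\|_{L^1} \lesssim 2^{-kl}\|\nabla^l f\|_{L^1}$ finishes this band. This Bernstein inequality holds for every integer $l \geq 0$ (including odd $l$, where passing through $|\nabla|^{-l}$ would fail on $L^1$) via the direct decomposition $P_k f = \sum_{|\alpha|=l} K_{k,\alpha} \ast \partial^\alpha f$, whose kernels $K_{k,\alpha}$ are Schwartz with $\|K_{k,\alpha}\|_{L^1} \lesssim 2^{-kl}$; one produces this decomposition from the multinomial identity $|\xi|^{2l} = \sum_{|\alpha|=l}\binom{l}{\alpha}\xi^{2\alpha}$. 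Since $b' < l$ by hypothesis, the resulting exponent $b' - \tfrac{1}{2} - l$ is strictly less than $-\tfrac{1}{2}$, so the geometric sum over $k \geq 0$ converges and contributes the first term $t^{-3/2}\|\nabla^l f\|_{L^1}$.

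At low frequencies ($k \leq 0$) I would take $\sigma_0 = \max(0, -\tfrac{1}{2} - a)$, which lies in $[0,1]$ thanks to the hypothesis $a \geq -\tfrac{3}{2}$. Proposition \ref{thm.bernstein_mult} now gives $\|P_{\sim k} U^{1/2} H^{\sigma_0} Mf\|_{L^1} \lesssim 2^{k(a'+1/2+\sigma_0)}\|f\|_{L^1}$, and the geometric sum over $k \leq 0$ converges provided $a' + \tfrac{1}{2} + \sigma_0 > 0$. This strict positivity is the crucial verification and it uses the hypothesis $a < a'$: when $a \geq -\tfrac{1}{2}$ one has $\sigma_0 = 0$ and $a' + \tfrac{1}{2} > a + \tfrac{1}{2} \geq 0$, while when $a < -\tfrac{1}{2}$ one has $a' + \tfrac{1}{2} + \sigma_0 = a' - a > 0$. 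The sum then produces $t^{-3/2 + \sigma_0}\|f\|_{L^1} = t^{-\min(3/2, 2+a)}\|f\|_{L^1}$, which is the second term of \eqref{eq.dispersive_unif}.

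For \eqref{eq.dispersive_unif_ex}, I would factor $|\nabla| = \sum_j R_j \partial_j$ through the Riesz transforms $R_j = \partial_j |\nabla|^{-1}$, whose symbols $i\xi_j/|\xi|$ lie in $\mathcal{M}_0^0$; hence $MR_j$ has symbol in the same class $\mathcal{M}_{a'}^{b'}$ as $M$. Applying \eqref{eq.dispersive_unif} with $M$ replaced by $MR_j$ and $f$ replaced by $\partial_j f$, then summing in $j$, will yield \eqref{eq.dispersive_unif_ex}. The main obstacle I anticipate is the low-frequency bookkeeping: one must choose $\sigma_0$ exactly to balance the time decay against the geometric convergence of the $k \leq 0$ sum, and the strict inequality $a < a'$ must be exploited carefully in the regime $a \leq -\tfrac{1}{2}$ where $\sigma_0 > 0$, since otherwise the low-frequency geometric series diverges.
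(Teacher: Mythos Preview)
Your proposal is correct and follows essentially the same strategy as the paper: a Littlewood--Paley decomposition combined with the dyadic dispersive estimate of Corollary \ref{thm.dispersive_dyadic} and Bernstein-type multiplier bounds, with the same choice of $\sigma$ at high and low frequencies (the paper applies Bernstein in $L^\infty$ before the dispersive estimate rather than in $L^1$ after, but this is purely cosmetic). One harmless slip: the symbol of $U^{1/2}$ lies in $\mc{M}_{1/2}^0$, not $\mc{M}_{1/2}^{-1/2}$, so $U^{1/2}H^\sigma M$ has symbol in $\mc{M}_{a'+1/2+\sigma}^{b'+2\sigma}$ and your high-frequency exponent should be $b'-l$ rather than $b'-\tfrac12-l$; this does not affect convergence since $b'<l$, and your Riesz-transform reduction for \eqref{eq.dispersive_unif_ex} is a clean alternative to the paper's repetition of the argument with $|\nabla|f$ in place of $f$.
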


\begin{proof}
First, we apply a Littlewood-Paley decomposition to obtain
\[ \| e^{-i t H} M f \|_{ L^\infty } \leq \sum_{k < 0} \| P_k e^{-i t H} M f \|_{ L^\infty } + \sum_{k \geq 0} \| P_k e^{-i t H} M f \|_{ L^\infty } = \mc{L} + \mc{H} \text{.} \]
For $\mc{H}$, letting $l = b^\prime + \varepsilon$ and applying \eqref{eq.bernstein_grad} and \eqref{eq.bernstein}, we obtain
\[ \mc{H} \lesssim \sum_{k \geq 0} 2^{-\varepsilon k} 2^{ ( b^\prime + \varepsilon ) k} \| P_k e^{-i t H} f \|_{ L^\infty } \lesssim \sum_{k \geq 0} 2^{-\varepsilon k} \| P_k e^{-i t H} \nabla^l f \|_{ L^\infty } \text{.} \]
We can now apply the dyadic dispersive estimate \eqref{eq.dispersive_dyadic} to the above.
Noting that $P_k U^{1/2}$ is bounded on $L^1$ by \eqref{eq.bernstein} and that $2^{-\varepsilon k}$ is summable yields
\[ \mc{H} \lesssim t^{-\frac{3}{2}} \sum_{k \geq 0} 2^{-\varepsilon k} \| P_{\sim k} U^\frac{1}{2} \nabla^l f \|_{ L^1 } \lesssim t^{-\frac{3}{2}} \| \nabla^l f \|_{ L^1 } \text{.} \]

Next, for the low-frequency terms $\mc{L}$, we write $a = a^\prime - \varepsilon$, and we apply \eqref{eq.bernstein} to $M$, $U^{1/2}$, and $H^{-a - 1/2}$.
This results in the estimate
\[ \mc{L} \lesssim \sum_{k < 0} 2^{\varepsilon k} 2^{a^\prime k - \varepsilon k} \| P_k e^{-i t H} f \|_{ L^\infty } \lesssim \sum_{k < 0} 2^{\varepsilon k} \| P_k e^{-i t H} H^{a + \frac{1}{2}} U^{-\frac{1}{2}} f \|_{ L^\infty } \text{.} \]
We can now apply the dyadic dispersive estimate \eqref{eq.dispersive_dyadic}.
If $a \leq -1/2$, then
\[ \mc{L} \lesssim t^{-2 - a} \sum_{k < 0} 2^{\varepsilon k} \| P_{\sim k} f \|_{ L^1 } \lesssim t^{-2 - a} \| f \|_{ L^1 } \text{.} \]
Otherwise, if $a > -1/2$, then
\[ \mc{L} \lesssim t^{-\frac{3}{2}} \sum_{k < 0} 2^{\varepsilon k} \| P_{\sim k} H^{a + \frac{1}{2}} f \|_{ L^1 } \lesssim t^{-\frac{3}{2}} \| f \|_{ L^1 } \text{,} \]
since positive powers of $H$ are bounded on low frequency bands by \eqref{eq.bernstein}.
Combining the above bounds for $\mc{H}$ and $\mc{L}$ yields \eqref{eq.dispersive_unif}.

Finally, for the remaining estimate \eqref{eq.dispersive_unif_ex}, we follow the same steps as above, but with $f$ replaced by $| \nabla | f$, in order to obtain
\begin{align*}
\| e^{-i t H} M | \nabla | f \|_{ L^\infty } &\lesssim t^{-\frac{3}{2}} \sum_{k \geq 0} 2^{-\varepsilon k} \| P_{\sim k} | \nabla | U^\frac{1}{2} \nabla^l f \|_{ L^1 } \\
&\qquad + t^{-\frac{3}{2}} \sum_{k < 0} 2^{\varepsilon k} \| P_{\sim k} | \nabla | H^{a + \frac{1}{2}} f \|_{ L^1 } \text{.}
\end{align*}
Using \eqref{eq.bernstein_deriv} and \eqref{eq.bernstein_grad}, in addition to \eqref{eq.bernstein} as before, yields \eqref{eq.dispersive_unif_ex}.
\end{proof}

\subsection{Spatial Decay Estimates} \label{sec.decay_spatial}

Finally, we obtain some estimates involving spatial decay of functions.
These are used in the proof of Theorem \ref{thm.traveling_wave_reg} in order to extract decay rates of subsonic and sonic inertial solutions.
Given a multi-index $\alpha = (\alpha_1, \alpha_2, \alpha_3)$ and any $x \in \R^3$, we will let $x^\alpha$ denote the polynomial $x_1^{\alpha_1} x_2^{\alpha_2} x_3^{\alpha_3}$.

The key component of the decay estimates in the subsonic setting is the following Fourier-localized uniform bounds.

\begin{lemma} \label{thm.spatial_decay_lem}
Let $a, b \in \RR$, and let $M$ be a Fourier multiplier operator, with symbol $m$ in the class $\mc{M}_a^b$.
Also, fix a multi-index $\alpha = (\alpha_1, \alpha_2, \alpha_3)$ and an integer $k \geq 0$.
Then, for any $f \in \mc{S} (\R^3)$ and $x \in \R^3$, we have
\begin{align}
\label{eq.spatial_decay_high} | x^\alpha P_k M f (x) | &\lesssim_{m, b} 2^{ ( 3/2 + b ) k } \| \langle x \rangle^{ |\alpha| } f \|_{ L^2 } \text{,} \\
\label{eq.spatial_decay_low} | x^\alpha P_{-k} M f (x) | &\lesssim_{m, a} 2^{ ( |\alpha| - a - 3 ) k } \| \langle x \rangle^{ |\alpha| } f \|_{ L^1 } \text{.}
\end{align}
\end{lemma}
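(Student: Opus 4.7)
\medskip

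\noindent\textbf{Proof proposal for Lemma \ref{thm.spatial_decay_lem}.}
The plan is to realize $P_k M$ as convolution with a kernel whose decay properties follow from standard Fourier estimates on the localized symbol $\psi_k \, m$, and then to distribute the weight $x^\alpha$ between the kernel and $f$ via a binomial identity. Concretely, let $K_k \in \mc{S}'(\R^3)$ be defined by $\hat{K}_k = \psi_k m$, so that $P_k M f = K_k \ast f$. Writing $x = (x-y) + y$ and expanding $x^\alpha$ by the multinomial theorem yields
\begin{align*}
x^\alpha (K_k \ast f)(x) = \sum_{\beta \leq \alpha} \binom{\alpha}{\beta} \bigl[ (z^\beta K_k) \ast (y^{\alpha-\beta} f) \bigr](x) \text{.}
\end{align*}
Applying Young's inequality with $L^2 \times L^2 \to L^\infty$ for \eqref{eq.spatial_decay_high}, and with $L^\infty \times L^1 \to L^\infty$ for \eqref{eq.spatial_decay_low}, reduces matters to controlling $\| z^\beta K_k \|_{L^2}$ (resp.\ $\| z^\beta K_{-k} \|_{L^\infty}$) for each $\beta \leq \alpha$.

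Next, I would estimate these kernel norms by passing back to Fourier space: Plancherel gives $\| z^\beta K_k \|_{L^2} \simeq \| \partial^\beta(\psi_k m) \|_{L^2}$, while the elementary bound $\| \check{g} \|_{L^\infty} \lesssim \| g \|_{L^1}$ gives $\| z^\beta K_{-k} \|_{L^\infty} \lesssim \| \partial^\beta(\psi_{-k} m) \|_{L^1}$. Using Leibniz together with the scaling identity $\partial^\gamma \psi_k (\xi) = 2^{-|\gamma|k}(\partial^\gamma \psi)(2^{-k}\xi)$ and the defining bounds \eqref{eq.MN} of the class $\mc{M}_a^b$, one obtains the pointwise estimate $| \partial^\beta(\psi_k m)(\xi) | \lesssim 2^{(b - |\beta|)k}$ on the annulus $|\xi| \simeq 2^k$ when $k \geq 0$, and analogously $| \partial^\beta(\psi_{-k} m)(\xi) | \lesssim 2^{(|\beta| - a)k}$ on $|\xi| \simeq 2^{-k}$. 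Multiplying by the volume of the relevant annulus ($\simeq 2^{3k}$ or $\simeq 2^{-3k}$) and taking square roots where appropriate gives
\begin{align*}
\| z^\beta K_k \|_{L^2} \lesssim 2^{(b - |\beta| + 3/2)k} \text{,} \qquad \| z^\beta K_{-k} \|_{L^\infty} \lesssim 2^{(|\beta| - a - 3)k} \text{.}
\end{align*}

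Combining these with the trivial bounds $\| y^{\alpha - \beta} f \|_{L^2} \leq \| \langle y \rangle^{|\alpha|} f \|_{L^2}$ and $\| y^{\alpha - \beta} f \|_{L^1} \leq \| \langle y \rangle^{|\alpha|} f \|_{L^1}$ and summing over $\beta \leq \alpha$, the estimate \eqref{eq.spatial_decay_high} follows by selecting the dominant term $|\beta| = 0$ in the high-frequency sum (since $2^{-|\beta| k} \leq 1$ for $k \geq 0$), while \eqref{eq.spatial_decay_low} follows from the dominant term $|\beta| = |\alpha|$ in the low-frequency sum. There is no real analytical obstacle; the only point requiring care is bookkeeping of the exponents when applying Leibniz to $\psi_k m$, to ensure that each factor of $2^{-|\gamma|k}$ coming from differentiating $\psi_k$ is correctly balanced against the corresponding factor $|\xi|^{-|\beta - \gamma|} \simeq 2^{-|\beta - \gamma|k}$ from differentiating $m$ on the support of $\psi_k$.
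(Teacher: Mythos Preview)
Your argument is correct and is essentially the paper's proof rephrased on the physical-space side: your binomial splitting of $x^\alpha$ across the convolution is exactly the Leibniz expansion of $\partial^\alpha[\psi_k m \hat{f}]$ that the paper performs on the Fourier side, and your Young-plus-Plancherel estimates on $z^\beta K_k$ reproduce the paper's pointwise-plus-H\"older bounds on $\partial^{\beta_2}(\psi_k m)$ over the annulus $|\xi|\simeq 2^{\pm k}$. The bookkeeping of exponents and the identification of the dominant $\beta$ in each regime match the paper's computation line for line.
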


\begin{proof}
First, for \eqref{eq.spatial_decay_high}, by the Fourier inversion formula, we can write
\begin{align*}
| x^\alpha P_k M f (x) | &\simeq \left| \int_{ \R^3 } e^{ i x \cdot \xi } \partial^\alpha [ \psi_k (\xi) m (\xi) \hat{f} (\xi) ] d \xi \right| \text{.}
\end{align*}
Since $\psi_k m$ is in the class $\mc{M}_a^b$ and is supported in the band $|\xi| \simeq 2^k$,
\begin{align*}
| x^\alpha P_k M f (x) | &\lesssim \sum_{\beta_1 + \beta_2 = \alpha} \int_{ \R^3 } | \partial^{\beta_1} \hat{f} (\xi) | | \partial^{\beta_2} [ \psi_k (\xi) m (\xi) ] | d \xi \\
&\lesssim \sum_{\beta_1 + \beta_2 = \alpha} 2^{ ( b - |\beta_2| ) k } \int_{ | \xi | \simeq 2^k } | \partial^{\beta_1} \hat{f} (\xi) | d \xi \text{.}
\end{align*}
By H\"older's inequality, we can control $\hat{f}$ in $L^2$:
\begin{align*}
| x^\alpha P_k M f (x) | &\lesssim 2^{ ( 3/2 + b ) k } \sum_{ |\beta| \leq |\alpha| } \| \partial^\beta \hat{f} \|_{ L^2 } \lesssim 2^{ ( 3/2 + b ) k } \| \langle x \rangle^{ |\alpha| } f \|_{ L^2 } \text{.}
\end{align*}

Next, for \eqref{eq.spatial_decay_low}, we first write
\begin{align*}
| x^\alpha P_{-k} M f (x) | &\simeq \left| \int_{ \R^3 } e^{i x \cdot \xi } \partial^\alpha [ \psi_{-k} (\xi) m (\xi) \hat{f} (\xi) ] d \xi \right| \text{.}
\end{align*}
Since $\psi_{-k} m$ is of the form $\mc{M}_a^b$,
\begin{align*}
| x^\alpha P_{-k} M f (x) | &\lesssim \sum_{ \beta_1 + \beta_2 = \alpha } \| \partial^{ \beta_1 } \hat{f} \|_{ L^\infty } 2^{ ( |\beta_2| - a ) k } \int_{ | \xi | \simeq 2^{-k} } d \xi \\
&\lesssim 2^{ ( |\alpha| - a - 3 ) k } \| \langle x \rangle^{ |\alpha| } f \|_{ L^1 } \text{.} \qedhere
\end{align*}
\end{proof}

We can construct a decay estimate by summing over the frequency-localized estimates of Lemma \ref{thm.spatial_decay_lem}.
This following estimate is applied in order to find the spatial decay rate for subsonic inertial solutions in Theorem \ref{thm.traveling_wave_reg}.

\begin{proposition} \label{thm.spatial_decay}
Let $a \in \RR$, and let $M$ be a Fourier multiplier operator, with symbol $m$ in the class $\mc{M}_a^{-2}$.
If $\mu \in [0, a + 3)$, then
\begin{equation} \label{eq.spatial_decay} | x |^\mu | M f (x) | \lesssim_{m, \mu, \mu^\prime} \| \langle x \rangle^{ \mu^\prime } f \|_{ L^2 } \text{,} \end{equation}
for any $x \in \R^3$, $f \in \mc{S} (\R^3)$, and $\mu^\prime > \lceil \mu \rceil + 3/2$.
\end{proposition}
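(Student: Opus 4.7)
The strategy is to decompose $Mf$ dyadically via Littlewood--Paley projections, apply Lemma~\ref{thm.spatial_decay_lem} on each frequency band, and sum two geometric series. Setting $N = \lceil \mu \rceil$ (the case $\mu = 0$ being immediate from Lemma~\ref{thm.spatial_decay_lem} with $|\alpha|=0$), I split
\[
|x|^\mu |Mf(x)| \leq \sum_{k \geq 0} |x|^\mu |P_k Mf(x)| + \sum_{k \geq 1} |x|^\mu |P_{-k} Mf(x)|.
\]

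For the high-frequency sum, I apply \eqref{eq.spatial_decay_high} with $b = -2$ at both $|\alpha|=0$ and $|\alpha|=N$, and use the elementary pointwise bound $|x|^\mu \leq 1 + |x|^N$ together with $|x|^N \lesssim \sum_{|\alpha|=N} |x^\alpha|$. Since $3/2 + b = -1/2$, this yields $|x|^\mu |P_k Mf(x)| \lesssim 2^{-k/2}\|\langle x\rangle^N f\|_{L^2}$, and the sum over $k \geq 0$ converges geometrically.

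For the low-frequency sum, the same splitting fails: it would require an integer $N$ with $\mu \leq N < a + 3$, which need not exist when $a + 3$ is non-integer. The remedy is to interpolate on each individual dyadic band. With $\theta = \mu/N \in [0,1]$, the pointwise identity
\[
|x|^\mu |P_{-k} Mf(x)| = \bigl(|P_{-k} Mf(x)|\bigr)^{1 - \theta} \bigl(|x|^N |P_{-k} Mf(x)|\bigr)^\theta
\]
combined with \eqref{eq.spatial_decay_low} applied at $|\alpha| = 0$ and $|\alpha| = N$ respectively makes the dyadic exponents combine into
\[
-(a+3)(1 - \theta) + (N - a - 3)\theta = \mu - (a + 3),
\]
giving $|x|^\mu|P_{-k} Mf(x)| \lesssim 2^{(\mu - a - 3) k} \|\langle x \rangle^N f\|_{L^1}$ after the crude estimate $\|f\|_{L^1}^{1-\theta}\|\langle x \rangle^N f\|_{L^1}^\theta \leq \|\langle x\rangle^N f\|_{L^1}$ (which holds because $\langle x\rangle^N \geq 1$). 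Since $\mu - (a+3) < 0$ by hypothesis, the sum over $k \geq 1$ is again geometric.

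It remains only to upgrade the $L^1$-norm to a weighted $L^2$-norm. By Cauchy--Schwarz, for any $s > 3/2$,
\[
\|\langle x \rangle^N f\|_{L^1} \leq \|\langle x \rangle^{-s}\|_{L^2}\|\langle x \rangle^{N + s} f\|_{L^2},
\]
with $\|\langle x \rangle^{-s}\|_{L^2(\R^3)} < \infty$ precisely when $s > 3/2$; the hypothesis $\mu^\prime > \lceil \mu \rceil + 3/2 = N + 3/2$ is exactly what is needed to accommodate the choice $s = \mu^\prime - N$. The only genuinely delicate step is the low-frequency interpolation: restricted to a single Fourier band it reduces to Hölder's inequality, as suggested by the footnote following the statement, whereas without the Littlewood--Paley decomposition one would be forced to round $\mu$ up to an integer strictly less than $a + 3$ and thereby lose a full unit of decay at the endpoint.
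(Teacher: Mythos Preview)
Your proof is correct and follows essentially the same route as the paper: a Littlewood--Paley decomposition, application of Lemma~\ref{thm.spatial_decay_lem} on each band, pointwise interpolation to handle non-integer $\mu$, and summation of the resulting geometric series. The only cosmetic differences are that the paper interpolates between the adjacent integers $\lfloor \mu \rfloor$ and $\lfloor \mu \rfloor + 1$ rather than between $0$ and $\lceil \mu \rceil$, and leaves the final $L^1 \to$ weighted-$L^2$ Cauchy--Schwarz step implicit.
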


\begin{proof}
We write $\mu = \mu_0 + \mu_1$, where $\mu_0 \in \ZZ$ and $\mu_1 \in [0, 1)$.
Applying a frequency decomposition to the left-hand side of \eqref{eq.spatial_decay} and then utilizing Lemma \ref{thm.spatial_decay_lem},
\begin{align*}
| x |^\mu | M f (x) | &\lesssim \sum_{k \geq 0} [ | x |^{ \mu_0 + 1 } | P_k M f (x) | ]^{\mu_1} [ | x |^{ \mu_0 } | P_k M f (x) | ]^{ 1 - \mu_1 } \\
&\qquad + \sum_{k > 0} [ | x |^{ \mu_0 + 1 } | P_{-k} M f (x) | ]^{\mu_1} [ | x |^{ \mu_0 } | P_{-k} M f (x) | ]^{ 1 - \mu_1 } \\
&\lesssim \sum_{k \geq 0} 2^{ -\frac{1}{2} k } \| \langle x \rangle^{ \lceil \mu \rceil } f \|_{ L^2 } + \sum_{k > 0} 2^{ ( \mu - a - 3 ) k } \| \langle x \rangle^{ \lceil \mu \rceil } f \|_{ L^1 } \text{.}
\end{align*}
In particular, we applied \eqref{eq.spatial_decay_high} and \eqref{eq.spatial_decay_low}, with $|\alpha| = \mu_0$ and $|\alpha| = \mu_0 + 1$.
Finally, since $\mu - a - 3 < 0$, then the above sums converge, and \eqref{eq.spatial_decay} follows.
\end{proof}

\begin{remark}
One can also prove Proposition \ref{thm.spatial_decay} by restricting to the Fourier domain $| \xi | \geq \delta$, deriving decay estimates independent of $\delta$ in this region, and letting $\delta \searrow 0$.
This has the added convenience of avoiding the full dyadic decomposition.
However, the interpolation between integer powers of decay becomes more complicated in this case.
\end{remark}

In the sonic setting in Theorem \ref{thm.traveling_wave_reg}, one requires more care in handling the low frequencies.
This is due to the more singular nature of the operator $H_v^{-1}$ near the Fourier origin when $|v| = 1$.
The following lemma provides the key component of this analysis in the low-frequency region.

For notational convenience, given a ``sonic velocity" $v \in \Sph^2$, we define $S_v$ to be the Fourier multiplier operator given by
\begin{equation} \label{eq.Sv} \widehat{( S_v f )} ( \xi ) = \frac{ \hat{f} (\xi) }{ \langle \xi \rangle - v \cdot \hat{\xi} } \text{.} \end{equation}
In particular, the denominator is especially singular as $|\xi| \rightarrow 0$ in the $v$-direction.

\begin{lemma} \label{thm.spatial_decay_lem_ex}
Let $a, b \in \RR$, and let $M$ be a Fourier multiplier operator, with symbol $m$ in the class $\mc{M}_a^b$.
Also, fix $v \in \Sph^2$, a multi-index $\alpha = (\alpha_1, \alpha_2, \alpha_3)$, and an integer $k \geq 0$.
Then, for any $f \in \mc{S} (\R^3)$ and $x \in \R^3$, we have
\begin{align}
\label{eq.spatial_decay_low_ex} | x^\alpha P_{-k} M S_v f (x) | &\lesssim \begin{cases} \| \langle x \rangle^{ |\alpha| } f \|_{ L^1 } 2^{ ( 3 |\alpha| - a - 3 ) k } & |\alpha| > 0 \text{,} \\ \| f \|_{ L^1 } (1 + k) 2^{ - ( a + 3 ) k } & |\alpha| = 0 \text{.} \end{cases}
\end{align}
\end{lemma}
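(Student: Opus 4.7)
The strategy is Fourier inversion plus Leibniz, entirely analogous to the proof of Lemma \ref{thm.spatial_decay_lem}, with the extra work concentrated on estimating derivatives of the singular factor $\Phi_v(\xi)^{-1}$, where $\Phi_v(\xi) := \langle\xi\rangle - v\cdot\hat\xi$ vanishes quadratically near the origin in the $v$-direction. By Fourier inversion,
\[
x^\alpha P_{-k} M S_v f(x) = i^{|\alpha|}\int_{\R^3} e^{ix\cdot\xi}\,\partial^\alpha_\xi\!\left[\psi_{-k}(\xi)\,m(\xi)\,\hat f(\xi)\,\Phi_v(\xi)^{-1}\right]d\xi,
\]
and expanding by Leibniz, the three bounded factors are controlled exactly as in Lemma \ref{thm.spatial_decay_lem}: on the shell $|\xi|\simeq 2^{-k}$ one has $|\partial^{\gamma_1}\psi_{-k}|\lesssim 2^{|\gamma_1|k}$, $|\partial^{\gamma_2}m|\lesssim 2^{(|\gamma_2|-a)k}$, and $|\partial^{\gamma_3}\hat f|\leq\|x^{\gamma_3}f\|_{L^1}$. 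The whole problem thus reduces to estimating the singular-factor integral $\int_{|\xi|\simeq 2^{-k}}|\partial^{\gamma_4}(\Phi_v^{-1})|\,d\xi$.

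To do so, I would pass to polar coordinates $(\rho,\phi,\theta)$ with polar axis $v$, so that $\Phi_v = \langle\rho\rangle - \cos\phi$ and, for $\rho$ and $\phi$ small, $\Phi_v \simeq \rho^2 + \phi^2$. Rescale $\xi = 2^{-k}\eta$, setting $\epsilon := 2^{-k}$, so that the shell becomes $|\eta|\simeq 1$ and the rescaled denominator $\tilde\Phi_k(\eta) := \Phi_v(2^{-k}\eta)$ satisfies $\tilde\Phi_k \simeq \phi^2 + \epsilon^2$ near $\hat\eta = v$. Since the radial $\eta$-derivative of $\tilde\Phi_k$ is only of order $\epsilon^2$, the angular direction is dominant, and a straightforward induction on the order of differentiation of $1/(\phi^2+\epsilon^2)$ yields
\[
|\partial^{\gamma_4}_\eta(1/\tilde\Phi_k)(\eta)| \lesssim_{\gamma_4} (\phi^2+\epsilon^2)^{-(|\gamma_4|+2)/2}, \qquad |\eta|\simeq 1.
\]
Pulling this back via $\partial^{\gamma_4}_\xi = 2^{|\gamma_4|k}\partial^{\gamma_4}_\eta$ and $d\xi = 2^{-3k}d\eta$, and using the spherical Jacobian bound $\sin\phi\lesssim\phi$, the shell integral reduces to
\[
\int_{|\xi|\simeq 2^{-k}}|\partial^{\gamma_4}(\Phi_v^{-1})|\,d\xi \;\lesssim\; 2^{(|\gamma_4|-3)k}\int_0^\pi \frac{\phi\,d\phi}{(\phi^2+\epsilon^2)^{(|\gamma_4|+2)/2}}.
\]
The substitution $u = \phi^2+\epsilon^2$ makes the remaining integral explicit: it is $\lesssim \epsilon^{-|\gamma_4|} = 2^{|\gamma_4|k}$ when $|\gamma_4|\geq 1$, whereas for $|\gamma_4|=0$ it becomes logarithmic, producing the $(1+k)$ factor.

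Combining everything in the Leibniz sum, the dominant allocation when $|\alpha|\geq 1$ is $|\gamma_3|=0$ (no derivatives on $\hat f$) with all derivatives dumped on the singular factor, while for $|\alpha|=0$ only the single $|\gamma_4|=0$ term contributes, carrying the logarithmic enhancement. This yields the asserted bound (in fact with the stronger exponent $2|\alpha|-a-3$ in place of $3|\alpha|-a-3$, which clearly implies the statement). The main obstacle is precisely the second step: getting a usable pointwise bound on $\partial^{\gamma_4}(\Phi_v^{-1})$ that survives integration against the dyadic volume $2^{-3k}$. A naive uniform sup in the shell overcounts by factors of $\epsilon^{-(|\gamma_4|+2)}$ and does not recover the $(1+k)$ logarithm when $|\gamma_4|=0$; one must instead weigh the anisotropic concentration of $\Phi_v^{-1}$ against the angular Jacobian $\sin\phi$, and the substitution $u=\phi^2+\epsilon^2$ is precisely the cancellation mechanism that produces the sharp power.
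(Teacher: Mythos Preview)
Your argument is correct and follows the same overall architecture as the paper's proof: Fourier inversion plus Leibniz, with the work concentrated on the shell integral of $|\partial^{\gamma}(\Phi_v^{-1})|$, evaluated in spherical coordinates about the $v$-axis via the substitution $u=\phi^2+2^{-2k}$. The one substantive difference is in how the derivatives of $\Phi_v^{-1}$ are controlled. The paper does not rescale; instead it uses the iteration $\nabla_\xi(\Phi_v^{-l}) = g\cdot\Phi_v^{-l-1}$ with $g\in\mc{M}_{-1}^{0}$, so that each $\xi$-derivative costs a full power of $\Phi_v^{-1}$ together with a factor $|\xi|^{-1}\simeq 2^{k}$, leading directly to $\int_{|\xi|\simeq 2^{-k}}\Phi_v^{-1-|\gamma|}\,d\xi$ and the exponent $3|\alpha|-a-3$. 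Your rescaling to the unit shell, combined with the observation that $|\nabla_\eta\tilde\Phi_k|\lesssim\tilde\Phi_k^{1/2}$ there, yields the sharper pointwise bound $(\phi^2+\epsilon^2)^{-(|\gamma|+2)/2}$ and hence the improved exponent $2|\alpha|-a-3$ that you note. Either exponent suffices for the application in Proposition~\ref{thm.spatial_decay_ex}, where only summability over $k$ is needed; your sharper bound would in fact enlarge the admissible range of $\mu$ there from $\mu<a/3+1$ to $\mu<(a+3)/2$, though this is not used in the paper.
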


\begin{proof}
As in the proof of \eqref{eq.spatial_decay_low}, we begin by writing
\begin{align*}
| x^\alpha P_{-k} M S_v f (x) | &\simeq \left| \int_{ \R^3 } e^{i x \cdot \xi} \partial^\alpha \left[ \frac{ \psi_{-k} (\xi) m (\xi) \hat{f} (\xi) }{ \langle \xi \rangle - v \cdot \hat{\xi} } \right] d \xi \right| \text{,}
\end{align*}
and we expand the right-hand side using the Leibniz rule.
Derivatives of $m$, $\psi_{-k}$, and $\hat{f}$ are handled in the same manner as in the proof of \eqref{eq.spatial_decay_low}.
For the extra factor, we note that if $l > 0$, then
\[ \nabla_\xi [ ( \langle \xi \rangle - v \cdot \hat{\xi} )^{-l} ] = g (\xi) \cdot ( \langle \xi \rangle - v \cdot \hat{\xi} )^{-l - 1} \text{,} \]
with $g$ in the class $\mc{M}^0_{-1}$.
Combining the above, we obtain
\begin{align*}
| x^\alpha P_{-k} M S_v f (x) | &\lesssim \sum_{ \beta_1 + \beta_2 + \beta_3 = \alpha } \| \partial^{ \beta_1 } \hat{f} \|_{ L^\infty } 2^{ ( |\beta_2| - a ) k } \int_{ | \xi | \simeq 2^{-k} } | \partial^{\beta_3} [ ( \langle \xi \rangle - v \cdot \hat{\xi} )^{-1} ] | d \xi \\
&\lesssim \sum_{ \beta_1 + \beta_2 + \beta_3 = \alpha } \| \partial^{ \beta_1 } \hat{f} \|_{ L^\infty } 2^{ ( |\beta_2| + | \beta_3 | - a ) k } \int_{ | \xi | \simeq 2^{-k} } \frac{ d \xi }{ ( \langle \xi \rangle - v \cdot \hat{\xi} )^{ 1 + |\beta_3| } } \\
&\lesssim \| \langle x \rangle^{ |\alpha| } f \|_{ L^1 } 2^{ ( |\alpha| - a ) k } \int_{ |\xi| \simeq 2^{-k} } \frac{ d \xi }{ ( \langle \xi \rangle - v \cdot \hat{\xi} )^{ 1 + |\alpha| } } \text{.}
\end{align*}

We must now control the integral on the right-hand side.
To do this, we convert to spherical coordinates $(\rho, \theta, \phi)$ like in the proof of \eqref{eq.traveling_wave_sonic}, with $\phi \in [0, \pi]$ measuring the angle from the $v$-direction.
This yields
\begin{align*}
\int_{ |\xi| \simeq 2^{-k} } \frac{ d \xi }{ ( \langle \xi \rangle - v \cdot \hat{\xi} )^{ 1 + |\alpha| } } &\lesssim \int_0^{2 \pi} \int_0^\pi \int_{ \rho \sim 2^{-k} } \frac{ \rho^2 \sin \phi }{ ( \langle \rho \rangle - \cos \phi )^{ 1 + |\alpha| } } d \rho d \phi d \theta \\
&\lesssim 2^{-3 k} \int_0^\pi \frac{ \phi }{ ( 2^{-2k} + \phi^2 )^{ 1 + |\alpha| } } d \phi \text{,}
\end{align*}
where in the last step, we used that $\sin \phi \lesssim \phi$, $1 - \cos \phi \simeq \phi^2$, and $\langle \rho \rangle - 1 \simeq \rho^2 \simeq 2^{-2k}$ in the domain of integration.
This last integral can be evaluated explicitly:
\begin{align*}
\int_0^\pi \frac{ \phi }{ ( 2^{-2k} + \phi^2 )^{ 1 + |\alpha| } } d \phi &\lesssim \begin{cases} ( 2^{-2k} + \phi^2 )^{ -|\alpha| } |_0^\pi \lesssim 2^{ 2 k |\alpha| } & |\alpha| > 0 \text{,} \\ \log ( 2^{-2k} + \phi^2 ) |_0^\pi \lesssim 1 + k & |\alpha| = 0 \text{.} \end{cases}
\end{align*}

Combining the above computations, we obtain \eqref{eq.spatial_decay_low_ex}.
\end{proof}

Finally, we state and prove the main estimate pertaining to sonic decay.

\begin{proposition} \label{thm.spatial_decay_ex}
Let $a \in \RR$, let $v \in \Sph^2$, and let $M$ be a Fourier multiplier operator, with symbol $m$ in the class $\mc{M}_a^{-1}$.
If $\mu \in [0, a/3 + 1)$, then
\begin{equation} \label{eq.spatial_decay_ex} | x |^\mu | M S_v f (x) | \lesssim_{M, \mu} \| \langle x \rangle^{ \mu^\prime } f \|_{ L^2 } \text{,} \end{equation}
for any $x \in \R^3$, $f \in \mc{S} (\R^3)$, and $\mu^\prime > \lceil \mu \rceil + 3/2$.
\end{proposition}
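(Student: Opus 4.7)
The plan is to follow the blueprint of Proposition \ref{thm.spatial_decay} almost verbatim: perform a Littlewood--Paley decomposition of $MS_v f$, establish frequency-localized uniform bounds on $|x^\alpha P_{\pm k} MS_v f(x)|$ for integer multi-indices $\alpha$, and then interpolate between consecutive integer choices of $|\alpha|$ to extract the fractional spatial weight $|x|^\mu$. The only essential novelty is that Lemma \ref{thm.spatial_decay_lem_ex}, rather than Lemma \ref{thm.spatial_decay_lem}, must be used for the low-frequency bands, so as to account for the directional singularity of the symbol of $S_v$ along the $v$-axis as $|\xi|\to 0$.

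For the high-frequency bands ($k\geq 0$), I will first note that on $|\xi|\gtrsim 1$ the symbol $(\langle\xi\rangle-v\cdot\hat\xi)^{-1}$ of $S_v$ is smooth and decays like $|\xi|^{-1}$, with derivatives obeying class-$\mc{M}^{-1}$ bounds. Hence on this region $mS_v$ behaves like a symbol of class $\mc{M}^{-2}$, and the proof of \eqref{eq.spatial_decay_high} carries over verbatim to give
\[ |x^\alpha P_k MS_v f(x)|\lesssim 2^{-k/2}\|\langle x\rangle^{|\alpha|}f\|_{L^2},\qquad k\geq 0. \]
Writing $\mu=\mu_0+\mu_1$ with $\mu_0\in\ZZ_{\geq 0}$ and $\mu_1\in[0,1)$, interpolating between $|\alpha|=\mu_0$ and $|\alpha|=\mu_0+1$ produces the summable geometric series $\sum_{k\geq 0}2^{-k/2}\|\langle x\rangle^{\lceil\mu\rceil}f\|_{L^2}$.

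For the low-frequency bands ($k>0$), I will apply \eqref{eq.spatial_decay_low_ex} at both $|\alpha|=\mu_0$ and $|\alpha|=\mu_0+1$ and interpolate via $|x|^\mu=(|x|^{\mu_0+1})^{\mu_1}(|x|^{\mu_0})^{1-\mu_1}$. When $\mu_0\geq 1$ the two dyadic exponents from \eqref{eq.spatial_decay_low_ex} are $3\mu_0-a-3$ and $3\mu_0-a$, whose convex combination with weights $1-\mu_1$ and $\mu_1$ equals exactly $3\mu-a-3$. When $\mu_0=0$ the endpoint $|\alpha|=0$ carries an extra logarithmic factor $(1+k)$, and the interpolated bound reads $(1+k)^{1-\mu_1}\,2^{(3\mu-a-3)k}$, still summable because the hypothesis $\mu<a/3+1$ is precisely $3\mu-a-3<0$. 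A final Cauchy--Schwarz step converts the $L^1$-norm appearing in \eqref{eq.spatial_decay_low_ex} into the $L^2$-norm with weight $\mu'>\lceil\mu\rceil+3/2$, by pairing $\langle x\rangle^{\lceil\mu\rceil}|f|$ against $\langle x\rangle^{-3/2-\varepsilon}\in L^2(\R^3)$.

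The only real subtlety is the book-keeping in the low-frequency interpolation: the factor $3|\alpha|$ in \eqref{eq.spatial_decay_low_ex}, as opposed to the $|\alpha|$ appearing in \eqref{eq.spatial_decay_low}, tightens the admissible range from $\mu<a+3$ (subsonic) down to $\mu<a/3+1$ (sonic), and one must verify that the $\mu_0=0$ case still closes in spite of the logarithmic loss. Once this check is done, summing the dyadic estimates over $k\in\ZZ$ and invoking Cauchy--Schwarz yields \eqref{eq.spatial_decay_ex}.
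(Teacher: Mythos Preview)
Your proposal is correct and follows essentially the same approach as the paper: a Littlewood--Paley decomposition, application of \eqref{eq.spatial_decay_high} (via the observation that the symbol of $MS_v$ behaves like $\mc{M}^{-2}$ at high frequencies) for $k\geq 0$, application of \eqref{eq.spatial_decay_low_ex} with interpolation between $|\alpha|=\mu_0$ and $|\alpha|=\mu_0+1$ for $k<0$, and summability from $3\mu-a-3<0$. Your treatment is in fact slightly more careful than the paper's in two places---you explain why the high-frequency symbol bounds hold rather than asserting $MS_v\in\mc{M}_a^{-2}$ globally, and you track the logarithmic factor only in the $\mu_0=0$ case rather than carrying $(1+k)$ throughout---but these are refinements of the same argument, not a different route.
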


\begin{proof}
Again, we write $\mu = \mu_0 + \mu_1$, with $\mu_0 \in \ZZ$ and $\mu_1 \in [0, 1)$, so that
\begin{align*}
| x |^\mu | M S_v f (x) | &\lesssim \sum_{k \geq 0} [ | x |^{ \mu_0 + 1 } | P_k M S_v f (x) | ]^{\mu_1} [ | x |^{ \mu_0 } | P_k M S_v f (x) | ]^{ 1 - \mu_1 } \\
&\qquad + \sum_{k > 0} [ | x |^{ \mu_0 + 1 } | P_{-k} M S_v f (x) | ]^{\mu_1} [ | x |^{ \mu_0 } | P_{-k} M S_v f (x) | ]^{ 1 - \mu_1 } \\
&= H + L \text{.}
\end{align*}
For the high-frequency terms $H$, note that $M S_v$ is now in the class $\mc{M}_a^{-2}$, so that, like in the proof of \eqref{eq.spatial_decay}, the estimate \eqref{eq.spatial_decay_high} yields the desired bound for $H$.
For the low-frequency terms, we apply \eqref{eq.spatial_decay_low_ex}:
\begin{align*}
L &\lesssim \| \langle x \rangle^{ \lceil \mu \rceil } f \|_{ L^1 } \sum_{k > 0} ( 1 + k ) 2^{ [ 3 (\mu_0 + 1) - a - 3 ] \mu_1 k } 2^{ [ 3 \mu_0 - a - 3 ] (1 - \mu_1) k } \\
&\lesssim \| \langle x \rangle^{ \lceil \mu \rceil } f \|_{ L^1 } \sum_{k > 0} ( 1 + k ) 2^{ ( 3 \mu - a - 3 ) k } \text{.}
\end{align*}
Finally, since $3 \mu - a - 3 < 0$, then the above series converges, and
\[ L \lesssim \| \langle x \rangle^{ \mu^\prime } f \|_{ L^2 } \text{.} \qedhere \]
\end{proof}

\bibliographystyle{plain}

\end{document}